\let\@authorsaddresses\@empty
\newcommand{\dra}[1]{\llbracket #1 |}
\DeclareMathOperator{\CSP}{CSP}
\DeclareMathOperator{\PCSP}{PCSP}
\DeclareMathOperator{\Pol}{Pol}
\DeclareMathOperator{\proj}{proj}
\knowledgenewrobustcmd{\NAE}[1]{\cmdkl{\text{NAE}_{#1}}}
\newrobustcmd{\kinl}[2]{#1\text{in}#2}
\knowledgenewrobustcmd{\PCondition}{\cmdkl{\mathrm{LOW}_{\bra{\cdot}}}}
\newcommand{\minion}[1]{\mathcal{#1}}
\knowledgenewrobustcmd{\WP}{\cmdkl{\mathrm{WP}}}
\knowledgenewrobustcmd{\AN}{\cmdkl{\mathrm{AN}}}
\knowledgenewrobustcmd{\ST}{\cmdkl{\mathrm{ST}}}
\knowledgenewrobustcmd{\AT}{\cmdkl{\mathrm{AT}}}
\knowledgenewrobustcmd{\MAX}{\cmdkl{\mathrm{MAX}}}
\knowledgenewrobustcmd{\MIN}{\cmdkl{\mathrm{MIN}}}
\knowledgenewrobustcmd{\THR}[1]{\cmdkl{\mathrm{THR}_{#1}}}
\knowledgenewrobustcmd{\XOR}{\cmdkl{\mathrm{XOR}}}
\knowledgenewrobustcmd{\TheMinion}{\cmdkl{\bra{\mathrm{ARITH}}}}
\knowledgenewrobustcmd{\TheAMinion}{\cmdkl{\dra{\mathrm{ARITH}}}}
\newcommand{\tuple}[1]{\overline{#1}}
\newcommand{\relstr}[1]{{\mathbb{#1}}}
\newcommand{\funset}[1]{\mathcal{#1}}
\newcommand{\funforleq}[2]{\bra{#1}\phantom{}^{#2}_{\le}}
\newcommand{\funfor}[2]{\bra{#1}^{#2}}
\newcommand{\Boolean}{\{0,1\}}
\newrobustcmd\gen[2]{\withkl{\kl[\gen{#1}]}{\cmdkl{\mathrm{#1}^{(#2)}}}}
\newrobustcmd{\genthr}[2]{\mathrm{thr}_{#1}^{(#2)}}
\knowledgenewrobustcmd{\agenthr}[2]{\cmdkl{\mathrm{thr}_{#1}^{#2}}}
\knowledgenewrobustcmd{\agenat}[1]{\cmdkl{\mathrm{at}_{#1}}}
\newcommand{\agenst}[1]{\tuple{\mathrm{st}}^{#1}}
\newrobustcmd\sless{\prec}
\newrobustcmd\smore{\succ}
\newrobustcmd\less{\preceq}
\newrobustcmd\more{\succeq}
\newrobustcmd\equi{\sim}
\newcommand\+{\mkern8mu}
\theoremstyle{plain}
\newtheorem{theorem}{Theorem}[section]
\newtheorem{lemma}[theorem]{Lemma}
\newtheorem{corollary}[theorem]{Corollary}
\newtheorem{proposition}[theorem]{Proposition}
\theoremstyle{remark}
\newtheorem{claim}[theorem]{Claim}
\newtheorem{observation}[theorem]{Observation}
\crefname{observation}{Observation}{Observations}
\theoremstyle{definition}
\newtheorem*{definition}{Definition}
\definecolor{Empty}{gray}{0.85}
\definecolor{Yellow}{rgb}{0.8,0.6,0.2}
\definecolor{Blue}{rgb}{0.2,0.6,0.8}
\definecolor{Neg}{rgb}{0.85,0.2,0.2}
\definecolor{Pos}{rgb}{0.2,0.8,0.2}
\definecolor{Emph}{rgb}{0.6,0.2,0.4}
\newcolumntype{E}{>{\columncolor{Empty}}c}
\newcolumntype{Y}{>{\columncolor{Yellow!80}}c}
\newcolumntype{B}{>{\columncolor{Blue!80}}c}
\newcolumntype{N}{>{\columncolor{Neg!80}}c}
\newcolumntype{P}{>{\columncolor{Pos!80}}c}
\newcommand{\emc}[1]{\cellcolor{Emph!50}\textcolor{black}{#1}}
\newcommand{\poc}[1]{\cellcolor{Pos!50}\textcolor{black}{#1}}
\newcommand{\nec}[1]{\cellcolor{Neg!50}\textcolor{black}{#1}}
\newcommand{\relation}[2]{      
\begin{tabular}{#1}
    #2
\end{tabular}
}
\let\emptyset\varnothing
\newcommand{\size}[1]{\left|#1\right|}
\newcommand{\twt}[2]{\dra{#1}^{#2}}
\knowledgenewrobustcmd{\ACondition}{\cmdkl{\mathrm{LOW}_{\dra{\cdot}}}}
\knowledge{\ACondition}[separating condition]{notion}
\newcommand{\closure}[1]{\overline{#1}}
\knowledgenewrobustcmd{\AMMAX}{\cmdkl{\dra{\mathrm{WP}}}}
\knowledgenewrobustcmd{\AMMIN}{\cmdkl{\dra{\mathrm{MIN}}}}
\knowledgenewrobustcmd{\AMTHR}[1]{\cmdkl{\dra{\mathrm{THR}}^{#1}}}
\knowledgenewrobustcmd{\AMAT}{\cmdkl{\dra{\mathrm{AT}}}}
\knowledgenewrobustcmd{\MAMAT}{\cmdkl{\dra{-\mathrm{AT}}}}
\DeclareMathOperator{\distOp}{dist}
\knowledgenewrobustcmd{\dist}[2]{\cmdkl{\distOp_{\infty}(#1,#2)}}
\newcommand{\seq}[1]{\tuple a(#1)}
\newcommand{\seqel}[1]{a(#1)}
\newcommand{\tr}[1]{t_{#1}}
\newtheorem*{condition}{Condition}
\DeclareMathOperator{\domOP}{dom}
\knowledgenewrobustcmd{\domainup}[1]{\cmdkl{\domOP{\upharpoonright}(#1)}}
\knowledgenewrobustcmd{\domaindown}[1]{\cmdkl{\domOP{\downharpoonright}(#1)}}
\knowledgenewrobustcmd{\Cminion}{\cmdkl{c-minion}}
\knowledgenewrobustcmd{\Cminions}{\cmdkl{c-minions}}
\knowledge{\Cminion}[\Cminions]{notion}
\knowledgenewrobustcmd{\STl}{\cmdkl{\relstr{ST}_L}}
\knowledgenewrobustcmd{\STr}{\cmdkl{\relstr{ST}_R}}
 \providecommand\BibTeX{{%
   \normalfont B\kern-0.5em{\scshape i\kern-0.25em b}\kern-0.8em\TeX}}}
\begin{document}

\title{%
Injective hardness condition for PCSPs%
}
\titlenote{This  research  was  funded  in  whole or  in  part  by   the National Science Centre, Poland under the  Weave-UNISONO  call  in  the  Weave  programme 2021/03/Y/ST6/00171.  For  the  purpose  of  Open  Access,  the  author  has applied a CC-BY public copyright licence to any Author Accepted Manuscript (AAM) version arising from this submission.}

\author{Demian Banakh}
\email{demian.banakh@doctoral.uj.edu.pl}
\orcid{0009-0008-7159-3735}
\affiliation{%
  \institution{Faculty of Mathematics and Computer Science, Jagiellonian University}
  \streetaddress{Łojasiewicza 6}
  \city{Kraków}
  \country{Poland}
  \postcode{30-348}
}
\affiliation{%
  \institution{Doctoral School of Exact and Natural Sciences, Jagiellonian University}
  \streetaddress{Łojasiewicza 11}
  \city{Kraków}
  \country{Poland}
  \postcode{30-348}
}

\author{Marcin Kozik}
\email{marcin.kozik@uj.edu.pl}
\orcid{0000-0002-1839-4824}
\affiliation{%
  \institution{%
  Faculty of Mathematics and Computer Science, Jagiellonian University}
  \streetaddress{Łojasiewicza 6}
  \city{Kraków}
  \country{Poland}
  \postcode{30-348}
}


\begin{abstract}
We present a template for the Promise Constraint Satisfaction Problem (PCSP) which is NP-hard but does not satisfy the current state-of-the-art hardness condition [ACMTCT'21]. 
We introduce a new ``injective'' condition based on the smooth version of the layered PCP Theorem and use this new condition to confirm that the problem is indeed NP-hard.
    
In the second part of the article, we establish a dichotomy for Boolean PCSPs defined by templates with polymorphisms in the set of linear threshold functions. 
The reasoning relies on the new injective condition.
    
%
\end{abstract}
\begin{CCSXML}
<ccs2012>
   <concept>
       <concept_id>10003752.10003753</concept_id>
       <concept_desc>Theory of computation~Models of computation</concept_desc>
       <concept_significance>500</concept_significance>
       </concept>
   <concept>
       <concept_id>10003752.10003790.10003795</concept_id>
       <concept_desc>Theory of computation~Constraint and logic programming</concept_desc>
       <concept_significance>500</concept_significance>
       </concept>
   <concept>
       <concept_id>10003752.10003777.10003779</concept_id>
       <concept_desc>Theory of computation~Problems, reductions and completeness</concept_desc>
       <concept_significance>500</concept_significance>
       </concept>
 </ccs2012>
\end{CCSXML}

\ccsdesc[500]{Theory of computation~Models of computation}
\ccsdesc[500]{Theory of computation~Constraint and logic programming}
\ccsdesc[500]{Theory of computation~Problems, reductions and completeness}

\keywords{promise constraint satisfaction problem, constraint satisfaction problem, algebraic approach, dichotomy}



\maketitle

\section{Introduction}
Many computational questions can be cast as constraint satisfaction problems (CSPs).
In the most general form, an instance of a constraint satisfaction problem consists of variables, values, and constraints restricting admissible evaluations.
The objective is to assign values to variables and satisfy all the constraints.
There are many ways of encoding constraints, 
but the simplest and most common is to list all the allowed tuples of values: 
for example, we can require that among variables $x,y$, and $z$, at least one is True;
in such a case, the constraining relation contains seven tuples: $001,010,100,011,101,110,111$.
Presenting constraints in different ways can alter the problem.
The simplest way to avoid such complications is to restrict allowed constraining relations to a finite set.
This simple definition immediately leads to a problem that had eluded scientists for 20 years: the CSP Dichotomy Conjecture.

The CSP Dichotomy Conjecture, 
proposed in 1998 by Feder and Vardi~\cite{federvardi}, 
stated that, for every relational structure $\relstr B$, 
the CSP template $\relstr B$~%
(i.e., solving CSP instances with relations from $\relstr B$) 
is NP-complete or solvable in a polynomial time.
At that time, two results supported it: 
one theorem due to Schaefer~\cite{schaefer1978} proving the dichotomy for $\relstr B$'s on a Boolean domain, 
and one due to Hell and Ne\v set\v ril~\cite{hellnesetril} for undirected graphs~%
(i.e., $\relstr B$'s with a single, binary, symmetric relation).

The progress on the conjecture was slow until a series~\cite{jeavons1997, jeavons1998, BJK05} of papers established an algebraic approach to CSP.
The idea was to deviate from studying the complexity by combinatorial means 
and to focus on the set of polymorphisms~%
(a multi-variable version of endomorphisms) of template $\relstr B$ instead.
The first series of papers culminated in an article~\cite{BJK05} proposing a universal "hardness condition" for CSP: 
If polymorphisms map homomorphically to projections, then $\CSP(\relstr B)$ is NP-complete 
and otherwise is solvable in a polynomial time.
The authors of the series proved "the easy half" of the conjecture: 
homomorphism to projections implied hardness.
The path forward was clear: 
one needed to take the hardness condition and use its negation to prove tractability.
Nevertheless, it took twenty years of intense development until Bulatov~\cite{bulatov2017dichotomy} and Zhuk~\cite{zhuk2017proof},
in a celebrated work, confirmed "the other half" of the CSP conjecture.

The CSP Dichotomy Theorem of Bulatov and Zhuk tells us that a "slice" of class NP~%
(i.e., the finite domain CSPs) does not contain problems of intermediate complexity and thus is, in fact, somewhat simple.
More than that: it shows that, in this slice, tractability is explained and characterized by elegant algebraic properties~\cite{cyclic, Siggers}.
Is it possible to extend the class and maintain the simplicity? 
Is there a uniform reason for tractability beyond CSPs?
Can we understand a bit larger slice of NP? 
Two main lines of research attempt to accomplish just that.
The first one investigates CSPs defined by carefully selected infinite templates and is beyond the scope of this paper.
The second one attempts to bridge the gap between decision and approximation problems: 
it studies problems known as Promise CSPs (PCSPs).

The PCSPs were introduced in~\cite{austrin20172+sat}.
They significantly extend the class of CSPs 
and contain new and notoriously difficult open problems such as approximate coloring~\cite{garey1976approxcol}.
An example of an approximate coloring problem is: 
find a $6$-coloring of a given $3$-colorable graph~%
(the $3$-coloring is "the promise" and is not a part of the input).
Note that the $6$ and $3$ are parameters; 
every pair of natural numbers defines an approximate coloring problem.

More generally, a Promise CSP depends on a pair of structures: 
we have a target structure $\relstr B$ and a promise structure $\relstr A$.
The computational task changes from "given instance $\relstr I$ find a solution in $\relstr B$" in CSPs 
to "given an instance $\relstr I$ solvable in $\relstr A$ find a solution in $\relstr B$" in PCSPs.
Despite a flurry of publications~\cite{austrin20172+sat,ficak2019symmetric,brakensiek2021conditional, CLAP, Bible, KO}, analogs of the fundamental results that had supported the CSP dichotomy conjecture remain unverified:
\begin{enumerate}
  \item The dichotomy for Boolean PCSPs is an open problem, 
    i.e., we do not have an analog of Shaeffer's result~\cite{schaefer1978}.

  \item The dichotomy for graphs is unknown. 
    The approximate graph coloring problems belong to this class of problems; 
    the computational complexity of approximate graph coloring problems is known only 
    for special classes~\cite{brakensiek2016approxcol, KO},
    or under additional computational assumptions~\cite{dinur2006approxcol,dToOne}.
    The problem in the full generality, 
    i.e., an analog of the result of Hell and Ne\v set\v ril~\cite{hellnesetril}, 
    remains out of reach.

  \item A uniform "hardness condition" for PCSPs remains elusive.%
  \footnote{A recent paper~\cite{newdalmauoprsal} proposes a new condition based on local consistency. We do not know yet how practical or general their proposed condition is.}
    Some criteria were proved sufficient~\cite{austrin20172+sat,Bible, brandts2021llc} and were subsequently applied  
    to show the hardness of PCSPs in many special cases.
    However, we do not know a hardness condition 
    that could confidently define the border of tractability for PCSPs.
\end{enumerate}
In this paper, we contribute to items 1 and 3.
We start discussing our contribution by describing item 3.

\subsection{Hardness conditions}
By the results of Bulatov and Zhuk, a CSP template  $\relstr B$ defines a computational problem that is NP-complete 
if, and only if, (assuming P is different from NP for a second) 
the set of polymorphisms of $\relstr B$ has a minion homomorphism to projections%
\footnote{A minion homomorphism to projections will not be defined here, we will use an equivalent formulation.
}. To state it more precisely:

\AP
A {\em polymorphism}%
\footnotemark 
of $\relstr B$ is a multi-variable homomorphism from $\relstr B$ to $\relstr B$ or, equivalently, a homomorphism from $\relstr B^n$ to $\relstr B$. 
A \intro{choice function}, usually denoted by $I$, is defined on a set of multi-variable functions 
and assigns, to each $f$ in the set, 
a non-empty set $I(f)$ of coordinates of $f$; 
e.g., if  $f:\{0,1\}^7\rightarrow\{0,1\}$ then $I(f)$ can be $\{3,5,7\}$, or $\{1,2,3,4,5,6\}$ etc. 

Let $f$ be a multi-variable function; by identifying, permuting, and introducing dummy variables, 
$f$ can define a new function $g$; 
e.g., $g(x,y,z) = f(x,y,x,y,x,x,x)$. 
The new function $g$ is a {\em minor} of $f$, 
and the description of the process is the minor map $\pi$~
(write $f\xrightarrow{\pi}g$ for ``$g$ is minor of $f$ using map $\pi$''). 
In the running example the domain of $\pi$ is $\{1,\dotsc,7\}$, the codomain is $\{1,2,3\}$, 
and $\pi(1)=\pi(3)=\pi(5)=\pi(6)=\pi(7)=1$, 
$\pi(2)=\pi(4)=2$ so that $g(x_1,x_2,x_3)=f(x_{\pi(1)},\dotsc,x_{\pi(7)})$.

We proceed to define a condition equivalent to having a minion homomorphism to projections. 
The following condition can hold or fail in the set of polymorphisms of $\relstr B$ denoted by $\Pol(\relstr B)$:
\AP\begin{condition}[\intro{Single choice}]
There exists a \kl{choice function} $I$ such that: 
\begin{itemize}
  \item $|I(f)| = 1$ for every $f$ and 
  \item if $f\xrightarrow{\pi}g$ then $\pi(I(f)) = I(g)$. 
\end{itemize}
\end{condition}
\noindent In other words, the \kl{choice function} selects a single coordinate of a polymorphism (the first item) and is compatible with taking minors (the second item): 
Given $f$ and a minor map $\pi$,
selecting a coordinate of $f$ according to $I$ and applying $\pi$, 
or taking the chosen coordinate of the minor produces the same outcome. 
Note that the condition holds in the set of projections of all arities: the \kl{choice function} $I$ selects the dictatorship coordinate.

The CSP dichotomy theorem of Bulatov~ \cite{bulatov2017dichotomy} and Zhuk~\cite{zhuk2017proof} states that tractable CSPs can be characterized by the \kl{single choice condition}:
\begin{theorem}[The CSP dichotomy theorem]
Let $\relstr B$ be a finite relational structure then:
  \begin{itemize}
      \item \kl{single choice} holds in $\Pol(\relstr B)$ and $\CSP(\relstr B)$ is NP-complete, or
      \item \kl{single choice} fails in $\Pol(\relstr B)$ and $\CSP(\relstr B)$ is tractable.
  \end{itemize}
\end{theorem}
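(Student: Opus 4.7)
The plan is to address the two implications separately. For the direction \kl{single choice} implies NP-completeness, I would use the standard algebraic argument: the condition as stated is equivalent to the existence of a minion homomorphism from $\Pol(\relstr{B})$ to the projection minion (send each $f$ to the projection onto its chosen coordinate $I(f)$; the second bullet of the condition is precisely compatibility with minors). One then invokes the ``easy'' half of the conjecture, due to Bulatov, Jeavons, and Krokhin, which shows that whenever such a homomorphism exists any NP-hard CSP template (for example 3-SAT or positive 1-in-3-SAT) is pp-interpretable in $\relstr{B}$, yielding an NP-hardness reduction to $\CSP(\relstr{B})$.

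For the tractability direction -- the true content of the theorem -- I would follow Zhuk's approach. Failure of \kl{single choice} in $\Pol(\relstr{B})$ implies, after passing to the core and the idempotent reduct, the existence of rich polymorphisms: a weak near-unanimity term and a cyclic polymorphism of every sufficiently large prime arity. Zhuk's polynomial-time algorithm for $\CSP(\relstr{B})$ then proceeds by induction on instance size. Each iteration (i) enforces a suitable level of local consistency, such as $(1,k)$-minimality, and extracts any underlying affine structure by Gaussian elimination; (ii) if the instance is still non-trivial, locates a proper ``strong subuniverse'' in the domain of some variable -- of one of three specific types: a binary absorbing subuniverse, a center, or a projective-linear subuniverse coming from a module factor; and (iii) restricts that variable to the subuniverse and recurses, relying on the fact that each of the three subuniverse types is preserved under pp-definitions.

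The main obstacle is step (ii): proving that any non-trivial, fully propagated instance always admits at least one of the three strong subuniverses in some variable's domain. This rests on a detailed structure theory of finite idempotent algebras generated by a WNU term -- in particular Zhuk's local trichotomy into affine, semilattice, and majority-like quotients -- and on a delicate analysis of how the three subuniverse types interact with pp-definitions and with each other. Establishing that the three reduction rules together suffice, i.e.\ that no instance can simultaneously evade all of them without already being trivial, is the deep combinatorial heart of the 2017 proof and occupies the bulk of Zhuk's paper. An orthogonal development centred on absorption theory underlies Bulatov's independent proof of the same dichotomy.
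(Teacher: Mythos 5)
The paper does not prove this statement: it is quoted as the landmark CSP Dichotomy Theorem and attributed to Bulatov \cite{bulatov2017dichotomy} and Zhuk \cite{zhuk2017proof}, with no proof (or even proof sketch) given. The paper introduces the \kl{single choice} condition only to set up a hierarchy of hardness conditions for PCSPs and explicitly notes it is ``a condition equivalent to having a minion homomorphism to projections.'' So there is nothing in the paper to compare your argument against line by line; the relevant comparison is with the cited literature.

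As a high-level account of that literature your sketch is sound. The hardness direction is correct and standard: the \kl{choice function} $I$ with $|I(f)|=1$ and $\pi(I(f))=I(g)$ is exactly a minion homomorphism $\Pol(\relstr B)\to\mathcal{P}$ to the projection minion, and the BJK machinery then gives a pp-construction of a known NP-hard template. For the tractability direction your summary of Zhuk's algorithm (cycle consistency, Gaussian elimination of the linear part, and iterated reduction to a strong subuniverse of one of the special types, with recursion) captures the overall architecture, and you correctly flag that the heart of the matter is showing a non-trivial consistent instance always admits one of these reductions. One small terminological slip: Zhuk's ``strong subuniverses'' are binary absorbing, central, PC (polynomially complete), and linear; your phrase ``projective-linear subuniverse coming from a module factor'' conflates the PC and linear cases, which are handled separately in Zhuk's trichotomy. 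Also, passing to a core and idempotent reduct is required \emph{before} the failure of \kl{single choice} can be translated into the existence of a Taylor/WNU term; it is worth saying this explicitly since without it the equivalence does not hold. None of this constitutes an error given the deliberately high level of the sketch, but since the paper itself contributes no proof here, the proposal should be read as a summary of the cited works rather than a reconstruction of anything in this paper.
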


Given a PCSP problem defined by a pair of structures $(\relstr A, \relstr B)$, polymorphisms%
\footnotemark[\value{footnote}]
\footnotetext{Definition of a polymorphism can be found in~\cref{sec:notation}.}
are multi-variable homomorphisms from $\relstr A$ to $\relstr B$, 
or equivalently, homomorphisms from $\relstr A^n$ to $\relstr B$. 
Note that, if $\relstr A =\relstr B$, then $\PCSP(\relstr B, \relstr B)$ is $\CSP(\relstr B)$, and the definitions coincide. 

Unfortunately, the \kl{single choice condition} does not characterize NP-hard PCSPs.
Even the first results~\cite{austrin20172+sat, brakensiek2017promise} in the area of PCSPs used a weaker condition.
The weaker condition implied the hardness of the corresponding PCSPs, but until recently~\cite{barto2022babypcp}, 
the hardness proof required the PCP theorem~\cite{arora1998pcp,raz1998rep}.
In contrast, hardness implied by the \kl{single choice} relies on 
a direct reduction from the 3-SAT problem. 
The weaker condition was:
\AP\begin{condition}[\intro{Multiple choice}]
There exists a \kl{choice function} $I$ and a number $M$ such that: 
\begin{itemize}
  \item $|I(f)| \leq M$ for every $f$ and 
  \item if $f\xrightarrow{\pi}g$ then $\pi(I(f))\cap I(g)\neq\emptyset$. 
\end{itemize}
\end{condition}
\noindent In simple terms, instead of one coordinate, $I$ chooses up to $M$ coordinates, 
and taking a minor cannot produce an operation with a disjoint selection. 
The \kl{multi-choice condition} gives several new hardness results e.g.,~\cite{austrin20172+sat,
brakensiek2017promise,
ficak2019symmetric,
KO} 
but, unfortunately, is yet again too restrictive.

A few years later, Brands, Wrochna and \v{Z}ivn\'{y}~\cite{brandts2021llc} improved the hardness condition which allowed them to tackle new classes of PCSPs. 
The new condition uses a layered version of PCP to relax the second item: 
taking a minor once can produce an empty intersection, 
but taking $M$ consecutive minors cannot produce pairwise disjoint sets:
\AP\begin{condition}[\intro{Layered choice}]
There exists a \kl{choice function} $I$ and a number $M$ such that: 
\begin{itemize}
  \item $|I(f)| \leq M$ for every $f$ and 
  \item for every chain of minor maps%
  \footnote{We borrow this notion of \emph{chain of minors} from \cite{brandts2021llc}.}%
  : 
    \begin{equation*}
      f_1\xrightarrow{\pi_{1,2}}f_2\xrightarrow{\pi_{2,3}}\dotsb\xrightarrow{\pi_{M-1,M}}f_M
    \end{equation*}
    there exist $i<j$ such that $\pi_{i,j}(I(f_i))\cap I(f_j)\neq \emptyset$ (where $\pi_{i,j} = \pi_{j-1,j} \circ \dots \circ \pi_{i,i+1}$).
\end{itemize}
\end{condition}
\noindent The new condition was used extensively and proved to be very valuable, e.g.,~\cite{brandts2021llc,ThreeEL}. 

Unfortunately, as we show in this paper, it is not necessary. 
Splitting our claim into parts: we identify a particular PCSP template, 
confirm the failure of the \kl{layered condition}, 
and show that the associated computational problem is NP-hard. 
The template is a pair of Boolean relational structures introduced in~\cref{sec:ex-st}.
To fail the \kl{layered condition}, we provide a structural description of the polymorphisms 
and show that no \kl{choice function} satisfies the \kl{layered condition} in this set. 
Finally, we define a new hardness condition 
and use it to prove that the template is, in fact, NP-hard. 
Our condition relies on the hardness of the smooth version of Label Cover \cite{khot2002smooth}, to which we add layers (see \cref{app:smoothLC}):
\AP\begin{condition}[\intro{Injective layered choice}]
There exists a \kl{choice function} $I$ and a number $M$ such that: 
\begin{itemize}
  \item $|I(f)| \leq M$ for every $f$ and 
  \item for every chain of minor maps:
    \begin{equation*}
      f_1\xrightarrow{\pi_{1,2}}f_2\xrightarrow{\pi_{2,3}}\dotsb\xrightarrow{\pi_{M-1,M}}f_M
    \end{equation*}
    if $\pi_{i,i+1}$ is injective on $I(f_i)\cup\bigcup_{j<i}\pi_{j,i}(I(f_j))$ then
    there exist $i<j$ such that $\pi_{i,j}(I(f_i))\cap I(f_j)\neq \emptyset$.
\end{itemize}
\end{condition}
\noindent The improvement that allows us to apply the new condition where the \kl{layered condition} fails is the injectivity: 
The requirement in the second item can (and will) fail on some chains of minor maps; it suffices that it holds on the injective ones. 
This is a new idea: with the exception of a conditional hardness used in~\cite{brakensiek2021conditional}, the PCSP results rely on compatibility with all minor maps.

In summary, we contribute to the search for an ultimate hardness condition. 
Ideally, a negation of such a condition would provide enough structure~%
(hopefully, just like in the case of CSPs, elegant structure)
for an algorithm to work~%
(if one believes in a dichotomy).
Our new condition 
is supported 
by an example showing the shortcomings of the previous one.
This fact is, perhaps,
more surprising (and equally interesting) than the condition itself.
Nevertheless, the \kl{injective layered condition} condition is good enough to explain hardness for a new class of Boolean templates.

\subsection{Boolean templates}
In contrast to Schaefer's result, the dichotomy for Boolean PCSPs remains an open problem. 
The difference in the complexity of these two problems is quite significant:  
On the one hand, the set $\{\Pol(\relstr B) \mid \relstr B\text{ is a Boolean template}\}$ was fully described by Post in 1941~\cite{Post}.
On the other hand,
understanding a similar set defined by PCSP templates is,
at the moment, 
impossible.
Even with tools provided by the analysis of Boolean functions, one of the strongest results in the area remains partial and conditional~\cite{brakensiek2021conditional}.

Our contribution to this line of research is a classification (resulting in a dichotomy) of Boolean \kl(PCSP){templates} with \kl{minions} consisting of linear threshold functions.
The linear threshold functions are very well behaved in the context of the analysis of Boolean functions --- we investigate them via a lens of PCSP. 
Our result implies, among other things, that the \kl{minions} defining ``the boundary of tractability'' for Boolean symmetric \kl(PCSP){templates} remain ``on the boundary of tractability'' among all Boolean \kl(PCSP){templates} (this sentence, again, assumes $P\neq NP$). 
It is the next step in the process of mapping the landscape of Boolean PCSPs, 
with a Boolean PCSP dichotomy as a long-term goal.
The concepts need to be defined:

A \kl{relational structure} defined over the set $\{0,1\}$ is called a Boolean structure. 
A PCSP \reintro(PCSP){template} $(\relstr A,\relstr B)$ is Boolean if both structures are. 
A Boolean function $f$ is a linear threshold function if: 
\begin{equation*}
  f(\tuple x) = \begin{cases}
    0 & \text{ if } \sum_i a_ix_i  \leq t \\
    1 & \text{ otherwise }.
  \end{cases}
\end{equation*}
for some $\tuple a, t$.
Our first application of the \kl{injective layered condition} is easy to state:
\begin{theorem}
	Let $(\relstr A, \relstr B)$ be a Boolean PCSP \kl(PCSP){template}. 
  If $\Pol(\relstr A,  \relstr B)$ consists of some linear threshold functions, then $\PCSP(\relstr A, \relstr B)$ is NP-hard or solvable in polynomial time.
\end{theorem}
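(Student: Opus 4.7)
The plan is to establish the dichotomy by a structural analysis of linear threshold polymorphisms, designing an explicit candidate \kl{choice function} and applying the \kl{injective layered choice condition} on the hardness side while extracting algorithmically useful polymorphisms on the tractability side.

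First I would fix a normal form: each $f \in \Pol(\relstr A, \relstr B)$ is determined by weights $a_i(f)$ and a threshold $t(f)$ with $f(\tuple x) = 1$ iff $\sum_i a_i(f)\, x_i > t(f)$. To each $f$ I would attach the set $H(f)$ of coordinates whose weight magnitude exceeds a prescribed fraction of $\sum_i |a_i(f)|$; this makes $|H(f)|$ bounded by a constant depending only on the fraction. Under a minor $f \xrightarrow{\pi} g$ the weight of $g$ at coordinate $k$ is $\sum_{i \in \pi^{-1}(k)} a_i(f)$, so when $\pi$ is injective on $H(f)$ the weight of each heavy coordinate is perturbed only by the sum of weights of non-heavy coordinates mapping to the same place; choosing the heaviness threshold large enough guarantees the perturbed coordinate is still heavy in $g$. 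Consequently, along an injective chain the images of $H(f_i)$ are forced into $H(f_j)$ for each $j > i$.

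Second I would split into the two cases of the dichotomy. In the hardness case I would set $I(f) := H(f)$ and show that $I$ satisfies the \kl{injective layered choice condition}: in any chain $f_1 \xrightarrow{\pi_{1,2}} \cdots \xrightarrow{\pi_{M-1,M}} f_M$ satisfying the injectivity hypothesis, the propagation argument above gives $\pi_{i,j}(I(f_i)) \subseteq I(f_j)$ for all $i < j$, so the required intersection already appears at $M = 2$. The paper's main hardness result then yields NP-hardness of $\PCSP(\relstr A, \relstr B)$. In the tractable case no bounded choice function can satisfy the condition, and I would argue that this forces the weight profiles in $\Pol(\relstr A, \relstr B)$ to become arbitrarily flat, producing polymorphisms with averaging behaviour — concretely, polymorphisms behaving like an \AT, a \MIN, or a \MAX. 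Each such outcome is covered by a known PCSP algorithm (BLP for $\MIN/\MAX$-closed templates, AIP for alternating-threshold templates, BLP$+$AIP or CLAP for mixtures), which I would invoke to finish the tractable side.

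The main obstacle I expect is the passage from ``no bounded choice function satisfies the \kl{injective layered choice condition}'' to an explicit, usable tractable polymorphism. This requires a compactness-style extraction: from an infinite family of linear threshold polymorphisms with vanishing heaviness ratio one must distil a uniform algebraic identity, or equivalently a \kl{minion} homomorphism into a known tractable minion, compatible with both $\relstr A$ and $\relstr B$. Simultaneously ruling out every bounded $I$ — those based on rank of weight, on an absolute weight threshold, and on sign structure — and then producing a single polymorphism with the right behaviour is the delicate combinatorial core of the argument.
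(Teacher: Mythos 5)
Your high-level plan---extract a heavy-coordinate set from an explicit weight representation, use it as the choice function for the \kl{injective layered choice condition}, and on the tractable side produce averaging polymorphisms to invoke known algorithms---is the right shape of the paper's argument, and identifying $\PCSP$-hardness via a weight-profile dichotomy is the correct dividing line. However, there are two concrete gaps that would sink the proof as written, plus a third point you have already flagged yourself.

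The first gap is the unconditional propagation claim. You assert that if $\pi$ is injective on $H(f)$ then each heavy coordinate of $f$ remains heavy in $g$, giving $\pi_{i,j}(I(f_i)) \subseteq I(f_j)$ and hence that the required intersection appears already at chain length $M = 2$. This does not hold. First, heaviness is a \emph{fraction} of $\sum_i \lvert a_i \rvert$, and merging non-heavy coordinates with mixed signs can collapse the total weight of the minor dramatically, so the denominator changes as well as the perturbation you account for. Second, and more fundamentally, the heavy set $H(g)$ of the minor is not defined relative to the inherited presentation $\tuple b$ at all: a linear threshold function has infinitely many representing weight vectors, and a coordinate that dominates in one presentation may not dominate in another. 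What the paper actually proves (Claim~\ref{claim:propagateweight}) is the much weaker and more delicate statement that, in the \emph{inherited} presentation, the coordinates $\pi(I(f))$ collectively carry more than an $\varepsilon$-fraction of the weight. This is established only under the assumption that $\PCondition$ fails, and the final contradiction is a weight-counting argument over $N$ layers (you cannot pack $N$ disjoint sets each carrying more than $1/N$ of the mass), not a single-step containment. So the layered structure genuinely matters; your proof of the condition at $M=2$ overclaims.

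The second gap is the presentation problem, which you mention but treat as a formality. The entire technical heart of the hardness side (Proposition~\ref{prop:canonical} and its appendix) is dedicated to defining the total preorder $\less$ on coordinates in a presentation-independent way, proving it agrees with the magnitude order in a canonical presentation, and showing (Observation~\ref{obs:untouched}) how $\less$ behaves under minors that leave both coordinates untouched. Without this machinery, ``the set of coordinates whose weight magnitude exceeds a prescribed fraction'' is not a well-defined function of $f$; it is a function of a chosen representative, and different representatives give incomparable answers (see the $\WP$ example in Section~3.1 of the paper). Any correct argument has to commit to and justify a canonical presentation or argue in a way that quantifies over all of them, as the paper does.

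Finally, a smaller structural point: the paper's dichotomy is organized around the explicit condition $\PCondition$ (existence of arbitrarily flat weight profiles), not around the non-existence of a bounded choice function, and only one direction of the equivalence between these is ever needed. Your framing of the tractable case as ``no bounded choice function can satisfy the condition'' would require proving the other direction as well, which is extra work you do not need. You have correctly identified that the compactness-style extraction of an actual averaging polymorphism is where the real work lies on the tractable side; the paper does this via the closure argument for the tuple minion (Theorem~\ref{thm:amainEZ}), which constructs explicit minors converging to $\AMTHR{t}$ or $\AMAT$, and then passes back to Boolean functions via the partial-function presentation $\funfor{\tuple a}{t}$ and \Cref{prop:approx}.
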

\noindent The second corollary requires some background:

In 2019 a sequence of two papers~\cite{brakensiek2017promise,ficak2019symmetric} established one of the strongest dichotomies in a Boolean case: dichotomy for Boolean symmetric templates~%
(i.e., templates with all relations closed under taking permutations of tuples).
The classification is straightforward: 
a theorem provides a short list~%
(see~\Cref{sec:notation})
of Boolean minions~(i.e., sets of Boolean functions closed under taking minors). Every symmetric $\PCSP(\relstr A,\relstr B)$ with $\Pol(\relstr A,\relstr B)$ above a minion in the list is tractable; all others are NP-hard.

Our new classification implies that if $\Pol(\relstr A,\relstr B)$~%
(for arbitrary, not necessarily symmetric, $\relstr A$ and $\relstr B$) is 
strictly under any of the \kl{minions} in the list then $\PCSP(\relstr A,\relstr B)$ is NP-hard.
Note that this does not establish a Boolean dichotomy: many non-symmetric Boolean templates have polymorphism sets incomparable with minions in the list. 
Nevertheless, the ``vertical'' boundary of tractability remains the same among all Boolean templates and thus the algorithms associated with minions in the list are, in a sense, optimal~%
(unless $P=NP$).

\section{Preliminaries and notation}
\label{sec:notation}
\AP
This section contains basic definitions and notions:
A set $R \subseteq A^n$ over the \emph{universe} $A$ is an $n$-ary relation.
A Cartesian power of such $R$ is a relation $R^m \subseteq (A^m)^n$ such that $(a^1, \dots, a^n) \in R^m$ iff $(a^1_i, \dots, a^n_i) \in R$ for each $i$.
A \intro{relational structure} $\relstr A$ is a tuple $(A; R_1^{\relstr A}, \dots, R_l^{\relstr A})$ where each $R_i$ is a relation over $A$.
A \intro{power} of a relational structure is defined on a power of the universe using powers of corresponding relations.
Two relational structures are \emph{similar} if they have the same sequence of arities of respective relations.
For two similar structures $\relstr A$ and $\relstr B$, a \intro{homomorphism} from $\relstr A$ to $\relstr B$ is a map $h : A \to B$ such that for each $i$ and each tuple $(a_1, \dots, a_k) \in R_i^{\relstr A}$ the tuple $(h(a_1), \dots, h(a_k)) \in R_i^{\relstr B}$.

Next, we define CSPs.
For the sake of brevity, 
we use the homomorphism parlance and not the variable-value one used in the introduction.
There are two versions of a CSP problem:
\begin{definition}[decision CSP]\AP
    $\CSP(\relstr B)$ is the problem of deciding whether an input \kl{structure} $\relstr I$ similar to $\relstr B$ admits a \kl{homomorphism} to $\relstr B$. 
    In this case, $\relstr B$ is called the \emph{template} for $\CSP(\relstr B)$.
\end{definition}

\begin{definition}[search CSP]
  $\CSP(\relstr B)$ is the problem of finding a \kl{homomorphism} from input \kl{structure} $\relstr I$ (similar to $\relstr B$) to $\relstr B$.
\end{definition}
\noindent Similarly, there are two versions of the Promise Constraint Satisfaction Problem:


\begin{definition}[decision PCSP]\AP
    For similar $\relstr A$ and $\relstr B$ with a \kl{homomorphism} $\relstr A \to \relstr B$, 
    the problem $\PCSP(\relstr A, \relstr B)$ is, given an input \kl{structure} $\relstr I$ similar to $\relstr A$ and $\relstr B$, to output YES if $\relstr I \to \relstr A$, and NO if $\relstr I \not\to \relstr B$. In this case, the pair $(\relstr A, \relstr B)$ is called the \intro(PCSP){template} for $\PCSP(\relstr A, \relstr B)$.
\end{definition}

\begin{definition}[search PCSP]
    For similar $\relstr A$ and $\relstr B$ with a \kl{homomorphism} $\relstr A \to \relstr B$, 
    the problem $\PCSP(\relstr A, \relstr B)$ is, given an input \kl{structure} $\relstr I$ that has a \kl{homomorphism} to $\relstr A$, 
    output a \kl{homomorphism} from $\relstr I$ to $\relstr B$.
\end{definition}
\noindent The search problem is at least as hard as the decision problem.
We will be providing algorithms for the search version and proofs of hardness for the decision version.

\AP
A function $f$ is a \intro{polymorphism} of, or \reintro{compatible with}, $(\relstr A, \relstr B)$ if it is a \kl{homomorphism} from a \kl{power} of $\relstr A$ to $\relstr B$.
The set of all such polymorphisms is denoted by $\Pol(\relstr A, \relstr B)$.
Note that $\Pol(\relstr A, \relstr B)$ effectively captures the complexity of $\PCSP(\relstr A, \relstr B)$:
\begin{theorem}[See Theorem 6.2 in \cite{brakensiek2017promise}]
    If $\Pol(\relstr A, \relstr B) \subseteq \Pol(\relstr A', \relstr B')$, then $\PCSP(\relstr A', \relstr B')$ is poly-time reducible to $\PCSP(\relstr A, \relstr B)$.
\end{theorem}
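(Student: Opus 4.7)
The plan is to give a polynomial-time many-one reduction from $\PCSP(\relstr A', \relstr B')$ to $\PCSP(\relstr A, \relstr B)$; this is the ``polymorphisms capture complexity'' result that underlies the algebraic approach to PCSPs. I would proceed by a gadget-replacement construction.

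Given an input $\relstr I$ of $\PCSP(\relstr A', \relstr B')$ with variable set $V$, I would build, in polynomial time, an instance $\relstr J$ of $\PCSP(\relstr A, \relstr B)$ whose solutions encode polymorphisms in $\Pol(\relstr A, \relstr B)$ applied to would-be solutions of $\relstr I$. Concretely, I would exploit the basic identification of $n$-ary polymorphisms in $\Pol(\relstr A, \relstr B)$ with \kl{homomorphisms} $\relstr A^n \to \relstr B$: $\relstr J$ is a polynomial-sized substructure of a \kl{power} of $\relstr A$, with coordinates indexed so as to mirror $V$, and with each constraint $R(v_1, \dots, v_k)$ of $\relstr I$ translated into a family of $R^{\relstr A}$-tuples in $\relstr J$.

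For completeness ($\relstr I \to \relstr A'$ implies $\relstr J \to \relstr A$), I would define the required \kl{homomorphism} $\relstr J \to \relstr A$ directly from the coordinate projections of the power, exploiting that each gadget constraint was built out of a genuine $R^{\relstr A}$-tuple. For soundness ($\relstr J \to \relstr B$ implies $\relstr I \to \relstr B'$), a \kl{homomorphism} $g : \relstr J \to \relstr B$ is decoded into an $n$-ary \kl{polymorphism} $f \in \Pol(\relstr A, \relstr B)$. Here the assumed inclusion $\Pol(\relstr A, \relstr B) \subseteq \Pol(\relstr A', \relstr B')$ is used crucially: it upgrades $f$ to a \kl{polymorphism} of $(\relstr A', \relstr B')$. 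Composing $f$ coordinate-wise with $n$ suitably chosen ``candidate'' maps $V \to A'$ (arranged so that tuples coming from constraints of $\relstr I$ sit in the appropriate $R^{\relstr A'}$-relations) yields the sought \kl{homomorphism} $\relstr I \to \relstr B'$, since $f \in \Pol(\relstr A', \relstr B')$ sends $R^{\relstr A'}$-tuples to $R^{\relstr B'}$-tuples.

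The main obstacle is to pick $n$ and the concrete shape of $\relstr J$ so that both directions go through at polynomial size. Naively one would want the candidate tuples to range over all functions $V \to A'$, which is exponential in $|V|$; the standard trick, going back to \cite{brakensiek2017promise}, is to use a polynomial-size sub-power whose chosen coordinates are still rich enough for the soundness decoding while the template-size parameters $|A|$ and $|A'|$ are absorbed into the polynomial constant. Verifying that the construction is computable in polynomial time, and that the promise discipline (YES/NO gap) is preserved, are then routine bookkeeping.
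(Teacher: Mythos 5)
The paper does not actually prove this theorem; it imports it from Brakensiek and Guruswami (their Theorem 6.2, and in modern form Theorem 3.1 of Barto, Bul\'{i}n, Krokhin, Opr\v{s}al). So I am comparing your sketch against the standard argument rather than against a proof in this paper.

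Your high-level plan — gadget replacement, decode $\relstr J \to \relstr B$ into a polymorphism, then invoke the inclusion $\Pol(\relstr A,\relstr B)\subseteq\Pol(\relstr A',\relstr B')$ — is the right picture, but the concrete construction you describe cannot work, and in two independent ways. First, you make $\relstr J$ a \emph{sub-power} of $\relstr A$ and argue completeness via coordinate projections, ``exploiting that each gadget constraint was built out of a genuine $R^{\relstr A}$-tuple.'' If that is literally so, the projections of the power are already homomorphisms $\relstr J\to\relstr A$ for \emph{every} input $\relstr I$, hence also $\relstr J\to\relstr B$ (since $\relstr A\to\relstr B$ by the template definition). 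But then no input can ever map to a NO-instance, so the soundness direction ($\relstr I\not\to\relstr B'$ implies $\relstr J\not\to\relstr B$) is impossible except in degenerate cases. Second, your soundness decoding is circular: the ``$n$ suitably chosen candidate maps $V\to A'$ arranged so that tuples coming from constraints of $\relstr I$ sit in the appropriate $R^{\relstr A'}$-relations'' are exactly homomorphisms $\relstr I\to\relstr A'$ — which need not exist on NO-instances, which is precisely the case soundness must handle. There is also a more technical gap: a homomorphism from a \emph{proper} sub-power of $\relstr A^n$ to $\relstr B$ does not extend to a polymorphism $\relstr A^n\to\relstr B$ in general, so the ``decoding'' step is not available, yet using the full power $\relstr A^n$ makes $\relstr J$ exponential in $n$.

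The actual polynomial-size construction is shaped differently. Instead of one large sub-power mirroring $V$, one introduces, for each variable $v$ of $\relstr I$, a constant-size block of $|A|^{|A'|}$ variables that encodes the full table of an $|A'|$-ary function $f_v : A^{|A'|}\to B$, and for each constraint $c$ with relation $R$ a similarly constant-size block encoding an $|R^{\relstr A'}|$-ary function $g_c$. Constraints on each block force $f_v, g_c\in\Pol(\relstr A,\relstr B)$; constraints across blocks identify $f_{v_i}$ with the $\pi_i$-minor of $g_c$, where $\pi_i : R^{\relstr A'}\to A'$ is the $i$-th coordinate projection, i.e.\ they realise the minor maps of a label-cover instance. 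Completeness takes $f_v$ to be the $h(v)$-th projection for a promised $h:\relstr I\to\relstr A'$; for soundness, one recovers functions $f_v\in\Pol(\relstr A,\relstr B)\subseteq\Pol(\relstr A',\relstr B')$ directly from $g:\relstr J\to\relstr B$ and defines $h(v):=f_v(\mathrm{id}_{A'})$, with the minor identifications guaranteeing that $h$ is a homomorphism $\relstr I\to\relstr B'$ — no homomorphism $\relstr I\to\relstr A'$ is ever invoked. This is where the polynomial bound and the soundness argument both come from, and it is precisely the structure missing from your sketch.
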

\noindent\AP For any $\relstr A$ and $\relstr B$, the set $\Pol(\relstr A, \relstr B)$ is a \intro{minion}, i.e. a function set closed under taking \emph{minors}:
\begin{definition}[Minor, minor map]\AP
    An $n$-ary function $g : A^m \to B$ is a \intro{minor} of $n$-ary $f : A^n \to B$ given by a \intro{minor map} $\pi : [n] \to [m]$ if
    \begin{equation*}
        g(a_1, \dots, a_m) = f(a_{\pi(1)}, \dots, a_{\pi(n)}), 
        \text{ for all  } a_1, \dots, a_m \in A.
    \end{equation*}
    In this case we say that $g$ is a $\pi$-minor of $f$, or $f \xrightarrow{\pi} g$.
\end{definition}
\noindent\AP A \intro{projection}, or \reintro{dictator}, is a function $f:A^n\rightarrow A$ such that, for some $i$, $f(a_1,\dotsc,a_n) = a_i$ for all $a_1,\dotsc,a_n$.
A function $f:A^n\rightarrow A$ is \intro{idempotent} if $f(a,\dotsc,a)=a$ for all $a\in A$.
Every minor of a projection is a projection, and every minor of an idempotent function is idempotent. 

\AP
Finally, we say that a coordinate $i$ of a function $f : A^n \to B$ is \intro{essential} if there exist $a_1, \dots, a_n, b_i \in A$ such that $f(a_1, \dots, a_i, \dots, a_n) \neq f(a_1, \dots, b_i, \dots a_n)$. A function $f$ doesn't depend on its non-essential coordinates.

\subsection{Boolean notation}
Recall that a \kl{relational structure} $\relstr A$ is Boolean if $A = \{0,1\}$ and a function is Boolean if its domain and codomain are both $\{0,1\}$.
In this paper we mostly work with Boolean structures, functions and \kl{minions}. 
We say that $n$-ary Boolean function $f$ is
\begin{itemize}
    \itemAP \intro{folded} if $f(a_1, \dots, a_n) = 1 - f(1 - a_1, \dots, 1 - a_n)$ for all $a_1, \dots, a_n \in A$,
    \itemAP \intro{monotone} \emph{in coordinate $i$} if
      \begin{multline*}
        f(a_1,\dotsc,a_{i-1},0,a_{i+1},\dotsc,a_n) \leq
        f(a_1,\dotsc,a_{i-1},1,a_{i+1},\dotsc,a_n),
      \end{multline*}
    \item \reintro{antimonotone} \emph{in coordinate $i$} if function $1-f$ is monotone in~$i$.
\end{itemize}
\AP
To simplify the presentation we will be using the bra-ket notation: $\braket{ \tuple a | \tuple x} = \sum_i a_ix_i$.
With this at hand, for every tuple $\tuple a$ and a threshold $t$ we define a Boolean function:
\begin{equation*}
  \funforleq{\tuple a}{t}(\tuple x) = \begin{cases}
    0 & \text{ if } \braket{\tuple a | \tuple x } \leq t \\
    1 & \text{ otherwise}.
  \end{cases}
\end{equation*}
\noindent We let $\intro*\TheMinion$ denote the \kl{minion} of all Boolean functions defined in this way.

In the previous work, the following Boolean \kl{minions} were of prominence:
\AP\begin{itemize}
  \item The minion of all constant $0$ functions, and the minion of all constant $1$ functions,
  \item the minion $\intro*\MAX$ generated by the set of all max functions,
  \item the minion $\intro*\MIN$ generated by the set of all min functions,
  \item the minion $\intro*\XOR$ generated by the set of all xor functions,
  \item the minion $\intro*\AT$ generated by the set of all alternating threshold functions, i.e., functions $\funforleq{1,-1,1,-1,\dotsc,1}{0}$ of odd arities,
  \item minions $\intro*\THR{q}$ parameterized by a rational number $0<q<1$, each containing all threshold $q$ functions, i.e., $\funforleq{1,\dotsc,1}{qn}$ where $n$, the arity, is co-prime with the denominator of $q$.
\end{itemize}
The list is completed by the \intro{negated} versions of all the minions;
the negation of $f$ is $\tuple x\mapsto 1-f(\tuple x)$ and the negation of a minion consists of negations of all its members.   

\section{The main results}
The general hardness result is easy to state, and the proof (in \Cref{app:smoothLC}) follows from \textit{smooth} LC \cite{khot2002smooth} (after adding layers):
\begin{theorem}
  Let $(\relstr A,\relstr B)$ be a PCSP \kl(PCSP){template}. 
  If the \kl{injective layered condition} holds in $\Pol(\relstr A,\relstr B)$ then $\PCSP(\relstr A,\relstr B)$~%
  (in the decision version)
  is NP-hard.
\end{theorem}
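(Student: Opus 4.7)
The plan is to reduce the smooth layered version of Label Cover (shown NP-hard in \Cref{app:smoothLC}) to the decision $\PCSP(\relstr A,\relstr B)$ via the standard polymorphism gadget, using smoothness precisely to unlock the injectivity premise of the new condition. Fix the constant $M$ and choice function $I$ witnessing the \kl{injective layered choice condition} in $\Pol(\relstr A,\relstr B)$. Take an instance $\Phi$ of smooth layered Label Cover with $M$ layers $L_1,\dots,L_M$, alphabets $[k_v]$ on each variable $v$, projections $\pi_{uv}:[k_u]\to[k_v]$ on each edge $(u,v)$, a smoothness parameter $\mu$ and a soundness gap $\varepsilon$ to be fixed later.

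Construct the PCSP instance $\relstr I$ as follows. For each LC variable $v$, create a block of PCSP variables $\{x_{v,\tuple a}:\tuple a\in A^{k_v}\}$. For every relation symbol $R$ of $\relstr A$ of arity $r$ and every tuple $(\tuple a^1,\dots,\tuple a^r)\in (R^{\relstr A})^{k_v}$, add the constraint $R(x_{v,\tuple a^1},\dots,x_{v,\tuple a^r})$. For every LC edge $(u,v)$, identify $x_{v,\tuple a}$ with $x_{u,\tuple b}$ whenever $b_i=a_{\pi_{uv}(i)}$ for all $i\in[k_u]$. Any homomorphism $h:\relstr I\to\relstr B$ then induces $f_v(\tuple a):=h(x_{v,\tuple a})$ in $\Pol(\relstr A,\relstr B)$ of arity $k_v$, with $f_v$ equal to the $\pi_{uv}$-minor of $f_u$ along each LC edge.

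For completeness, from a satisfying LC labeling $\sigma$ set $h(x_{v,\tuple a}):=a_{\sigma(v)}$: each block realises a \kl{projection}, hence a polymorphism of $\relstr A$, and the identifications hold because $\pi_{uv}(\sigma(u))=\sigma(v)$, so $\relstr I\to\relstr A$. For soundness, assume $h:\relstr I\to\relstr B$ and form label sets $I(f_v)\subseteq[k_v]$ of size at most $M$. Sample a random chain $v_1\in L_1,\dots,v_M\in L_M$ connected by LC edges, producing a chain of minor maps $f_{v_1}\xrightarrow{\pi_{v_1,v_2}}\cdots\xrightarrow{\pi_{v_{M-1},v_M}}f_{v_M}$ with composed projections $\pi_{i,j}$. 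The accumulated sets $S_i:=I(f_{v_i})\cup\bigcup_{j<i}\pi_{j,i}(I(f_{v_j}))$ have size at most $M^2$; smoothness together with a union bound over the $O(M^4)$ pairs in $S_i$ bounds by $O(M^4\mu)$ the probability that $\pi_{v_i,v_{i+1}}$ is non-injective on $S_i$. With $\mu$ small, with constant probability the injectivity premise fires for the whole chain and the \kl{injective layered choice condition} produces $i<j$ and $\alpha\in I(f_{v_i})$ with $\pi_{i,j}(\alpha)\in I(f_{v_j})$. Labelling each $v$ independently by a uniformly random element of $I(f_v)$ then satisfies the composed LC constraint between $v_i$ and $v_j$ with probability at least $1/M^2$, so in expectation a constant fraction of edges of $\Phi$ is satisfied; taking $\varepsilon$ below this constant contradicts the LC soundness, forcing $\relstr I\not\to\relstr B$.

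The main obstacle is the calibration of $\mu$ and $\varepsilon$ against $M$: the injectivity premise must fire along a random chain of length $M$ with high enough probability that the induced randomized labeling of $\Phi$ beats the soundness threshold. Balancing smoothness failure on sets of size $\Theta(M^2)$ against the LC gap constant is the delicate step. Everything else, including the block gadget and the probabilistic decoding, is a direct adaptation of the construction used for the \kl{layered condition}, with smoothness of layered LC stepping in wherever plain layered LC previously sufficed.
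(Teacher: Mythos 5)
Your proposal is essentially the paper's proof. The only presentational difference is that you inline the long‑code gadget (blocks $\{x_{v,\tuple a}\}$, identifications along edges, relation constraints within a block) to go directly from smooth layered Label Cover to a PCSP instance, whereas the paper routes through an intermediate promise problem of satisfying a layered minor condition and then cites the standard long‑code reduction from \cite{Bible} to $\PCSP(\relstr A,\relstr B)$. Unfolding that black box is legitimate but does not constitute a different route: the decisive steps — tracking the accumulated sets $S_i$ (the paper's $I_i^+$) of size at most $M^2$, invoking smoothness plus a union bound to keep each $\pi_{v_i,v_{i+1}}$ injective on $S_i$ with probability $1-O(\delta M^4)$, then applying the injective layered choice condition on injective chains and decoding by a uniform choice from $I(f_v)$ to hit the composed constraint with probability $\geq 1/M^2$, and finally calibrating $\delta,\varepsilon$ against $M$ — are identical to the paper's argument. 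One small wording slip: the soundness of layered Label Cover is about the fraction of \emph{chains} that are weakly satisfied, not the fraction of edges; your computation in fact establishes weak satisfaction of a constant fraction of chains (a specific edge within each good chain is hit), so the conclusion is fine, but the phrase ``a constant fraction of edges of $\Phi$'' should read ``chains.''
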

\noindent
In~\cref{sec:ex-st} we present an example distinguishing the new condition from the \kl{layered condition}.
To show that the \kl{layered condition} fails, we investigate the \kl{minion} of \kl{polymorphisms}. 
The most technical part of this reasoning is postponed until~\Cref{app:ST}.
Hardness follows from our new Boolean classification which is introduced in the next paragraph.

The condition for tractability for \kl{minions} $\minion M\subseteq\TheMinion$ is:
\begin{equation}
  \tag*{$\intro*\PCondition$}
  \label{eq:PCondition}
  \forall{\varepsilon > 0}~\exists~\tuple a, t\ \text{ s. t. }\ \funforleq{\tuple a}{t}\in\minion M \text{ and } \max_i\size{a_i} \le \varepsilon \sum_i\size{a_i}
\end{equation}
and the classification is:
\begin{theorem}\label{thm:main}
  Let $(\relstr A, \relstr B)$ be Boolean \kl(PCSP){template} such that $\Pol(\relstr A,\relstr B)\subseteq\TheMinion$, then:
  \begin{itemize}
    \item $\PCondition$ holds for $\Pol(\relstr A,\relstr B)$ and $\PCSP(\relstr A, \relstr B)$ is solvable in polynomial time, or
    \item $\PCondition$ fails in $\Pol(\relstr A,\relstr B)$ and $\PCSP(\relstr A, \relstr B)$ is NP-hard.
  \end{itemize}
  In the first case constant, $\MIN$, $\MAX$, $\AT$, or $\THR{t}$ (for some real number $t$), 
  or their \kl{negated} counterparts, are included in $\Pol(\relstr A,\relstr B)$.
\end{theorem}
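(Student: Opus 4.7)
The plan is to split \cref{thm:main} into its tractable and hard halves, and in both cases reduce to statements already available in the literature about specific Boolean \kl{minions}.

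\textbf{Tractable direction.} Assume $\PCondition$ holds, so for every $\varepsilon > 0$ there is some $\funforleq{\tuple a}{t} \in \Pol(\relstr A,\relstr B)$ with maximum absolute coefficient at most $\varepsilon \sum_j |a_j|$. The first task is the structural claim that one of constants, $\MIN$, $\MAX$, $\AT$, some $\THR{t}$, or a \kl{negated} counterpart is contained in $\Pol(\relstr A,\relstr B)$. The idea is compactness plus a case analysis: a family of polymorphisms with vanishing max-to-sum ratio should, via an appropriate limit argument combined with minor operations that regularise the sign and magnitude distribution of the coefficients, produce a symmetric threshold polymorphism in $\Pol(\relstr A,\relstr B)$; comparing its sign pattern with the location of the threshold relative to the positive and negative coefficient sums then identifies exactly one of the listed minions. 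Polynomial-time solvability of $\PCSP(\relstr A,\relstr B)$ in each case follows from the known algorithms for PCSPs containing the corresponding minion.

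\textbf{Hard direction.} Assume $\PCondition$ fails, and fix $\varepsilon_0 > 0$ such that every $\funforleq{\tuple a}{t} \in \Pol(\relstr A,\relstr B)$ has some $i$ with $|a_i| > \varepsilon_0 \sum_j |a_j|$. After fixing a canonical representative $(\tuple a, t)$ of each polymorphism, define the \kl{choice function}
\begin{equation*}
I(f) \;=\; \Bigl\{\, i : |a_i| > \tfrac{\varepsilon_0}{2}\textstyle\sum_j |a_j|\,\Bigr\},
\end{equation*}
so $1 \le |I(f)| \le M := \lceil 2/\varepsilon_0 \rceil$. To verify the \kl{injective layered condition} on chains of length $M+1$, use that a $\pi$-minor of $\funforleq{\tuple a}{t}$ is $\funforleq{\tuple b}{t}$ with $b_j = \sum_{\pi(i)=j} a_i$ and $\sum_j |b_j| \le \sum_i |a_i|$. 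Along an injective chain, distinct heavy coordinates of each $f_i$ land in distinct positions of every later $f_j$; an inductive mass-accounting argument tracks the $\ell_1$-mass of descendants of past heavy coordinates against the light mass available to cancel them, and shows that avoiding a heavy-to-heavy intersection along a length-$(M+1)$ chain forces the $\ell_1$-mass of some $f_j$ below the level at which the failure of $\PCondition$ still guarantees a heavy coordinate, a contradiction. NP-hardness then follows from the hardness theorem for the \kl{injective layered condition} stated earlier in the paper.

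\textbf{Main obstacle.} The principal technical hurdle is the mass-accounting step: under injectivity on the heavy set, each heavy $a_i$ sits alone at target $\pi(i)$, yet the cumulative light mass routed through $\pi$ to the same target can still cancel $a_i$, dropping the new coefficient below the heaviness threshold. Ruling out such cancellation at every step of a long injective chain, without either producing a heavy-to-heavy hit or depleting the $\ell_1$-mass below what the failure of $\PCondition$ permits, is the combinatorial core of the proof. A secondary subtlety is that a threshold function admits many $(\tuple a, t)$ representations; $I$ must be made well-defined by fixing a canonical normal form (for instance, the minimal-support integer representative with coprime nonzero coefficients) before the mass arguments can even begin.
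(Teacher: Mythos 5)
Your tractable direction is essentially the paper's route (a compactness/limit argument yielding a symmetric threshold or alternating-threshold minion in the topological closure, then handing off to the known LP-based algorithms); you omit, but would need, the bridging step (the paper's Proposition~\ref{prop:approx}) that a sufficiently close approximant in the minion actually defines the required max/min/threshold/at function on all of $\{0,1\}^m$, since limit points of the tuple minion need not themselves be presentations of polymorphisms.

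The hard direction has a genuine gap, and I think a conceptual one. You describe the endgame as: avoiding heavy-to-heavy intersections ``forces the $\ell_1$-mass of some $f_j$ below the level at which the failure of $\PCondition$ still guarantees a heavy coordinate.'' But $\PCondition$ (and its failure) is scale-invariant --- it bounds the ratio $\max_i|a_i|/\sum_i|a_i|$, not any absolute amount of mass --- so there is no such ``level,'' and no depletion-of-mass argument can contradict it. The contradiction the paper actually extracts is of a different shape: if all $\pi_{i,j}(I(f_i))\cap I(f_j)$ are empty along an injective chain of length $N$, then the sets $\pi_{i,N}(I(f_i))$ in the final function are pairwise disjoint and each carries $>1/N$ of the total $\ell_1$-weight, which is impossible. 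Reaching that requires proving that the weight on $\pi(I(f))$ stays above an $\varepsilon$-fraction after an injective minor --- precisely the cancellation problem you flag as your ``principal technical hurdle'' and then leave unresolved. The paper closes this hole with the dominating-set theorem for c-minions (Theorem~\ref{thm:amainHARD}) together with Claim~\ref{claim:propagateweight}, and that machinery depends on two choices you don't make: (i) the choice function is a set of $3N$ coordinates that are \emph{largest} in a presentation-independent preorder $\less$ defined directly from $f$'s values (not ``all coordinates above a threshold,'' whose cardinality you cannot control and which a malicious presentation can distort); and (ii) the argument splits $I(f)$ into $\equi$-classes fully versus partially contained and factors $\pi$ through an intermediate minor covered by the fully-contained part. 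Your proposed ``minimal-support integer representative with coprime coefficients'' normal form is neither what the paper uses nor obviously well-defined or stable under taking minors; the canonical presentation the paper constructs is the one in which $|a_i|<|a_j|$ iff $i\sless j$, proved to exist via Observations~\ref{obs:opposmontweakpres} and~\ref{obs:samemontweakpres}. Without a presentation-independent ordering and the dominating-set propagation step, the injective layered condition does not follow.
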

\noindent A proof of this theorem is in~\Cref{sec:proof}; some technical parts of the hardness reasoning are deferred to~\cref{app:hard},
and one additional argument on tractability is in~\Cref{app:Algorithms}.
The claimed consequence is immediate:
\begin{corollary}
  Let $\Pol(\relstr A, \relstr B)$ be properly included in one of the \kl{minions} $\MIN$, $\MAX$, $\XOR$, $\AT$, $\THR{t}$ or their \kl{negated} counterparts.
  Then $\PCSP(\relstr A,\relstr B)$ is NP-hard.
\end{corollary}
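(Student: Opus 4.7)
The plan is to reduce the corollary to \Cref{thm:main} by verifying $\Pol(\relstr A,\relstr B) \subseteq \TheMinion$ and the failure of $\PCondition$, and to handle $\XOR$ separately since $\XOR \not\subseteq \TheMinion$.

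For $M \in \{\MIN, \MAX, \AT, \THR{t}\}$ and their negated versions, I would first verify $M \subseteq \TheMinion$: each generator is a linear threshold function, since $\min = \funforleq{1,\dots,1}{n-1}$, $\max = \funforleq{1,\dots,1}{0}$, the alternating threshold equals $\funforleq{1,-1,\dots,1}{0}$, and the threshold-$t$ function equals $\funforleq{1,\dots,1}{tn}$. Hence if $\Pol(\relstr A, \relstr B) \subsetneq M$, then $\Pol(\relstr A,\relstr B)\subseteq\TheMinion$, and \Cref{thm:main} applies. To conclude NP-hardness, I would show that $\PCondition$ fails: every $n$-ary generator of $M$ satisfies $\max_i |a_i|/\sum_i|a_i| = 1/n$, so $\PCondition$ forces $\Pol(\relstr A,\relstr B)$ to contain $M$-generators of unbounded arity, which a proper sub-minion of $M$ cannot achieve (for $\MIN, \MAX, \AT$ the minor relation between generators is arity-monotone in the appropriate sense, so missing a single generator of arity $n_0$ forces missing every generator of arity $\geq n_0$).

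For $\Pol(\relstr A, \relstr B) \subsetneq \XOR$ I would instead apply the \kl{injective layered condition} directly. The natural choice is to let $I(f)$ be the set of essential coordinates of $f$; by an analogous bounded-arity argument for proper sub-minions of $\XOR$, $|I(f)|$ is uniformly bounded. For an XOR function $f$ and a minor map $\pi$ with $f \xrightarrow{\pi} g$ that is injective on $I(f)$, essential coordinates are preserved exactly, namely $I(g) = \pi(I(f))$. Consequently, along any chain of such injective minors we have $\pi_{1,j}(I(f_1)) \subseteq I(f_j)$, so the required intersection is non-empty and the hardness half of the general injective layered hardness theorem kicks in.

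The main obstacle is the $\XOR$ case: beyond $\XOR \not\subseteq \TheMinion$, the bounded-arity step requires care because $\XOR$ admits proper sub-minions of unbounded arity with extra structure (a parity-restricted sub-minion being the principal concern), so the choice of $I$ and the definition of $\XOR$ in use must be compatible. The $\THR{t}$ sub-case is similarly delicate, because the minor relation between the threshold-$t$ generators is governed by divisibility of arities rather than plain comparison, so arguing that $\PCondition$ fails in every proper sub-minion of $\THR{t}$ requires a more careful analysis of which arities can coexist without generating all of $\THR{t}$.
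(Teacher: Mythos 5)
The paper disposes of this corollary in two short steps, and your route diverges from it in a way that creates extra work and a couple of gaps.

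\emph{What the paper does.} For $\MIN$, $\MAX$, $\XOR$ (and negations) it does \emph{not} go through \Cref{thm:main}: any proper subminion of one of these has bounded essential arity, and since every function in these minions is simply the $\min$ / $\max$ / $\mathrm{xor}$ of its essential coordinates, the \kl{choice function} $I(f)=\{\text{essential coordinates of }f\}$ directly witnesses the \kl{multiple choice condition}. For $\AT$ and $\THR{t}$ the paper invokes \Cref{thm:main} but crucially uses its \emph{last sentence}: if $\PCondition$ held, one of constant, $\MIN$, $\MAX$, $\AT$, $\THR{s}$ (or a negation) would be contained in $\Pol(\relstr A,\relstr B)$; none of those is a proper subminion of $\AT$ or of $\THR{t}$, so $\PCondition$ must fail and hardness follows. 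No direct analysis of $\PCondition$ inside proper subminions is needed.

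\emph{Where your proposal differs and where it has gaps.} You route $\MIN$, $\MAX$, $\AT$, $\THR{t}$ all through \Cref{thm:main} by trying to show $\PCondition$ fails directly. The step ``every $n$-ary generator has ratio $1/n$, so $\PCondition$ forces generators of unbounded arity'' is not a proof: $\PCondition$ quantifies over \emph{all} presentations $\funforleq{\tuple a}{t}$ of \emph{all} functions in the subminion (including those with dummy coordinates and non-canonical weights), not just the canonical presentation of a generator. To make this work for, say, $\MIN$ you would need a lemma of the form ``every presentation of a $\min$ on $k$ essential coordinates has $\max_i|a_i|/\sum_i|a_i| > 1/(k+c)$'', which is true but requires an argument; you assert it. For $\THR{t}$ you explicitly flag that the divisibility structure of generator arities makes this harder, and that concern is warranted under your approach --- but it evaporates if you instead use the last sentence of \Cref{thm:main} as the paper does, which is the observation you seem to have missed.

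For $\XOR$ your plan is correct in outline, but two remarks. First, the concern about ``parity-restricted sub-minions of unbounded arity'' is unfounded: if a subminion of $\XOR$ contains an $m$-ary $\mathrm{xor}$ up to dummies, then identifying pairs of essential coordinates produces every odd-arity $\mathrm{xor}$ up to $m$, so any proper subminion has bounded essential arity. Second, you do not need the \kl{injective layered condition} at all: with $I(f)$ the set of essential coordinates, for any minor map $\pi$ one has $I(g)\subseteq\pi(I(f))$ (a coordinate of $g$ is essential iff its fibre hits an odd number of essential coordinates of $f$, hence at least one) and $I(g)\neq\emptyset$, so the plain \kl{multi-choice condition} already applies. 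Reaching for the injective layered hardness theorem here is using a much heavier hammer than the statement requires.

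So: your proposal is a genuinely different decomposition, it would work after filling in the $\PCondition$-failure lemmas and dropping the unnecessary parity worry, but it is strictly more work than the paper's argument and leaves the $\THR{t}$ case unresolved; the key simplification you missed is that the tractable branch of \Cref{thm:main} already tells you which minions must be contained, which immediately rules out the tractable case for proper subminions of $\AT$ and $\THR{t}$.
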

\begin{proof}
  For $\MIN, \MAX$, and $\XOR$ the conclusion is immediate: any proper subminion contains operations that depend only on a bounded number of coordinates
  and therefore is NP-hard by \kl{multiple choice condition}.
  Both $\AT$ and $\THR{t}$ are in the scope of~\Cref{thm:main}, and the conclusion follows.
\end{proof}

We conclude this section with two examples.
Both serve as an illustration of the problems we encounter while working with $\TheMinion$.
The first example and its analysis is, in large part, folklore; 
we present it for completeness.
The second example proves that the \kl{injective layered condition} is strictly weaker than the injective condition%
\footnote{We didn't define the injective condition, but it follows from the \kl{injective layered condition} by restricting the second item to chains of length 2 (i.e. a single minor)}.

\subsection{Example 1: layers and presentations}
In this section, we work with a \kl{minion} generated by the functions
\begin{equation*}
  \funforleq{\frac{1}{3},\underbrace{\frac{1}{3m},\dotsc,\frac{1}{3m}}_{2m}}{0.5}
\end{equation*}
and denote it by a provisional name $\intro*\WP$.

Note that in each of the generating operations, the first coordinate has a large impact on the result.
On the other hand, after identifying more than $\frac{3}{2}m$ of the remaining coordinates we obtain a coordinate 
with the coefficient $>0.5$ which makes the \kl{minor} a \kl{projection}, 
and the formerly ``heavy'' coordinate does not impact the result of this projection at all.
A minor extension of this reasoning shows that hardness cannot be derived here from the \kl{multi-choice condition}:
\begin{observation}
  Let $I$ be a \kl{choice function} for $\WP$.
  If there exists $M$ such that $\size{I(f)} \leq M$ for all $f\in\WP$,
  then there exists $f\in\WP$ and a \kl{minor map} $\pi$ such that $\pi(I(f))\cap I(\pi(f))=\emptyset$ (where $\pi(f)$ is a $\pi$-\kl{minor} of $f$).
\end{observation}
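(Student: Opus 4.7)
The plan is to proceed by contradiction: assume some \kl{choice function} $I$ with $|I(f)|\le M$ for every $f\in\WP$ does satisfy $\pi(I(f))\cap I(\pi(f))\neq\emptyset$ for every \kl{minor map} $\pi$, and exhibit a single pair $f,\pi$ witnessing the opposite. Pick $m\ge 4M+1$ and take $f := f_m$, the generator of arity $2m+1$ whose first coordinate is heavy (weight $1/3$) and whose remaining $2m$ coordinates are light (weight $1/(3m)$). Writing $J:=I(f)$, the bound $|J|\le M < m/2$ leaves more than $3m/2$ light coordinates outside $J$, so I can choose $M'\subseteq\{2,\dots,2m+1\}\setminus J$ of size $|M'|=\lceil 3m/2\rceil + 1$. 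Set $r := 2m+2-|M'|$; a direct computation gives $r-1 \ge (m-1)/2 \ge 2M$.

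Next, for any bijection $\phi\colon[2m+1]\setminus M'\to\{2,\dots,r\}$, define $\pi\colon[2m+1]\to[r]$ by $\pi(i):=1$ for $i\in M'$ and $\pi(i):=\phi(i)$ otherwise. The weight calculation sketched in the preamble then shows $g := \pi(f) = \mathrm{proj}_1^{(r)}$: the merged coordinate carries coefficient $|M'|/(3m) > 1/2$, whereas when the merged coordinate is $0$ the remaining weight (the heavy $1/3$ plus at most $(2m-|M'|)/(3m)\le 1/6$ from the unmerged light coordinates) sums to at most $1/2$. Crucially, this holds regardless of the bijection $\phi$, so $g$ is a fixed function and $K := I(g)$ is a fixed subset of $[r]$ with $|K|\le M$.

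Finally, because $J\cap M' = \emptyset$, we have $\pi(J) = \phi(J)$, and by varying $\phi$ restricted to $J$ this image can be made equal to any $|J|$-subset of $\{2,\dots,r\}$. The set $\{2,\dots,r\}\setminus K$ has size at least $(r-1)-|K|\ge 2M - M \ge |J|$, so I can choose $\phi$ with $\phi(J)\subseteq\{2,\dots,r\}\setminus K$, giving $\pi(J)\cap I(g) = \emptyset$ and contradicting the \kl{multi-choice condition} for the minor $f\xrightarrow{\pi}g$.

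The only point that requires genuine care is the independence of $g$ from the bijection $\phi$: once I verify that every admissible relabelling of the non-merged coordinates yields the same projection $\mathrm{proj}_1^{(r)}$, the rest is a clean pigeonhole exploiting the gap $r-1\ge 2M$ between the number of dummy coordinates of $g$ and $|K|+|J|$. I do not expect a substantial obstacle --- the argument is essentially the weight computation sketched in the preamble, with the added observation that freedom in relabelling lets us steer $\pi(J)$ into the free dummies outside $I(g)$.
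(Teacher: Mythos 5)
Your proof is correct and takes essentially the same approach as the paper's: choose a large generator $f$, merge enough light coordinates to overpower the heavy one so that the minor $g$ is a projection, then use the remaining freedom in the minor map to route $\pi(I(f))$ away from $I(g)$. The paper's implementation is leaner — it simply maps all of $I(f)$ to one coordinate $j\neq 1$ chosen outside $I(g)$ and everything else to coordinate $1$, keeping the target arity at $M+2$ and needing only $m\ge 2M$ — whereas your version keeps $\pi$ injective off the merged block $M'$ and uses a bijection $\phi$ to scatter $J$ into free dummy coordinates of a larger target, which costs you $m\ge 4M+1$ and extra bookkeeping but is equally valid.
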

\begin{proof}
  Let $I$ be a \kl{choice function} for $\WP$ with the sizes of images bound by $M$. 
  Let $m\geq2M$, let $f$ be an $(2m+1)$-ary generator of $\WP$, let $g$ be an $(M+2)$-ary \kl{projection} on the first coordinate 
  and let $1\neq j \notin I(g)$.
  We define a \kl{minor map} $\pi: [2m+1] \rightarrow [M+2]$ by putting $\pi(i) = j$ if $i\in I(f)$ and $\pi(i) = 1$ for $i\notin I(f)$.
  Note that $f\xrightarrow{\pi}g$ as the sum of weights of coordinates in $I(f)$ accounts for less than half of the total weight.
  By the construction $\pi(I(f))=\{j\}$, and as $j\notin I(g)$ the proof is concluded.
\end{proof}

To show that the \kl{minion} $\WP$ is hard we use~\cref{thm:main}, and to do so we need to confirm that $\PCondition$ fails in $\WP$.
Note that we need to deal with every presentation 
of elements of $\WP$, and a very heavy coordinate in one presentation might be not that heavy in another:
E.g. for suitably small $\varepsilon$:
\begin{equation*}
  \funforleq{\frac{1}{3},\frac{1}{6}, \frac{1}{6}, \frac{1}{6}, \frac{1}{6}}{0.5}\  = \ 
  \funforleq{\frac{1}{5} + 4\varepsilon, \frac{1}{5} - \varepsilon, \frac{1}{5} - \varepsilon, \frac{1}{5} - \varepsilon, \frac{1}{5} - \varepsilon}{0.6}
\end{equation*}

Before we are ready to show $\PCondition$ fails in $\WP$, we need an intermediate step.
It concerns symmetric functions: a function is symmetric if permuting its arguments does not change the result of the application.
\begin{observation}
  No symmetric $5$-ary function belongs to $\WP$. 
\end{observation}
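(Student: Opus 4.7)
The plan is to assume for contradiction that some symmetric $5$-ary function $g$ lies in $\WP$. As a member of $\WP$, $g$ is a $\pi$-\kl{minor} of some generator $F = \funforleq{\tuple a}{1/2}$ of arity $2m+1$, so the minor itself has the form $g = \funforleq{\tuple b}{1/2}$ with $b_j = \sum_{i\,:\,\pi(i)=j} a_i$. Summing coefficients yields $\sum_{j=1}^{5} b_j = \sum_i a_i = 1$, and letting $j^\ast := \pi(1)$ be the image of the heavy coordinate, the bound $b_{j^\ast} \geq 1/3$ is immediate. Before the case analysis I would first rule out constants: the identities $F(0,\dots,0)=0$ and $F(1,\dots,1)=1$ are inherited by every minor, so $g$ equals neither $0$ nor $1$ everywhere.

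Since $g$ is monotone (all $b_j \geq 0$), symmetric, and non-constant, there is a unique $k^\ast \in \{1,\dots,5\}$ such that $g(\tuple x) = 1$ if and only if $\tuple x$ has at least $k^\ast$ ones. Equivalently, every subset $S \subseteq [5]$ of size $k^\ast - 1$ satisfies $\sum_{j \in S} b_j \leq 1/2$, and every subset of size $k^\ast$ satisfies $\sum_{j \in S} b_j > 1/2$. I would then eliminate each value of $k^\ast$ by a double-counting argument on these subset sums.

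The cases $k^\ast \in \{1,2,4,5\}$ follow directly from $\sum_j b_j = 1$: for example, summing all ten pair-sums gives $4\sum_j b_j = 4$, so if each pair-sum exceeds $1/2$ (the case $k^\ast = 2$) then the total would strictly exceed $5$, a contradiction; the case $k^\ast = 4$ is analogous with triples, and $k^\ast \in \{1,5\}$ follow from comparing $\sum_j b_j$ with $5/2$. The main obstacle is $k^\ast = 3$, where the generic sum argument is inconclusive. For this case I would exploit the heavy-coordinate bound directly: the four triples that do \emph{not} contain $j^\ast$ sum to $3\sum_{j \neq j^\ast} b_j = 3(1 - b_{j^\ast}) \leq 3 \cdot 2/3 = 2$, yet the strict inequality on each such triple forces this total to exceed $2$. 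Thus every value of $k^\ast$ leads to a contradiction, establishing that no symmetric $5$-ary function lies in $\WP$.
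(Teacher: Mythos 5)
Your proof is correct, and it takes a genuinely different route from the paper's. The paper argues directly on the sizes of the coefficients: assuming $a_1 \geq 1/3$, it splits on whether some other $a_i$ exceeds $1/6$, and in each branch derives a contradiction with $\sum_i a_i = 1$ by evaluating the function on a couple of hand-picked inputs (permutations of $(1,1,0,0,0)$ in one branch, of $(0,0,1,1,1)$ in the other). You instead exploit the structure of symmetric monotone non-constant Boolean functions head-on: such a function must be a $k^\ast$-threshold, and you eliminate each $k^\ast \in \{1,\dots,5\}$ by double-counting subset sums against $\sum_j b_j = 1$, with the single stubborn case $k^\ast = 3$ handled by restricting the count to triples avoiding the heavy coordinate $j^\ast$ and invoking $b_{j^\ast} \geq 1/3$. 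Both arguments use exactly the same two facts about the presentation (sum is $1$, max is at least $1/3$), but yours is more systematic and transparent about \emph{why} arity $5$ is special: the only threshold not killed by the generic averaging is $k^\ast = 3$, and that one falls to the heavy-coordinate bound. One small point worth making explicit when you write this up: the derived presentation $\tuple b$ has all entries nonnegative because the generator's weights are positive, which is what licenses the monotone-threshold description of $g$; you state it in passing but it is load-bearing.
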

\begin{proof}
  Let $\funforleq{\tuple a}{0.5}\in\WP$ of be of arity $5$ and let $\tuple a$ be a presentation obtained from one of the generating tuples i.e.:
  $\sum_i a_i = 1$ and $\max_i a_i \geq 1/3$.
  Assume wlog that $a_1\geq \frac{1}{3}$ and note that if $a_i>\frac{1}{6}$ for $i\geq 2$ then $a_i>\frac{1}{6}$ for all $i$,
  as the first condition implies that on all permutations of $(1,1,0,0,0)$ the function produces $1$.
  But then $\sum_i a_i> 1/3 + 4/6 = 1$ which is a contradiction.

  Thus $a_i\leq \frac{1}{6}$ for $i=2,..,5$ but then $\braket{\tuple a| (0,0,1,1,1)} \leq .5$~%
  (i.e., the function outputs $0$ on $(0,0,1,1,1)$ and its permutations) and thus 
  both $a_1+a_2+a_3\leq 0.5$  and $a_1+a_4+a_5\leq 0.5$.
  This implies that $\sum_i a_i < 2a_1 + \sum_{i\geq 2} a_i \leq 1$ which, again, is a contradiction.
\end{proof}
\noindent We are ready to tackle the failure of $\PCondition$.
\begin{observation}
  If $\funforleq{\tuple a}{t}\in\WP$ then $\max_i |a_i| > \frac {1}{80}\sum_i\size{a_i}$.
\end{observation}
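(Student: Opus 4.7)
The plan is to argue by contradiction. Suppose $\funforleq{\tuple a}{t}\in\WP$ with $\max_i|a_i|\le S/80$, where $S=\sum_i|a_i|$. After rescaling ($S=1$) and a preliminary reduction to the case $a_i\ge 0$, we have $\max_i a_i\le 1/80$ and the arity $n\ge 80$. The idea is to produce a $5$-ary minor of $f$ that cannot belong to $\WP$, contradicting $f\in\WP$.

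Greedily partition $[n]$ into five blocks $I_1,\ldots,I_5$ so that the partial sums $b_k:=\sum_{i\in I_k}a_i$ all lie in the narrow window $[15/80,\,17/80]$; such a partition exists because $\max a_i\le 1/80$. The induced $5$-ary minor $g=\funforleq{\tuple b}{t}$ of $f$ belongs to $\WP$, and for every $\tuple y\in\{0,1\}^5$ of Hamming weight $k$, the value $\braket{\tuple b|\tuple y}$ lies in the disjoint window $[15k/80,\,17k/80]$. If $t$ falls in a gap between these five windows, then $g(\tuple y)$ depends only on $|\tuple y|$, so $g$ is a symmetric $5$-ary function in $\WP$, directly contradicting the preceding observation. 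Otherwise $t$ lies in a unique window $[15k^*/80,\,17k^*/80]$ for some $k^*\in\{1,2,3,4\}$, and $g$ is constant $0$ below Hamming level $k^*$ and constant $1$ above.

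Using that $g\in\WP$ admits a generator presentation $(\tuple c, 1/2)$ with $\sum_k c_k=1$, $c_k\ge 0$, and $\max_k c_k\ge 1/3$, I would close the cases $k^*\in\{1,2,4\}$ by elementary arguments. For $k^*=1$, all pairs satisfy $c_i+c_j>1/2$, so summing over the ten pairs gives $4\sum_k c_k=4>10\cdot (1/2)=5$, a contradiction. For $k^*=2$, the minimum triple satisfies $c_{(1)}+c_{(2)}+c_{(3)}>1/2$, and together with $\max c\ge 1/3$ this forces $c_{(4)}<1/6$, whence $c_{(1)}+c_{(2)}+c_{(3)}\le 3c_{(4)}<1/2$, contradicting the original inequality. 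For $k^*=4$, the maximum triple satisfies $c_{(3)}+c_{(4)}+c_{(5)}\le 1/2$; together with $\max c\ge 1/3$ this forces $c_{(3)}+c_{(4)}\le 1/6$, hence $\sum_k c_k\le 5/6<1$.

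The hard part is the case $k^*=3$. Here the generator constraints on $\tuple c$---pairs $\le 1/2$, quadruples $>1/2$, $\sum c=1$, $\max c\ge 1/3$---turn out to be satisfied essentially only by $c=(1/6,1/6,1/6,1/6,1/3)$ (up to permutation of the four light coordinates), so no direct contradiction arises from $(\tuple c, 1/2)$ alone. The plan for this case is to exploit the freedom in choosing the partition $(I_1,\ldots,I_5)$: given $t$ fixed, different partitions yield different $\tuple b$-values and hence different minors $g$. The goal is to show that under the assumption $\max a_i\le 1/80$, one can always find a partition whose resulting $g$ has a Hamming-$3$ output pattern incompatible with the rigid single-heavy-coordinate template, forcing $g\notin\WP$ and hence $f\notin\WP$. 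Executing this step tightly is the technical heart of the proof and is precisely where the constant $1/80$ is used.
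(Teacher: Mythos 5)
Your high-level plan matches the paper's up to the point of building a $5$-ary minor and invoking the prior observation that no symmetric $5$-ary function lies in $\WP$. But the case analysis over where $t$ lands is not how the paper proceeds, and the gap you yourself flag at $k^*=3$ is a real one: in that case the minor $g$ can be exactly the $m=2$ generator $\funforleq{1/3,1/6,1/6,1/6,1/6}{0.5}$ of $\WP$ (the constraints you derive, pairs $\le t$, quadruples $>t$, together with $\max c_k\geq 1/3$ and $\sum c_k = 1$, force $\tuple c=(1/3,1/6,1/6,1/6,1/6)$, but that is not a contradiction since this function genuinely belongs to $\WP$). Your proposed escape via choosing a different partition is left entirely unexecuted; it is plausible but would be the technical core of the argument and is not provided.

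The paper's proof avoids the case split altogether. The missing ingredient is a constraint that pins $t$ near the center: fixing a presentation $\funforleq{\tuple b}{0.5}$ of $f$ derived from a generator (so $\tuple b\geq 0$, $\sum b_i=1$), sorting, and locating the split index $j$ with $\sum_{i<j}b_i\leq 0.5<\sum_{i\leq j}b_i$ yields
\begin{equation*}
    \sum_i a'_i - \delta \;<\; 2t \;<\; \sum_i a'_i + 2\delta
\end{equation*}
where $\tuple a'$ is the tuple with negatives zeroed and $\delta=\frac{1}{40}\sum_i a'_i$. This guarantees the $5$-way partitioned minor is always a majority (your $k^*=3$ gap case) rather than possibly falling into a $k^*$ window, so the symmetric-minor observation applies with no further case work. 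Without this double inequality, your proof does not close.

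A secondary issue: your preliminary reduction ``rescale $S=1$ and assume $a_i\geq 0$'' is stated too quickly. After zeroing negatives, the new sum $\sum a'_i$ may be strictly less than $1$, so $\max a'_i\leq 1/80$ is no longer the same thing as $\max a'_i\leq \tfrac{1}{80}\sum a'_i$, and the $5$-block weights you claim to lie in $[15/80,17/80]$ will not in general sum to $1$. The paper handles this explicitly: using idempotency ($\sum a_i>t\geq 0$) it shows the negative mass is strictly less than the positive mass, hence $\max a'_i<\frac{1}{40}\sum a'_i$, and all subsequent bounds are expressed relative to $\sum a'_i$ rather than to the original $\sum |a_i|$.
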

\begin{proof}
  Let $\funforleq{\tuple a}{t} \in \WP$.
  For the sake of contradiction, assume that all $|a_i| \le \frac{1}{80} \sum_i |a_i|$. We are going to prove that $\funforleq{\tuple a}{t}$ has a 5-ary symmetric \kl{minor}.

  Fix the presentation $\tuple b \ge 0$ obtained from one of the generating tuples such that $\funforleq{\tuple a}{t} = \funforleq{\tuple b}{0.5}$ and $\sum b_i = 1$. 
  By permuting coordinates of both functions, we can assume that $b_1 \ge \dots \ge b_n$. Note that $\sum_I a_i > t \iff \sum_I b_i > 0.5$ for any $I \subseteq [n]$.
  
  Let $\tuple a'$ be obtained by replacing each negative entry in $\tuple a$ with 0. Note that $\funforleq{\tuple a}{t} = \funforleq{\tuple a'}{t}$ since $\WP$ consists only of \kl{monotone} functions. We have $0 \le a'_i \le \frac{1}{80} \sum_i |a_i| < \frac{1}{40} \sum_i a'_i$ for all $i$, where the last inequality follows from $\sum_i a_i > t \ge 0$ (\kl{idempotency}). Denote $\delta := \frac{1}{40} \sum_i a'_i$. We now argue that
  \begin{equation*}
      \sum_i a'_i - \delta < 2t < \sum_i a'_i + 2\delta
  \end{equation*}
  Let $j$ be an index such that $\sum_{i=1}^{j-1} b_i \le 0.5 < \sum_{i=1}^j b_i$.  
  We argue that $\sum_{i=1}^{j-1} a'_i + \sum_{i=j+1}^n a'_i \le 2t$ and $\sum_{i=1}^j a'_i + \sum_{i=j-1}^n a'_i > 2t$; 
  indeed, each equality follows from the fact that $\funforleq{\tuple a'}{t} =\funforleq{\tuple b}{0.5}$ and the index $j$ splits $b_i$'s in half.
  This finishes the proof of the double inequality above.

  We define a \kl{minor} $\funforleq{c_1, \dots, c_5}{t}$ by partitioning the coordinates of $\tuple a'$ in 5 more-or-less equal groups: the total weight of each of them lies in $\frac{1}{5} \sum_i a'_i \pm \delta$. It suffices to verify that $\funforleq{c_1, \dots, c_5}{t}$ is a 5-ary majority function by using the aforementioned bounds on $t$: in particular, on all permutations of $(1, 1, 0, 0, 0)$ the function produces 0, and on all permutations of $(1, 1, 1, 0, 0)$ it produces 1.
\end{proof}
\cref{thm:main} implies that the \kl{minion} $\WP$ is hard, and
the first observation proves that the \kl{multi-choice condition} fails in $\WP$.
We do not need the full power of~\cref{thm:main}; the hardness of $\WP$ can be derived from the \kl{layered choice condition}.
The following example shows that the \kl{layered choice condition} is too strong to capture all hard PCSPs.

\subsection{Example 2: layered hardness is not enough}\label{sec:ex-st}
We work with a \kl{minion} generated by the functions
\begin{equation*}
    \funforleq{1,\underbrace{2,-2,4,-4,\dotsc,2^m,-2^m}_{2m}}{0}
\end{equation*}
and denote it by $\intro*\ST$. \sloppy
This \kl{minion} was identified by Szymon Stankiewicz.

Note that if one identifies, in a \kl{minor}, coordinates with coefficients $k$ and $-k$ the new coordinate will have coefficient $0$, and no function depends on such a coordinate.
Moreover, it is not hard to see that if one identifies two such coordinates in one of the generators of $\ST$~%
(and then identifies the new \kl{non-essential} coordinate with any other coordinate)
one obtains the generator of arity smaller by 2.
The special coordinate with coefficient $1$ can be identified with the coordinate with coefficient $-2$;
the resulting \kl{minor} does not depend on the new identified pair.
As before we can identify the \kl{non-essential} coordinate with anything and obtain the previous generator.
A minor extension of these observations can be used to show that hardness by the \kl{layered choice condition} fails in $\ST$:
\begin{observation}
  Let $I$ be a \kl{choice function} for $\ST$.
  If there exists $M$ such that $\size{I(f)} \leq M$ for all $f\in\ST$,
  then there is a chain of \kl{minor maps} $f_1\xrightarrow{\pi_{1,2}}f_2\xrightarrow{\pi_{2,3}}\dotsb\xrightarrow{\pi_{M-1,M}} f_M$ in $\ST$ such that,
  putting $\pi_{i,j} = \pi_{j-1,j}\circ\dotsb\circ\pi_{i,i+1}$,
  we have 
  \begin{equation*}
    \pi_{i,j}(I(f_i))\cap I(f_j)=\emptyset \text{ for every } i < j
  \end{equation*}
\end{observation}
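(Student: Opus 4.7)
The plan is to construct the chain inductively, exploiting the ``non-essential coordinate'' phenomenon highlighted above: identifying a $(2^k, -2^k)$-pair creates a sink coordinate that can absorb arbitrarily many source coordinates without affecting the resulting function. The strategy is to collapse the accumulated image of previous $I$-selections onto a single target position lying outside $I(f_{i+1})$ at every step.

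Concretely, I would start with $f_1 = g_{m_1}$ for a very large $m_1$, padded with many non-essential coordinates so that its arity exceeds $M^2$. Set $T_1 = I(f_1)$, of size at most $M$. Inductively, suppose $f_1,\dots,f_i$ and $\pi_{j,j+1}$ for $j<i$ have been built so that the accumulated set $T_i := I(f_i) \cup \bigcup_{j<i} \pi_{j,i}(I(f_j))$ has size $\le iM$ and contains every $\pi_{j,i}(I(f_j))$. To extend the chain, I would first fix a candidate $f_{i+1} \in \ST$, chosen as a slightly smaller generator padded with many non-essential coordinates so that its arity greatly exceeds $|T_i|+M$. This determines $I(f_{i+1})$, of size at most $M$, and allows me to pick a non-essential coordinate $d$ of $f_{i+1}$ with $d \notin I(f_{i+1})$ by cardinality. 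Then I design $\pi_{i,i+1}$ to route every coordinate in $T_i$ to $d$, while routing the remaining source coordinates to the essential positions of $f_{i+1}$ with matching coefficient sums. Since for $d$ to be non-essential in $f_{i+1}$ the coefficients mapped to $d$ must sum to $0$, I balance $T_i$'s contribution by additional source coordinates; the abundance of pairs $(2^k,-2^k)$ in $f_i$ provides ample flexibility for hitting any target value. Once $\pi_{i,i+1}(T_i) = \{d\}$ with $d \notin I(f_{i+1})$, the required disjointness $\pi_{j,i+1}(I(f_j)) \cap I(f_{i+1}) = \emptyset$ follows for every $j \le i$, and the induction continues.

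The main obstacle is the unique odd-coefficient coordinate that every function in $\ST$ carries: since $\ST$-generators have total coefficient sum $1$ with a single coefficient equal to $1$ (all others being even), if this special coordinate of $f_i$ happens to lie in $T_i$ then $\sum_{s\in T_i}\alpha_i(s)$ is odd, and no combination of even coefficients from the remaining source coordinates can cancel it to $0$ at $d$. I would handle this by deliberately choosing $f_{i+1}$ so that its odd coordinate is itself non-essential --- for instance by carrying out, inside $\pi_{i,i+1}$, the identification of the $1$-coordinate with the $-2$-coordinate described in the excerpt, which produces a function whose unique odd-coefficient coordinate has value $-1$ and is non-essential. With this arrangement the source $1$-coordinate can be harmlessly absorbed into the group mapped to $d$ together with appropriate balancing coordinates, removing the parity obstruction. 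I expect the genuinely technical work to be verifying that this choice of $f_{i+1}$ is always compatible with the remaining coefficient-matching constraints, and that the labeling freedom of $f_{i+1}$ (positions of its essential and non-essential coordinates) truly allows both $d$ and the odd-coefficient target position to be placed outside $I(f_{i+1})$; this bookkeeping is presumably the content of the technical material postponed to \Cref{app:ST}.
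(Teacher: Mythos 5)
Your proposal captures the key idea the paper uses and is correct in spirit, but the implementation diverges from the paper's in a way worth comparing. The paper keeps every $f_i$ a pure generator (all coordinates essential) and routes $I(f_i)\cup\{l_i\}$ plus its opposites onto the coefficient-$1$ coordinate, then merges that with the coefficient-$(-2)$ coordinate; the resulting block has coefficient $-1$ and is non-essential, so it can then be merged onto \emph{any} remaining coordinate $X$ while preserving the function. This is the step that gives the needed freedom: since $X$ may be chosen to be any coordinate of the generator $f_{i+1}$ with $X\neq 1$ and $X\notin I(f_{i+1})$ (possible because the arity exceeds $M+1$), the accumulated selection lands on an \emph{essential} target that is guaranteed to lie outside $I(f_{i+1})$. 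Your variant instead insists that the target $d$ itself be a non-essential coordinate of $f_{i+1}$, and to guarantee $d\notin I(f_{i+1})$ you pad $f_{i+1}$ with more than $M$ non-essential coordinates. That also works, but it pays a cost: the candidate $f_{i+1}$ is no longer a generator, the coefficient bookkeeping is heavier (you must arrange the $d$-block sum, the zero-sum blocks for the extra non-essential coordinates, and the generator part simultaneously), and you then need the parity workaround that you describe --- which is precisely the paper's $1$-with-$(-2)$ merge, except the paper applies it \emph{before} the final re-targeting rather than to make $d$ non-essential. Your proposal is therefore not wrong, but the paper's route is cleaner and makes the ``land outside $I(f_{i+1})$'' step trivial rather than a cardinality argument about non-essential coordinates. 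One concrete correction: the technical material you expect in \cref{app:ST} does not concern this observation at all --- that appendix proves the alternative characterizations of $\ST$ (\cref{thm:ST}); the proof of this observation is self-contained in the main text.
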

\begin{proof}
  Let $I$ be a \kl{choice function} for $\ST$ with the sizes of images bound by $M$. 
  Let $m_1 > \dots > m_L > M$ be large enough, let $f_i$ be an $m_i$-ary generator of $\ST$,
  and let $1 \neq l_i \notin I(f_i)$.
  We will define \kl{minor maps} $\pi_{i, i+1}: [m_{i}] \rightarrow [m_{i+1}]$ so that $\pi_{i,i+1}(I(f_i) \cup l_i) = \{l_{i+1}\}$.
  Then $\pi_{i,j}(I(f_i)) = \pi_{j-1,j} \circ \dots \circ \pi_{i,i+1}(I(f_i)) = \{l_j\}$, and as $l_j \notin I(f_j)$ the proof would be concluded.

  To define $\pi_{i,i+1}$, we first identify the coordinate of $f_i$ with coefficient 1 with all coordinates in $I(f_i) \cup l_i$ and their opposites. As discussed, such a \kl{minor} is a generator of a smaller arity, in which the image of $I(f_i) \cup l_i$ corresponds to the coordinate with coefficient 1. Next we identify it with the coordinate with coefficient $-2$ into $l_{i+1}$, leaving the remaining coordinates untouched. The resulting \kl{minor} is again a generator of a smaller arity i.e. $f_{i+1}$.
\end{proof}

\AP
We need one more notation before we can state the main theorem characterizing $\ST$. 
For a Boolean function $f$ we call $(i,j)$ a \intro{fixing pair} if $f(\tuple a) = 0$ whenever $a_i=0$ and $a_j=1$,
and $f(\tuple a) = 1$ whenever $a_i=1$ and $a_j=0$.
Note that for a generator of $\ST$ the last two coordinates form such a pair.
Let $\reintro*\STl, \reintro*\STr$ be the following Boolean \kl{relational structures}:
  \begin{align*}
      \reintro*\STl &= (R\,\hphantom{AE_6}, \kinl{2}{4}\+\,) \\
      \reintro*\STr &= (\NAE 6, \NAE 4)
  \end{align*}
  where
  \begin{equation*}
      R =
      \relation{| Y Y | B B B |}{
          \hline
          1 & 0 & 1 & 0 & 0\\
          0 & 1 & 1 & 0 & 0\\
          1 & 0 & 0 & 1 & 0\\
          0 & 1 & 0 & 1 & 0\\
          1 & 0 & 0 & 0 & 1\\
          0 & 1 & 0 & 0 & 1\\
          \hline
      } \hspace{4em}
      \kinl{2}{4} = 
      \relation{| B B B B B B |}{
          \hline
          1 & 0 & 1 & 0 & 1 & 0\\
          1 & 0 & 0 & 1 & 0 & 1\\
          0 & 1 & 1 & 0 & 0 & 1\\
          0 & 1 & 0 & 1 & 1 & 0\\
          \hline
      }
  \end{equation*}
  \AP
  (every column corresponds to a tuple in the relation) and $\intro*\NAE{k}$ consists of all non-constant $k$-tuples. 
  We present a theorem that connects $\ST$ with the \kl{relational structures} $\STl, \STr$ and a condition using \kl{fixing pairs}.
\begin{restatable}{theorem}{thmST}
    \label{thm:ST}
    For every Boolean function $f$ the following are equivalent:
    \begin{enumerate}
      \item $f\in\ST$,
      \item $f\in\Pol(\STl, \STr) \cap \Pol(\{0\}) \cap \Pol(\{1\})$,
      \item $f$ is \kl{idempotent} and every non-unary \kl{minor} of $f$ has a \kl{fixing pair}~%
      (note that $f$ is its own \kl{minor} as well).
    \end{enumerate}
\end{restatable}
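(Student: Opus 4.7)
I would establish the three implications in the cycle $(1)\Rightarrow(2)\Rightarrow(3)\Rightarrow(1)$.

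For $(1)\Rightarrow(2)$, since $\Pol(\STl,\STr)\cap\Pol(\{0\})\cap\Pol(\{1\})$ is a \kl{minion}, it suffices to check the generators $g_m=\funforleq{1,2,-2,\dotsc,2^m,-2^m}{0}$. Idempotency is immediate from $\sum_i c_i=1$. For compatibility with $R\to\NAE{6}$ (and symmetrically $\kinl{2}{4}\to\NAE{4}$) I would induct on $m$: the two coordinates of $g_m$ with dominant coefficients $\pm 2^m$ form a \kl{fixing pair} of $g_m$, so given a matrix whose columns are tuples of $R$, if the tuples at positions $2m$ and $2m+1$ are distinct, a short case check over the pairs of distinct tuples of $R$ shows that some row of the matrix produces the pattern $(1,0)$ and some other row produces $(0,1)$ at these two positions, forcing $g_m$ to output both $0$ and $1$. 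If the two tuples coincide, their $\pm 2^m$ contributions cancel and the output reduces to that of $g_{m-1}$ on the remaining coordinates, which is handled by the inductive hypothesis.

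For $(3)\Rightarrow(1)$ I would induct on the arity $n$ of $f$. For $n=1$, idempotency forces $f$ to be the identity, which is a \kl{minor} of $g_0$. For $n\geq 2$, pick a \kl{fixing pair} $(i,j)$ of $f$ and form the \kl{minor} $f'$ of arity $n-1$ obtained by identifying $i$ with $j$; the function $f'$ is idempotent and its non-unary minors are minors of $f$, so $f'$ inherits hypothesis~(3). By the inductive hypothesis $f' = g_M\circ\sigma$ for some \kl{minor map} $\sigma$. Lift $\sigma$ to a \kl{minor map} $\pi$ of $g_{M+1}$ by sending the two new coordinates of $g_{M+1}$ (with coefficients $\pm 2^{M+1}$) to $i$ and $j$, and routing the remaining coordinates through the identification. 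A direct computation --- using that $2^{M+1}$ strictly exceeds $1+2+4+\dotsb+2^M = 2^{M+1}-1$, the sum of the remaining positive coefficients, and symmetrically on the negative side --- shows $g_{M+1}\circ\pi = f$: when $a_i\neq a_j$ the dominant term $\pm 2^{M+1}$ matches the \kl{fixing pair} value of $f$, and when $a_i=a_j$ the two $\pm 2^{M+1}$ contributions cancel and the remaining coordinates recover $f'$.

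The main obstacle will be $(2)\Rightarrow(3)$. Idempotency is immediate from $\Pol(\{0\})\cap\Pol(\{1\})$, and closure of \kl{minions} under minors reduces the problem to showing that $f$ itself has a \kl{fixing pair} whenever its arity is at least two. The plan is contrapositive: assuming no pair $(i,j)$ of coordinates of $f$ is fixing, exhibit for each candidate pair a witness input where the fixing condition fails, and combine these witnesses into a single matrix whose columns are tuples of $R$ (or $\kinl{2}{4}$) such that the row-by-row image under $f$ is a constant six-tuple (respectively four-tuple), contradicting $\NAE{6}$ (respectively $\NAE{4}$). The five tuples of $R$ and its six rows are designed precisely to enumerate the ways in which the fixing-pair condition can fail at any candidate pair; the intricate combinatorial step of assigning a suitable tuple of $R$ to each coordinate of $f$ so that all six rows collapse to the same output is the technical heart of the argument, postponed to \Cref{app:ST}.
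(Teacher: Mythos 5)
Your cycle $(1)\Rightarrow(2)\Rightarrow(3)\Rightarrow(1)$ is a valid decomposition, and two of the three arcs are sound. Your $(3)\Rightarrow(1)$ induction is essentially the paper's: identify a fixing pair, apply the inductive hypothesis to the minor obtained by merging its two coordinates, and lift the generator by one level using the dominance $2^{M+1}>1+2+\dots+2^M$. Your $(1)\Rightarrow(2)$ proceeds by checking the generators directly, a route the paper does not take (it instead proves $(1)\Rightarrow(3)$ and obtains membership in $\Pol(\STl,\STr)$ from the more general "easy half'' of \Cref{thm:STfixing}, which applies to \emph{any} function all of whose minors have fixing pairs). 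Your direct check is nonetheless correct, and it hinges on the same combinatorial property of the relations: for any two distinct tuples of $R$ (resp.\ $\kinl{2}{4}$) there is a row exhibiting the pattern $(1,0)$ and another exhibiting $(0,1)$, so the fixing pair $(2m,2m+1)$ of the generator forces a non-constant output; and when the last two columns coincide, the $\pm 2^m$ contributions cancel and one recurses to $g_{m-1}$.

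The genuine gap is $(2)\Rightarrow(3)$. You correctly identify it as the technical heart but your sketch would not carry through. You propose a contrapositive: assume no pair of coordinates of $f$ is fixing, pick a non-fixing witness for each pair, and collate these witnesses into one matrix whose columns lie in $R$ (or $\kinl{2}{4}$) and whose row-wise image under $f$ is constant. The obstacle is that different witnesses for different pairs are arbitrary $n$-bit vectors, are not themselves columns of $R$, and conflict on shared coordinates, so there is no evident way to assemble them into a matrix with the required column structure. The paper's appendix proof (the $\Rightarrow$ direction of \Cref{thm:STfixing}) is not of this shape: it is a direct induction on arity, prefaced by a lemma that every coordinate of a polymorphism is monotone or antimonotone, and driven by \Cref{lem:fixorder}, \Cref{lem:monotoneminor}, and \Cref{lem:updownminor}, which transfer fixing pairs between $f$ and its identifying minors $f_{kl}$; the induction then splits on the sizes of the sets of monotone and antimonotone coordinates and uses tailored matrices in each case. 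That intermediate scaffolding — the monotonicity dichotomy and the transfer lemmas — is what makes the contradiction extractable, and your one-shot "combine all witnesses'' plan has no substitute for it.
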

\noindent \cref{app:ST} contains a proof of this theorem.
In this section, we use it to prove the hardness of $\ST$:
\begin{observation}
  Let $(i,j)$ be a \kl{fixing pair} for $\funforleq{\tuple a}{t}$ then
  \begin{equation*}
  \max(|a_i|,|a_j|) > \frac{1}{4}\sum_k|a_k|.
  \end{equation*}
\end{observation}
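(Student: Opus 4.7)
The plan is to unpack the definition of a fixing pair into two inequalities on sums, subtract them, and derive the desired bound via the triangle inequality.

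First, I would fix the function $f=\funforleq{\tuple a}{t}$ and the fixing pair $(i,j)$, and pick two extremal tuples. The definition of the fixing pair gives that $\braket{\tuple a|\tuple x}\le t$ for every $\tuple x$ with $x_i=0,\,x_j=1$, and $\braket{\tuple a|\tuple x}>t$ for every $\tuple x$ with $x_i=1,\,x_j=0$. To make the first inequality as tight as possible, I would set $x_k=1$ exactly when $k\ne i,j$ and $a_k>0$, which maximizes $\braket{\tuple a|\tuple x}$ subject to $x_i=0,x_j=1$; this yields
\begin{equation*}
    a_j+\sum_{\substack{k\ne i,j\\ a_k>0}}a_k\;\le\;t.
\end{equation*}
To minimize the left-hand side of the second inequality subject to $x_i=1,x_j=0$, I would set $x_k=1$ exactly when $k\ne i,j$ and $a_k<0$, yielding
\begin{equation*}
    a_i+\sum_{\substack{k\ne i,j\\ a_k<0}}a_k\;>\;t.
\end{equation*}

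Next, subtracting the first bound from the second and noting that the two sums combine into $-\sum_{k\ne i,j}|a_k|$ on the left, I obtain the clean inequality
\begin{equation*}
    a_i-a_j\;>\;\sum_{k\ne i,j}|a_k|.
\end{equation*}
Since $|a_i|\ge a_i$ and $|a_j|\ge -a_j$, we have $|a_i|+|a_j|\ge a_i-a_j>\sum_{k\ne i,j}|a_k|$. Adding $|a_i|+|a_j|$ to both sides gives $2(|a_i|+|a_j|)>\sum_k|a_k|$, so
\begin{equation*}
    \max(|a_i|,|a_j|)\;\ge\;\tfrac{1}{2}(|a_i|+|a_j|)\;>\;\tfrac{1}{4}\sum_k|a_k|,
\end{equation*}
which is exactly the claim.

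I don't anticipate a real obstacle here; the only place requiring some care is the choice of the two extremal tuples, making sure they respect $x_i,x_j$ while separately extremizing the contributions of the positive and negative coefficients so that the subtraction converts signed coefficients into absolute values. Everything else is a one-line triangle inequality.
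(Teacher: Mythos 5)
Your proof is correct and takes essentially the same route as the paper's: both extract the two extremal inequalities from the fixing-pair definition, subtract them to obtain $a_i - a_j > \sum_{k\neq i,j}|a_k|$, and then apply the triangle inequality to conclude $|a_i|+|a_j| > \frac{1}{2}\sum_k|a_k|$ and hence the stated bound.
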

\begin{proof}
  By definition of a \kl{fixing pair}, we have
  \begin{equation*}
      \begin{cases}
          a_j + \sum\limits_{k \notin \{i, j\} : a_k > 0} a_k &\le t \\
          a_i + \sum\limits_{k \notin \{i, j\} : a_k < 0} a_k &> t
      \end{cases}
  \end{equation*}
  After subtracting these 2 inequalities we get $a_i - a_j > \sum_{k \notin \{i, j\}} |a_k| \ge 0$, and
  \begin{equation*}
      |a_i| + |a_j| \ge |a_i - a_j| = a_i - a_j > \sum_k |a_k| - |a_i| - |a_j|
  \end{equation*}
  Hence $|a_i| + |a_j| > \frac{1}{2} \sum_k |a_k|$,
  so $\max(|a_i|, |a_j|) > \frac{1}{4} \sum_k |a_k|$.
\end{proof}
To sum up: $\ST$ is a \kl{minion} of \kl{polymorphisms} of a finite \kl{relational structure},
defines a $\PCSP$ which is NP-hard by~\cref{thm:main}, and the hardness cannot be derived via the \kl{layered choice condition}.
Note that the hardness proof could be streamlined: 
We don't need the full power of~\cref{thm:main};
a smooth, non-layered version of PCP suffices.

\section{Tuple minions (towards~\crtCref{thm:main})}
\label{sec:Abstract}
In this section, we start working towards a proof of~\Cref{thm:main}.
The first part of the reasoning disregards the fact, that many presentations can define the same function~%
(i.e., it might happen that $\funforleq{\tuple a}{t} = \funforleq{\tuple b}{s}$ for $\tuple a\neq \tuple b$ and $s\neq t$).
We will work just with the tuples.

\AP
For a tuple $\tuple a$ of length $n$ and a number $t$, 
we use a special notation $\twt{\tuple a}{t}$ to denote $(n+1)$-tuple $(a_1,\dotsc,a_n,t)$.
The set of all such tuples is $\intro*\TheAMinion$.
For $\twt{\tuple a}{t}$ and $\twt{\tuple b}{t}$ we write $\twt{\tuple a}{t}\xrightarrow{\pi}\twt{\tuple b}{t}$
\phantomintro(c){minor}
if $\pi:[n]\rightarrow [m]$ and $b_i=\sum_{j:\pi(j) = i} a_j$.
A set $\minion M \subseteq \TheAMinion$ is a \intro*\Cminion\ if for every $\twt{\tuple a}{t}\in\minion M$:
\begin{itemize}
  \item if $\twt{\tuple a}{t}\xrightarrow{\pi}\twt{\tuple b}{t}$ for some $\pi$, 
    then $\twt{\tuple b}{t}\in\minion M$
  \item $\twt{c\tuple a}{ct}\in\minion M$ for every $c>0$~%
    (if $\tuple a = (a_1,\dotsc,a_n)$ then $c\tuple a = (ca_1,\dotsc,ca_n)$).
\end{itemize}
The \Cminion\ \emph{generated by} a set of $\twt{\tuple a}{t}$ is the smallest (under inclusion) 
\Cminion\ that contains it.
By $-\minion M$ we denote $\{\twt{-\tuple a}{-t} : \twt{\tuple a}{t} \in \minion M\}$; note that $\funforleq{-\tuple a}{-t} = 1 - \funforleq{\tuple a}{t}$.
We introduce an analog of $\PCondition$ for \Cminions\:
\begin{equation}
  \label{eq:ACondition}
  \tag*{$\intro*\ACondition$}
  \forall{\varepsilon > 0}~\exists~\tuple a, t\ \text{ s. t. }\ \twt{\tuple a}{t}\in\minion M \text{ and } \max_i|a_i| \leq \varepsilon \sum_i |a_i|
\end{equation}
and start to build the library of tools needed for~\cref{thm:main}:
\subsection{The ``tractable'' case}
We use $l_{\infty}$ distance; if $\twt{\tuple a}{t}$ and $\twt{\tuple b}{s}$ are of the same  arity $n$, then
\AP\begin{equation*}
  \intro*\dist{\twt{\tuple a}{t}}{\twt{\tuple b}{s}} = \max\{|a_1-b_1|,\dotsc,|a_n-b_n|, |t- s|\}
\end{equation*}
The topological closure of a \Cminion\ $\minion M$~%
(in the topology defined by the distance)
is $\minion M$ together with the all limit points; it is denoted by $\closure{\minion M}$.
Next we introduce ``constants'' and analogs of $\AT$ and $\THR{t}$:
\begin{itemize}
  \itemAP any $\twt{\tuple a}{t}$ with $t\leq\min(0,\sum_i a_i)$ or $t\geq\max(0,\sum_ia_i)$ is called \intro(c){constant};
  \itemAP $\twt{\frac{1}{m},\dotsc,\frac{1}{m}}{t}$~(of arity $m$) is $\intro*\agenthr{m}{t}$ and the 
  \Cminion\
  generated by them is $\intro*\AMTHR{t}$;
  \itemAP $\twt{\frac{1}{m},\dotsc,\frac{1}{m},-\frac{1}{m-1},\dotsc,-\frac{1}{m-1}}{0}$~(of arity $2m-1$) is $\intro*\agenat{m}$ and the \Cminion\ generated by them is $\intro*\AMAT$.
\end{itemize}

\noindent The following observation~%
(which we leave without a proof), 
facilitates the subsequent theorem:
\begin{observation}\label{obs:grouping}
  Let $\tuple a$ be a tuple of length $n$ such that $|1-\sum_i a_i|\leq \varepsilon$ and $0\leq a_i\leq\varepsilon$ for every $i$.
  For every $m$ there exists $\pi:[n]\rightarrow [m]$ such that $|\frac{1}{m} - \sum_{i:\pi(i) = j} a_i|\leq \varepsilon$ for every $j$.
\end{observation}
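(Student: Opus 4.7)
The plan is to use a one-dimensional greedy partition driven by the cumulative sums $S_k = \sum_{i=1}^{k} a_i$. Intuitively, since each weight $a_i$ is at most $\varepsilon$ and the total mass is within $\varepsilon$ of $1$, one can walk through the weights in order and cut into a new group every time the running sum crosses the next multiple of $1/m$; the bound $a_i \le \varepsilon$ guarantees that each crossing overshoots the target by at most $\varepsilon$.

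Concretely, I would order the coordinates arbitrarily, set $k_0 := 0$ and $k_m := n$, and for each $j \in \{1,\dots,m-1\}$ define $k_j$ to be the smallest index with $S_{k_j} \ge j/m$. Since $S_{k_j - 1} < j/m$ and $a_{k_j} \le \varepsilon$, this gives the key invariant
\begin{equation*}
    S_{k_j} \in \bigl[\tfrac{j}{m},\ \tfrac{j}{m} + \varepsilon\bigr]
    \qquad \text{for } j = 1,\dots,m-1.
\end{equation*}
The map $\pi$ is then defined by $\pi(i) := j$ whenever $k_{j-1} < i \le k_j$. The sum of group $j$ is $S_{k_j} - S_{k_{j-1}}$, which for $1 \le j \le m-1$ lies in $[1/m - \varepsilon,\, 1/m + \varepsilon]$ by the invariant, and for $j = m$ equals $S_n - S_{k_{m-1}}$, which lies within $\varepsilon$ of $1/m$ after using $|S_n - 1| \le \varepsilon$ (possibly replacing $\varepsilon$ by $\varepsilon/2$ at the outset to absorb the two error contributions cleanly).

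The only real obstacle is making the bookkeeping tight enough: the natural greedy argument yields errors of the form $c\varepsilon$ with $c$ a small constant, and one has to track the worst group (the last one, which inherits both the overshoot at $k_{m-1}$ and the slack $|1 - S_n|$). This can be handled either by thresholding at $j\cdot S_n/m$ instead of $j/m$, so that the errors are symmetric across all groups, or by simply rescaling $\varepsilon$; no deeper idea is required.
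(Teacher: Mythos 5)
The paper deliberately leaves this observation unproved, so there is no official argument to compare against; the question is simply whether your proposal actually delivers the stated bound. It does not quite. Your contiguous greedy gives groups $1,\dots,m-1$ sums within $\varepsilon$ of $1/m$, but the last group inherits both the rounding overshoot at $k_{m-1}$ (up to $\varepsilon$) and the slack $|1-\sum_i a_i|$ (up to $\varepsilon$), for a worst-case error of $2\varepsilon$. Neither of your proposed repairs closes this gap: cutting at $j\cdot S_n/m$ makes all groups lie within $\varepsilon$ of $S_n/m$, but translating to $1/m$ still costs an extra $|1-S_n|/m \le \varepsilon/m$, giving $(1+1/m)\varepsilon$; and ``rescaling $\varepsilon$'' is not available, because $\varepsilon$ appears in the hypotheses ($a_i\le\varepsilon$, $|1-\sum a_i|\le\varepsilon$) as well as in the conclusion, so you cannot simply run the argument with $\varepsilon/2$. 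In fairness, for the one place the paper uses this observation (\cref{thm:amainEZ}) a $2\varepsilon$ bound is perfectly adequate --- one just tightens the choice of $i$ in that proof --- so your argument would serve the paper's purposes after a cosmetic adjustment; but it does not prove the observation exactly as stated.

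The stated bound $\varepsilon$ is in fact correct, and the clean way to get it is to abandon contiguous cuts entirely and use list scheduling: process the $a_i$ in any order and assign each to the currently lightest group. Writing $T_j$ for the resulting group sums, the standard argument gives $\max_j T_j - \min_j T_j \le \max_i a_i \le \varepsilon$ (if $a_k$ is the last item placed in the heaviest group, that group had the minimum load just before receiving $a_k$, and loads only increase). Now an averaging step finishes it. Suppose $\min_j T_j < 1/m - \varepsilon$; since $\sum_j T_j \ge 1-\varepsilon$, the other $m-1$ groups sum to more than $(1-\varepsilon)-(1/m-\varepsilon) = (m-1)/m$, so one of them exceeds $1/m$, forcing $\max_j T_j - \min_j T_j > \varepsilon$, a contradiction. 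Symmetrically, $\max_j T_j > 1/m + \varepsilon$ is ruled out using $\sum_j T_j \le 1+\varepsilon$. Finally $\min_j T_j \le (\sum_j T_j)/m \le 1/m + \varepsilon/m$ and $\max_j T_j \ge 1/m - \varepsilon/m$ take care of the remaining direction, so every $T_j$ lies in $[1/m - \varepsilon,\, 1/m + \varepsilon]$. The key extra ingredient you were missing is precisely this global $\max-\min\le\varepsilon$ guarantee combined with the averaging argument; it is not reachable from the one-dimensional prefix-cut picture, which inevitably dumps both error sources onto the final block.
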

\noindent The theorem is:
\begin{theorem}\label{thm:amainEZ}
  If $\ACondition$ holds  in \Cminion\ $\minion M$, then
  $\minion M$ contains a \kl(c){constant}, or the topological closure $\closure{\minion M}$ includes $\AMTHR{t}$ for some $t$, or $-\AMTHR{t}$ for some $t$, or $-\AMAT\cup\AMAT$.
\end{theorem}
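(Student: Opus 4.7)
The plan is to exploit $\ACondition$ by extracting from $\minion M$ a sequence of tuples whose coordinates become arbitrarily spread out, then passing to a limit in $l_\infty$ to recover generators of one of the three target minions. Fix $\varepsilon_k\to 0^+$ and pick $\twt{\tuple a^{(k)}}{t_k}\in\minion M$ with $\max_i|a^{(k)}_i|\le\varepsilon_k\sum_i|a^{(k)}_i|$; using the positive-scaling axiom, I normalize so that $\sum_i|a^{(k)}_i|=1$. Let $P_k=\sum_{a^{(k)}_i>0}a^{(k)}_i$ and $N_k=-\sum_{a^{(k)}_i<0}a^{(k)}_i$, so that $P_k+N_k=1$ and $\sum_i a^{(k)}_i=P_k-N_k$. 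If some $\twt{\tuple a^{(k)}}{t_k}$ satisfies $t_k\le\min(0,P_k-N_k)$ or $t_k\ge\max(0,P_k-N_k)$, it is \kl(c){constant} and we are done. Otherwise $|t_k|<1$, so by compactness I may pass to a subsequence along which $P_k\to P$, $N_k\to 1-P$, and $t_k\to t^*$.

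Next I split into three cases according to the sign of $2P-1$. If $2P-1>0$, the aim is $\AMTHR{t^*/(2P-1)}\subseteq\closure{\minion M}$: for every fixed $m$, I partition the coordinates of $\tuple a^{(k)}$ into $m$ groups with signed sums all within $O(\varepsilon_k)$ of $(P_k-N_k)/m$; the resulting minor, rescaled by $1/(P_k-N_k)$, converges in $l_\infty$ to $\agenthr{m}{t^*/(2P-1)}$. Since the closure is itself a \Cminion\ (positive scaling and each fixed minor map are $l_\infty$-continuous) and $m$ is arbitrary, the whole $\AMTHR{t^*/(2P-1)}$ lies in $\closure{\minion M}$. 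The case $2P-1<0$ is symmetric under entrywise negation and yields some $-\AMTHR{s}$. If $P=1/2$, the shrinking admissible range forces $t^*=0$; here I construct two families of minors, one partitioning positives into $m$ equal groups and negatives into $m-1$ equal groups (which rescales to $\agenat{m}\in\closure{\minion M}$), and its mirror (rescaling to $-\agenat{m}\in\closure{\minion M}$), giving $-\AMAT\cup\AMAT\subseteq\closure{\minion M}$.

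The main obstacle is the signed grouping step, because \Cref{obs:grouping} is only stated for non-negative tuples. I would establish a signed analog as follows: apply \Cref{obs:grouping} to the positive subtuple (rescaled to total $1$) to partition the positive coordinates into groups hitting prescribed positive targets, do the same for the negative subtuple, and then combine the two partitions into one on $[n_k]$. Because all magnitudes are at most $\varepsilon_k$, the per-group signed sums hit their joint targets up to $O(\varepsilon_k)$. In the degenerate corner case where the negative part is so small that there are too few negative coordinates to form all the required groups, I simply dump all negatives into a single group, introducing an extra error $O(N_k)$ that also vanishes. The remaining bookkeeping — availability of enough coordinates (the arity is at least $1/\varepsilon_k\to\infty$) and that limits of valid minors remain valid under $l_\infty$-continuity — is routine.
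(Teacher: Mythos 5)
Your proposal follows essentially the same route as the paper's proof: normalize the $\ACondition$ witnesses, pass to a convergent subsequence by compactness, rule out the $\kl(c){\text{constant}}$ case, split on the sign of the limiting signed sum, and build approximate copies of the $\agenthr{m}{\cdot}$ or $\agenat{m}$ generators via the grouping lemma, then use that $\closure{\minion M}$ is a \Cminion. The one genuine (but minor) deviation is in the sign-mixed grouping step for the $2P-1\neq 0$ case: the paper's proof first merges the negative coordinates with positive ones until each merged coordinate is a small nonnegative value, and only then invokes \Cref{obs:grouping}, which sidesteps the degenerate corner you handle separately; your variant (group positives and negatives independently, then pair, with a fallback when the negative mass is too small) also works but carries a bit more bookkeeping.
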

\begin{proof}
  Let $\seq{i},\tr{i}$ be provided by $\ACondition$ for $\varepsilon = \frac{1}{i}$.
  In the first step we substitute each pair $\seq{i},\tr{i}$~%
  (both the elements of the tuple and the threshold), scaling them by $\frac{2}{\sum_j|\seqel{i}_j|}$~%
  (if $\sum_j|\seqel{i}_j|= 0$ we have a \kl(c){constant} and the proof is finished).
  Now, for every $i$, we have $\sum_j|\seqel{i}_j| = 2$.

  If one of the new $\tr{i}$'s is $\geq 2$ or $\leq -2$ we have a \kl(c){constant} and the proof is concluded.
  Otherwise $\{\tr{i}\}_i$ is a bounded sequence, 
  and we substitute $\{\twt{\seq{i}}{\tr{i}}\}_i$ with a subsequence chosen so that the sequence of $\tr{i}$'s converges, say to $T$.
  Similarly the sequence $\{\sum_j\seqel{i}_j\}_i$ is bounded and we can refine $\{\twt{\seq{i}}{\tr{i}}\}_i$ once more so that it converges as well, say to $S$.
  The proof splits into two parts depending on whether $S=0$.

  We consider the case of $S\neq 0$ first; note that $S\geq T \geq 0$ or $S\leq T \leq 0$~%
  (as otherwise we would have a \kl(c){constant} in $\minion M$).
  Both cases are identical and we consider the first one only i.e. $S\geq T \geq 0$.
  We will show that $\AMTHR{T/S}\subseteq\closure{\minion M}$.

  Fix $m$ and $\varepsilon$, and our goal is to construct $\twt{\tuple a}{t}\in\minion M$ such that 
  $\dist{\twt{\tuple a}{t}}{\agenthr{m}{T/S}}\leq \varepsilon$.
  To that end we choose $i$ such that $|\tr{i}-T|\leq S\varepsilon$,
  $|\seqel{i}_j|\leq S\varepsilon$ for every $j$
  and $|\sum_j \seqel{i}_j - S|\leq S\varepsilon$.
  Next we define a \kl{minor map} $\pi$ in two steps:
  In the first step, 
  we merge all the negative coordinates of $\seq{i}$ and 
  then merge them with positive coordinates so that the new coordinate~%
  (which is a sum) is between
  $0$ and $S\varepsilon$.
  In the second step, we use~\cref{obs:grouping} to define further merges and obtain a \kl(c){minor} $\pi(\seq{i})$ with every coordinate at most $S\varepsilon$ distant from $\frac{S}{m}$.
  The $\twt{\tuple a}{t}$ is obtained by scaling this $\pi$-\kl(c){minor} of $\seq{i}$ by $1/S$.
  This finishes the case $S\neq 0$.

  If $S=0$ then $T=0$, as otherwise there is a \kl(c){constant} in $\minion M$ and we are done.
  Fix $m$ and $\varepsilon$ and our goal is to construct $\twt{\tuple a}{t}\in\minion M$ such that 
  $\dist{\twt{\tuple a}{t}}{\agenat{m}}\leq \varepsilon$~%
  (the case of $-\agenat{m}$ is identical and we leave it to the reader).
  To that end we choose $i$ such that $|\tr{i}|\leq \varepsilon, 
  |\seqel{i}_j|\leq \varepsilon$ for every $j$ and $|\sum_{j:\seqel{i}_j>0}\seqel{i}_j- 1|\leq\varepsilon$
  and $|\sum_{j:\seqel{i}_j<0}\seqel{i}_j+1|\leq\varepsilon$.
  We use~\cref{obs:grouping} to define $\pi$ which groups positive coordinates into $m$ groups of weight, approximately, $\frac{1}{m}$,
  and then use it again to, independently, group negative coordinates into $(m-1)$ groups of weight roughly $\frac{-1}{m-1}$.
  The resulting \kl(c){minor} of $\seq{i}$ can be, clearly, taken for $\twt{\tuple a}{t}$ and the proof is finished.
\end{proof}

\subsection{The ``hard'' case}
  In this section, we present a slightly more technical tool.
  It will be used to prove hardness in~\cref{sec:proof}.
  We introduce two new concepts:
  Let $\twt{\tuple a}{t}\xrightarrow{\pi}\twt{\tuple b}{t}$ with $\tuple a$ of length $n$ and let $I\subseteq [n]$;
  \begin{itemize}
    \itemAP if $\size{\pi(I)} = \size{I}$ then $\pi$ is \intro{injective on} $I$~%
    (denoted $\twt{\tuple a}{t}\xrightarrow[I]{\pi}\twt{\tuple b}{t}$),
    \itemAP if for all $j\notin \pi(I)$ we have $\size{\pi^{-1}(j)} = 1$ then $\pi$ is \intro{covered by} $I$. 
  \end{itemize}
  Introducing injective minors is motivated by the \kl{minion} $\ST$.
  Note that a \kl{fixing pair} of coordinates for a function cannot vanish%
  \footnote{More precisely: no function can have two, totally disjoint \kl{fixing pairs}, and under a \kl{minor} \kl{injective on} a \kl{fixing pair}, the image of the pair is \kl{fixing}.}
  after taking a \kl{minor} that is injective on the two selected \kl{fixing} coordinates. 
  On the other hand, such important coordinates can vanish under general \kl{minors}.

  \AP
  Moreover for any $\twt{\tuple a}{t}$~%
  ($\tuple a$ of length $n$)
  we call $I\subseteq [n]$ a \intro{dominating set} 
  if for every $i\in I$ and $j\notin I$ we have $|a_i|\geq |a_j|$.
  A coordinate $i$ is \reintro{dominating} if $\set{i}$ is.

\begin{theorem}\label{thm:amainHARD}
  Let $\ACondition$ fail in $\minion M\subseteq\TheAMinion$ with $\varepsilon >0$.
  If $\minion M\ni \twt{\tuple a}{t}\xrightarrow{\pi}\twt{\tuple b}{t}$ and $I$ are such that:
  \begin{enumerate}
    \item $\pi$ is \kl{injective on} $I$, and 
    \item $\pi$ is \kl{covered by} $I$, and
    \item $I$ \kl{dominates} $\twt{\tuple a}{t}$, and 
    \item $\size{I}\geq \frac{2}{\varepsilon}$
  \end{enumerate}
  then an element of $\pi(I)$ \kl{dominates} $\twt{\tuple b}{t}$.
\end{theorem}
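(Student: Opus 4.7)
The plan is to argue by contradiction. Suppose no coordinate of $\pi(I)$ \kl{dominates} $\twt{\tuple b}{t}$: then some $j^\ast\in[m]\setminus\pi(I)$ has $|b_{j^\ast}|>|b_j|$ for every $j\in\pi(I)$. By condition~(2), $\pi^{-1}(j^\ast)=\{k^\ast\}$ with $k^\ast\notin I$, so $\alpha:=|a_{k^\ast}|=|b_{j^\ast}|$ is the maximum entry of $\twt{\tuple b}{t}$. By condition~(1), $i$ is the unique $I$-preimage of $\pi(i)$ for every $i\in I$, so $b_{\pi(i)}=a_i+S_i$ with $S_i:=\sum_{k\in\pi^{-1}(\pi(i))\setminus\{i\}}a_k$. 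The \kl{domination} of $I$ forces $|a_i|\geq\alpha$ (since $k^\ast\notin I$), and our contradiction hypothesis forces $|a_i+S_i|<\alpha$, so the triangle inequality yields $|S_i|>|a_i|-\alpha\geq 0$.

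Next, I would turn these forced cancellations into a violation of the failure of $\ACondition$ in $\minion M$. Writing $A:=\sum_{i\in I}|a_i|$, $B:=\sum_{k\in K_{\mathrm{in}}}|a_k|$ with $K_{\mathrm{in}}:=\{k\notin I:\pi(k)\in\pi(I)\}$, and $Y:=\sum_{k\in K_{\mathrm{out}}}|a_k|$ with $K_{\mathrm{out}}:=\{k\notin I:\pi(k)\notin\pi(I)\}$, summing the lower bound on $|S_i|$ over $i\in I$ and using the triangle inequality $\sum_i|S_i|\leq B$ gives $B>A-|I|\alpha$. Applying the failure of $\ACondition$ to $\twt{\tuple a}{t}$ yields $\max_i|a_i|>\varepsilon(A+B+Y)$; applied to $\twt{\tuple b}{t}$, combined with the triangle bound $\sum_{i\in I}|a_i+S_i|\geq A-B$ and with $Y\geq\alpha$ (since $k^\ast\in K_{\mathrm{out}}$), it yields $A-B<\alpha(1-\varepsilon)/\varepsilon$. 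Combining these three linear inequalities with the \kl{domination}-plus-size bound $A\geq|I|\alpha\geq 2\alpha/\varepsilon$ and with $\max_i|a_i|\leq A$ collapses them into an inconsistency, establishing the contradiction.

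The main obstacle is the final combination step: the constant $2$ in the size assumption $|I|\geq 2/\varepsilon$ is exactly tuned so that the mass $|I|\alpha$ concentrated in $I$ by \kl{domination} outweighs the slack $\alpha/\varepsilon$ introduced by applying the failure of $\ACondition$ to $\twt{\tuple b}{t}$, making the combined linear system unsatisfiable regardless of the remaining free parameters $B$ and $Y$.
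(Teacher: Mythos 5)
Your proof opens exactly as the paper's: you identify $j^\ast \notin \pi(I)$ realizing $\alpha := \max_j|b_j|$, pull it back to $k^\ast \notin I$, and extract the forced cancellation $|S_i| > |a_i| - \alpha$ for $i \in I$. The gap is in the closing step. You apply the failure of $\ACondition$ only to $\twt{\tuple a}{t}$ and $\twt{\tuple b}{t}$ and assert that the resulting system of linear inequalities in $A, B, Y, \alpha, |I|, \max_i|a_i|$ is unsatisfiable. It is not. Take $\varepsilon = \tfrac{1}{10}$, $|I| = 20$, $a_1 = 5$ and $a_2 = \cdots = a_{20} = 1$ on $I$; for each $i\in I$ place $|a_i|$ block-mates of weight $-1$ in $\pi^{-1}(\pi(i))$ so that $b_{\pi(i)} = 0$; add one extra coordinate $k^\ast$ of weight $1$ in a singleton block. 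Then $A = B = 24$, $Y = \alpha = 1$, $\max_i|a_i|=5$, all four structural hypotheses hold, no coordinate of $\pi(I)$ dominates $\twt{\tuple b}{t}$, and both of your $\ACondition$-applications go through ($5 > \tfrac{49}{10}$ and $1 > \tfrac{1}{10}$), as do the rest of your inequalities: $24 > 24 - 20$, $24 - 24 < 9$, $24 \ge 20$, $5 \le 24$. Nothing collapses. Your linear bookkeeping on $\tuple a$ and $\tuple b$ cannot detect that this situation is actually forbidden.

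What closes the argument in the paper—and what your proposal is missing—is a \emph{third} application of the failure of $\ACondition$, to a fresh tuple $\twt{\tuple c}{t}$ obtained by \emph{refining} $\pi$. Set $w := \max_{k\notin I}|a_k|$ (so $w\ge\alpha$). Inside each block $\pi^{-1}(\pi(i))$ one carves out a sub-block $I_i\ni i$ whose partial sum has absolute value in $[w, 2w]$—possible because $|a_i|\ge w$, the block-mates each have magnitude $\le w$, and the full block sum is $<w$—and keeps the discarded mates as singletons. The resulting minor $\twt{\tuple c}{t}\in\minion M$ has every coordinate of magnitude $\le 2w$ and, since the $I_i$ are pairwise disjoint, at least $|I|$ coordinates of magnitude $\ge w$; hence $\max_j|c_j|\le 2w \le \varepsilon\,|I|\,w \le \varepsilon\sum_j|c_j|$, contradicting the failure of $\ACondition$. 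In the example above, this refinement produces $\tuple c$ with $\max_j|c_j| = 2$ but $\sum_j|c_j| = 43$, so $\ACondition$ visibly holds on $\twt{\tuple c}{t}$ even though your two chosen tuples look fine. The refinement is a genuinely combinatorial step—choosing sub-partitions that expose the internal variance hidden inside the heavy blocks—and there is no linear combination of bounds on $\tuple a$ and $\tuple b$ alone that substitutes for it.
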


\begin{proof}
  Suppose, for a contradiction, that the conclusion fails and let $w = \max_{i:i\notin I} |a_i| \ge \max_j |b_j|$.
  Our assumption implies that, for every $i\in I$, $|\sum_{j:\pi(j)=\pi(i)}a_j| < w \le |a_i|$.

  Now we define a new \kl{minor map} $\rho$ which is best described as a partition of $[n]$;
  the partition $\rho$ will be finer than $\{\pi^{-1}(j)\}_j$.
  For each $i\in I$ we break $\pi^{-1}(\pi(i))$ into parts:
  we choose a minimal subset $I_i\ni i$ such that $|\sum_{j\in I_i}a_j|\geq w$;
  the set $I_i$ is included into the new partition, 
  then for every $j\in\pi^{-1}(\pi(i))\setminus I_i$ the one-element-set $\{j\}$ is also included in the partition.
  Finally the elements of $[n]\setminus \pi^{-1}(\pi(I))$ are singletons in the partition defined by $\rho$~%
  (as they were already in the one defined by $\pi$).
  Finally we put $\twt{\tuple a}{t}\xrightarrow{\rho}\twt{\tuple c}{t}$.

  Now $\rho$ is \kl{injective on} $I$ and, for every $i\in I$, we have $|c_{\rho(i)}|\geq w$.
  Additionally, by the minimality of $I_i$ and the fact that $|a_j|\leq w $ for $j\notin I$ ,
  we conclude that in fact $w\leq |c_{\rho(i)}|\leq 2w$.
  Finally for $i\notin I$ we have $|c_{\rho(i)}| = |a_i| \leq w$. 
  By the failure of $\ACondition$ in $\minion M$, some coordinate of $\tuple c$ has absolute value  $> \varepsilon\sum_j|c_j|$;
  on the other hand no coordinate of $\tuple c$ has absolute value $>2w$ and thus
  $w>\frac{\varepsilon}{2}\sum_i|c_i|$.
  This immediately gives $\sum_{i:i\in I}|c_{\rho(i)}| \geq \frac{2}{\varepsilon} w > \sum_i|c_i|$ which is a contradiction.
  The proof is finished.
\end{proof}

%

\section{A proof of \crtCref{thm:main}}
\label{sec:proof}
In this section, we provide a proof of~\Cref{thm:main}.
We start with a minor change to the way of presenting elements of $\TheMinion$.
For a tuple $\tuple a$ and a threshold $t$ we define a, potentially partial, function:
\begin{equation*}
  \funfor{\tuple a}{t}(\tuple x) = \begin{cases}
    0 & \text{ if } \braket{\tuple a | \tuple x } < t \\
    1 & \text{ if } \braket{\tuple a | \tuple x }  > t
  \end{cases}
\end{equation*}
It follows immediately from the definition that a Boolean function has an arithmetic presentation of the form $\funforleq{\tuple a}{t}$ 
if and only if it has a presentation as $\funfor{\tuple a}{s}$~%
(in some cases $t\neq s$).
This means that both presentations, disregarding partial functions, define the same minion: $\TheMinion$.

Take a minion $\minion M\subseteq\TheMinion$;
we can associate with $\minion M$ a \Cminion\ of presentations of functions in $\minion M$
using either way of presenting functions.
However, $\ACondition$ holds either in both of  \Cminions\ or in none of them.
Thus we can seamlessly switch to the new presentation,
which has some advantages over the old one. 
In particular the following proposition~%
(with an easy proof that is left as an exercise to the reader)
holds
\begin{proposition}\label{prop:approx}
  For every $\tuple a, t$ there exists $\varepsilon$ s. t. whenever\\ $\dist{\twt{\tuple a}{t}}{\twt{\tuple b}{s}}\leq\varepsilon$, 
  we have $\funfor{\tuple a}{t}(\tuple x) = \funfor{\tuple b}{s}(\tuple x)$
  for every $\tuple x$ in the intersection of domains of $\funfor{\tuple a}{t}$ and $\funfor{\tuple b}{s}$.
\end{proposition}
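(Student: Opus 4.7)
My plan is to exploit the finiteness of the Boolean cube: for a fixed tuple $\tuple a$ of arity $n$ and threshold $t$, the set of inner products $\{\braket{\tuple a | \tuple x} : \tuple x \in \{0,1\}^n\}$ is finite, so the minimum positive gap
\[
\delta := \min\bigl\{\,|\braket{\tuple a | \tuple x} - t| : \tuple x \in \{0,1\}^n,\ \braket{\tuple a | \tuple x} \neq t\,\bigr\}
\]
is a well-defined strictly positive real (with the convention that if the set is empty then $\funfor{\tuple a}{t}$ has empty domain and the statement of the proposition is vacuous). I would then choose any $\varepsilon$ with $0 < \varepsilon < \delta/(n+1)$ and claim that this witnesses the proposition.

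The verification is a one-line perturbation estimate. Fix $\tuple x$ in the intersection of the domains of $\funfor{\tuple a}{t}$ and $\funfor{\tuple b}{s}$. Since $\tuple b$ has the same arity as $\tuple a$ and $x_i \in \{0,1\}$, the assumption $\dist{\twt{\tuple a}{t}}{\twt{\tuple b}{s}} \leq \varepsilon$ yields
\[
\bigl|(\braket{\tuple b | \tuple x} - s) - (\braket{\tuple a | \tuple x} - t)\bigr| \leq \sum_{i} |a_i - b_i|\, x_i + |t - s| \leq (n+1)\varepsilon < \delta.
\]
Because $\tuple x$ lies in the domain of $\funfor{\tuple a}{t}$, the quantity $\braket{\tuple a | \tuple x} - t$ has absolute value at least $\delta$, hence a perturbation of size strictly less than $\delta$ cannot flip its sign. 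Therefore $\braket{\tuple b | \tuple x} - s$ shares the sign of $\braket{\tuple a | \tuple x} - t$, which by definition of $\funfor{\cdot}{\cdot}$ means $\funfor{\tuple a}{t}(\tuple x) = \funfor{\tuple b}{s}(\tuple x)$.

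There is essentially no obstacle: the whole argument is a continuity statement made rigorous by the existence of the discrete gap $\delta$. The only matter of care is the degenerate case where $\funfor{\tuple a}{t}$ has empty domain (i.e., the inner product always equals $t$), which is handled vacuously, and the implicit hypothesis that $\tuple b$ has the same arity as $\tuple a$, which is built into the definition of $\dist{\cdot}{\cdot}$.
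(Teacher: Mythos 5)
Your proof is correct. The paper leaves this proposition as an exercise and gives no proof of its own, but your argument is the natural one: take $\delta$ to be the smallest positive gap $|\braket{\tuple a \mid \tuple x} - t|$ over the finitely many $\tuple x \in \{0,1\}^n$, and observe that a perturbation of each coefficient and of $t$ by at most $\varepsilon < \delta/(n+1)$ shifts $\braket{\cdot \mid \tuple x} - (\text{threshold})$ by less than $\delta$, so it cannot change sign on any $\tuple x$ in the domain of $\funfor{\tuple a}{t}$. In fact your estimate proves something slightly stronger than asked, namely that the domain of $\funfor{\tuple a}{t}$ is contained in the domain of $\funfor{\tuple b}{s}$, so restricting to the intersection of domains is automatic.
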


\subsection{Tractability of~\Cref{thm:main}}

Let $\Pol(\relstr A,\relstr B)\subseteq\TheMinion$ be a Boolean \kl{minion} satisfying $\PCondition$.
Let $\minion M$ be the \Cminion\ of presentations of functions in $\Pol(\relstr A, \relstr B)$.
Clearly $\ACondition$ holds in $\minion M$.

If $\Pol(\relstr A, \relstr B)$ contains a constant function, then $\PCSP(\relstr A,\relstr B)$ is trivial.
Otherwise $\Pol(\relstr A, \relstr B)$ splits into two \kl{minions}: functions with the unary \kl{minor} equal to $x\mapsto x$ form the one, 
and functions with the unary \kl{minor} equal to $x\mapsto 1-x$ form the second.
At least one of the two \kl{minions} satisfies $\PCondition$, and, again, the \Cminion\ of its presentations satisfies $\ACondition$.
Without loss of generality~%
(the other case is an alphabetical variant)
we assume that it's the first one~%
($x\mapsto x$) and we restrict the \Cminion\ of presentations to these functions only: $\minion M_+$.

We apply~\Cref{thm:amainEZ} to $\minion M_+$.
Note that $\minion M_+$ cannot contain a \kl(c){constant}, in the sense of~\cref{sec:Abstract};
if it did, the Boolean function it defines would have a constant unary \kl{minor}, and we excluded this possibility.

The next case to consider is that $\closure{\minion M_+}$ contains $\AMTHR{t}$ for some $0\leq t\leq 1$.
Consider $t=0$; the function $\funfor{\frac{1}{m},\dotsc,\frac{1}{m}}{0}$ is defined everywhere except on the tuple $(0,\dotsc,0)$
and evaluates to $1$ there.
By~\Cref{prop:approx} we can find $\varepsilon$ for this partial function, and by our assumption we have a member of $\minion M_+$ 
no further than $\varepsilon$ from $\agenthr{m}{0}$. 
This member defines an \kl{idempotent} total function since we are working with the \kl{idempotent} part of $\Pol(\relstr A,\relstr B)$, and thus it is an $m$-ary maximum function.
Thus, if $t=0$, the set $\Pol(\relstr A,\relstr B)$ contains max functions of all arities and the PCSP (in search version) is tractable~%
(for details see~\cref{app:Algorithms}).
If $t=1$ the situation is analogous, except that we obtain min functions of all arities and search tractability follows again.
For $0<t<1$ we obtain a threshold function for every arity $m$ such that $mt$ is not an integer.
If $t$ is rational, a well-known algorithm provides search tractability~%
(details in~\cref{app:Algorithms}); 
if $t$ is irrational, we have symmetric functions of all arties, and a decision version $\PCSP$ is tractable;
tractability in a search version requires an extra step described in~\cref{app:Algorithms}.

It cannot be so that $-\AMTHR{t}\subseteq\closure{\minion M_+}$:
the function $-\agenthr{1}{t}$ satisfies $1\mapsto 0$ if $t=0$,
$0\mapsto 1$ if $t=-1$ and both if $-1<t<0$. 
None of these evaluations are possible in the \kl{idempotent} part of $\Pol(\relstr A,\relstr B)$.

In the remaining case $-\AMAT\cup\AMAT\subseteq\closure{\minion M_+}$;
we will be using the $\AMAT$ part of the inclusion only.
It is easy to see that 
\begin{equation*}
\funfor{\underbrace{\frac{1}{m},\dotsc,\frac{1}{m}}_m, \underbrace{\frac{1}{m-1},\dotsc,\frac{1}{m-1}}_{m-1}}{0}
\end{equation*}
is defined everywhere 
apart from $(0,\dotsc,0)$ and $(1,\dotsc,1)$.
In fact, in its domain, the function equals
\begin{equation*}
  \funforleq{\underbrace{1,\dotsc,1}_{m},\underbrace{-1,\dotsc,-1}_{m-1}}{0}
\end{equation*}
and this, again by known results, provides search tractability~%
(see~\cref{app:Algorithms}).

\subsection{Hardness of~\Cref{thm:main}}

  In this section, we use the \kl{injective layered condition} to prove hardness for \kl{minions} failing $\PCondition$.
  We need to be able to reason about the coordinates of functions in $\TheMinion$ 
  independently of their presentations.
  Note that any coordinate of $f$ is \kl{monotone} or \kl{antimonotone}, or both~%
  (it suffices to inspect any presentation of $f$).
  For any fixed $f\in\TheMinion$, we will be defining a relation $\less$, a preorder on the coordinates of $f$, depending on the coordinate \kl{monotonicity}:

  \AP
  For coordinates $i$ and $j$ that are both \kl{monotone}~%
  (or both \kl{antimonotone}) we put $i\intro*\less j$ if for any $\tuple a, \tuple b$
  satisfying $a_i=1=b_j, a_j=0=b_i$
  and $a_k=b_k$ for $k\neq i,j$ we have
  $f(\tuple a)\leq f(\tuple b)$~%
  ($f(\tuple a) \geq f(\tuple b)$ respectively).
  The intuition is that $j$ ``carries at least as much (positive or negative) weight'' as $i$. 
  To define the relation for mixed pairs of coordinates, say $i$ and $j$,
  we consider a \kl{minor} $f_{ij}$ identifying $i$ and $j$, and leaving all the other coordinates untouched.
  We put $i\reintro*\less j$ if the new coordinate $\{ij\}$ in the \kl{minor} has the same \kl{monotonicity} as $j$~
  (e.g. the new coordinate is \kl{monotone} and so is $j$).
  Note that the new coordinate in the \kl{minor} can be both \kl{monotone} and \kl{antimonotone}.

  \AP
  Whenever $i\less j\less i$ we write $i\intro*\equi j$.
  If $1\equi 2$ are both \kl{monotone}, or both \kl{antimonotone}, then $f(x,y,\tuple z) = f(y,x,\tuple z)$ for all $x,y,\tuple z$.
  If $1\equi 2$, 
  and one coordinate is \kl{monotone} and the other \kl{antimonotone}, then 
  the \kl{minor} $f(x,x,\tuple z)$ does not depend on $x$.
  Finally, we write $i \intro*\sless j$ if $i\less j$ and not $i\equi j$.
  We will be using the following observation:
  \begin{observation}\label{obs:untouched}
    \sloppy
    Let $\TheMinion\ni f\xrightarrow{\pi} g$ and  $i\less j$ in $f$.
    If $|\pi^{-1}(\pi(i))| = |\pi^{-1}(\pi(j))| = 1$ then $\pi(i)\less \pi(j)$.
  \end{observation}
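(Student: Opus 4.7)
The approach is a case analysis on the monotonicity configuration of $i$ and $j$ in $f$, centered on the following preservation fact: if coordinate $k$ of $f$ is \kl{monotone} (resp. \kl{antimonotone}) and $|\pi^{-1}(\pi(k))| = 1$, then $\pi(k)$ is \kl{monotone} (resp. \kl{antimonotone}) in $g$. This is immediate from $g(\tuple c) = f(c_{\pi(1)},\dotsc,c_{\pi(n)})$, since varying only $c_{\pi(k)}$ varies only the $k$-th input of $f$.

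For the case where $i$ and $j$ share a monotonicity (both \kl{monotone}, or both \kl{antimonotone}), the preservation fact gives the same for $\pi(i), \pi(j)$ in $g$. Given any $\tuple c, \tuple d$ satisfying the swap condition defining $\pi(i)\less\pi(j)$, I would lift them to $\tuple a = (c_{\pi(1)},\dotsc,c_{\pi(n)})$ and $\tuple b = (d_{\pi(1)},\dotsc,d_{\pi(n)})$. By hypothesis $\pi^{-1}(\pi(i))=\{i\}$ and $\pi^{-1}(\pi(j))=\{j\}$, so for every $k\neq i,j$ one has $\pi(k)\notin\{\pi(i),\pi(j)\}$, hence $a_k = c_{\pi(k)} = d_{\pi(k)} = b_k$; and the values at $i,j$ are correctly swapped by construction. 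Then $i\less j$ in $f$ yields $f(\tuple a)\leq f(\tuple b)$ (or the reverse in the antimonotone case), which translates back to $g(\tuple c)\leq g(\tuple d)$ (resp. $\geq$).

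For the mixed case I would pass to the single-identification minors $f' = f_{ij}$ and $g' = g_{\pi(i)\pi(j)}$. Composing $\pi$ with the map $\sigma$ that identifies $\pi(i)$ and $\pi(j)$ in $g$, the uniqueness assumption on the preimages of $\pi(i),\pi(j)$ ensures that $\sigma\circ\pi$ makes exactly one new identification in $f$, namely of $i$ and $j$; so it factors through $f'$ by a \kl{minor map} $\pi'$ in which the merged coordinate $\{i,j\}$ of $f'$ is the unique preimage of the merged coordinate $\{\pi(i),\pi(j)\}$ of $g'$. Applying the preservation fact to this merged coordinate (under $\pi'$) transfers its monotonicity type from $f'$ to $g'$; combined with the preservation fact applied to $j$ under $\pi$, the merged coordinate in $g'$ inherits the monotonicity of $\pi(j)$, which by definition of $\less$ in the mixed case gives $\pi(i)\less\pi(j)$.

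The main obstacle I expect is the mixed case, specifically the diagram chase that produces $\pi'\colon f'\to g'$ with the desired unique-preimage property on the merged coordinate; once that factorization is written down carefully, the rest is direct unpacking of the definitions, with the only subtlety being that a coordinate of $g$ may acquire both monotonicities if the corresponding coordinate of $f$ was one-sided, in which case the two clauses of the definition of $\less$ coincide and the argument goes through unchanged.
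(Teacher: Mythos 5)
The paper itself gives no proof of this observation---it is explicitly ``left with no proof''---so there is nothing internal to compare against; what follows is an assessment of your argument on its own terms.

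Your combinatorial approach is essentially sound. The preservation fact (that a coordinate with a singleton $\pi$-fiber carries its monotonicity type from $f$ to $g$) is correct and is the right key lemma; the same-monotonicity case then follows cleanly by lifting the defining pair $(\tuple c,\tuple d)$ to $(\tuple a,\tuple b)$, and the fiber hypothesis is exactly what guarantees $a_k=b_k$ for $k\neq i,j$. The mixed-case factorization $\sigma\circ\pi=\pi'\circ\tau$ is also correct, and the merged coordinate $\{ij\}$ is indeed the unique $\pi'$-preimage of $\{\pi(i)\pi(j)\}$, so the preservation fact applies again. The one place you are too quick is the closing parenthetical about degenerate cases: it is not immediate that ``the two clauses of the definition of $\less$ coincide'' whenever one of $\pi(i),\pi(j)$ turns out to be both \kl{monotone} and \kl{antimonotone} in $g$, and there is a fourth configuration you do not address (say $j$ \kl{monotone} in $f$, $\pi(i)$ strictly \kl{antimonotone}, $\pi(j)$ both) in which the first clause would actually give $\pi(i)\not\less\pi(j)$. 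The correct way to close this is to note that your own conclusion---$\{\pi(i)\pi(j)\}$ inherits $j$'s monotonicity in $g'$---is \emph{inconsistent} with that configuration, so it never arises; then in the configurations that do arise the two clauses genuinely agree. As written this is a small gap in a proof sketch, not a wrong idea, but a careful write-up must make the case distinction explicit.

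Worth noting: there is a much shorter route using the arithmetic machinery the paper develops anyway. By the equivalence $(1)\Leftrightarrow(3)$ in \cref{prop:canonical} (i.e.\ \cref{lem:less1eq3}), $i\less j$ in $f$ gives a presentation $\funfor{\tuple c}{t}=f$ with $|c_i|\leq|c_j|$. Pushing it along $\pi$ yields $\funfor{\tuple c'}{t}=g$ with $c'_l=\sum_{\pi(k)=l}c_k$; the singleton-fiber hypothesis gives $c'_{\pi(i)}=c_i$ and $c'_{\pi(j)}=c_j$, so $|c'_{\pi(i)}|\leq|c'_{\pi(j)}|$, and \cref{cl:less} yields $\pi(i)\less\pi(j)$. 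This sidesteps the combinatorial case analysis entirely. Your route is more self-contained (it never touches presentations, and even works for any function whose coordinates are each monotone or antimonotone), while the arithmetic route is what the paper almost certainly intends as the ``obvious'' proof and is three lines once \cref{prop:canonical}'s lemmas are available; since \cref{obs:untouched} is not used in proving them, there is no circularity.
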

 \noindent  
  The observation above is left with no proof, but proof of the following proposition is a big part of our reasoning
  and can be found in~\cref{app:hard}:
  
  \begin{restatable}{proposition}{canonical}\label{prop:canonical}
    For every $f\in\TheMinion$ there exists $\tuple a, t$ such that $f=\funfor{\tuple a}{t}$ and the following are equivalent:
    \begin{enumerate}
      \item $i\sless j$, \label{it:canOne}
      \item $|a_i| < |a_j|$, \label{it:canTwo}
      \item for every $\tuple b, s$ such that $\funfor{\tuple b}{s} = f$ we have $|b_i| < |b_j|$. \label{it:canThree}
    \end{enumerate}
  \end{restatable}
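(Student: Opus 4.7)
The plan is to translate the intrinsic preorder $\less$ into coefficient constraints shared by every threshold presentation of $f$, then to build the canonical $(\tuple a, t)$ by symmetrizing along $\equi$-equivalence classes.

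I would first establish the key one-sided implication: for any presentation $\funfor{\tuple b}{s} = f$ and any essential coordinates $i, j$, $i \sless j$ implies $|b_i| < |b_j|$. The argument splits by monotonicity. If $i, j$ are both monotone (essential, so $b_i, b_j > 0$), then $i \sless j$ guarantees a witness $\tuple z^*$ with $f(0, 1, \tuple z^*) = 1$ and $f(1, 0, \tuple z^*) = 0$, and the threshold comparisons at $\tuple z^*$ force $b_i < b_j$; the antimonotone case is dual. For mixed monotonicity ($i$ monotone, $j$ antimonotone, $b_i > 0, b_j < 0$), the definition via the identified minor yields a witness $\tuple z^*$ with $f(0, 0, \tuple z^*) = 1$ and $f(1, 1, \tuple z^*) = 0$, and the threshold inequalities give $b_i + b_j < 0$, i.e., $|b_i| < |b_j|$. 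Separately, if $i$ is non-essential and $j$ essential, every presentation satisfies $|b_i| < |b_j|$ (and $i \sless j$ holds automatically), while two non-essential coordinates are automatically $\equi$. Since $\less$ is total on essential coordinates (by a short contradiction argument in the threshold setting), this also yields the contrapositive of (3) $\Rightarrow$ (1).

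Next I would construct $(\tuple a, t)$ by symmetrization. For a same-monotonicity $\equi$-pair $i, j$, $f$ is symmetric in the pair, so swapping $b_i, b_j$ gives another presentation of $f$ and averaging yields $a_i = a_j$. For a mixed $\equi$-pair, $f$ admits the symmetry $(x_i, x_j) \mapsto (1 - x_j, 1 - x_i)$ (it fixes $(0, 1)$ and $(1, 0)$ and swaps $(0, 0)$ with $(1, 1)$, on which $f$ coincides by $\equi$); applying it to $\funfor{\tuple b}{s}$ yields $\funfor{\tuple b'}{s'}$ with $b_i' = -b_j$, $b_j' = -b_i$, $s' = s - b_i - b_j$, and averaging gives $a_i = -a_j$. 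For each non-essential coordinate $i$, setting $a_i := 0$ preserves validity because $f$ is independent of $x_i$.

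With $(\tuple a, t)$ in hand, the equivalences follow: (1) $\Rightarrow$ (3) is the key implication above; (3) $\Rightarrow$ (2) is immediate since the canonical is itself a presentation; (2) $\Rightarrow$ (1) holds because the canonical forces $|a_i| = |a_j|$ whenever $i \equi j$ (and $a_i = 0$ on non-essential coordinates), so $|a_i| < |a_j|$ must reflect a cross-class $\sless$. The main obstacle is the mixed-monotonicity case: one has to recognize the correct symmetry of $f$ and justify averaging two presentations with different thresholds, which works because $\braket{\tuple b | \tuple x} - s$ and $\braket{\tuple b' | \tuple x} - s'$ carry the same sign $2 f(\tuple x) - 1$ for every $\tuple x$, so their average does too; a mild perturbation of $(\tuple a, t)$ may be needed in the rare degenerate cases where the averaged presentation becomes partial on some $\tuple x$.
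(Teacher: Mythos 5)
Your proof attacks both halves of the proposition by a different mechanism from the paper, and contains one gap worth flagging. For the equivalence $(1)\Leftrightarrow(3)$ the paper proves the converse direction directly: any presentation with $|c_i|\le|c_j|$ forces $i\less j$, by inspecting the term-by-term comparison, and the equivalence then follows from the trichotomy law together with the coefficient-tweak observations. You instead prove $(1)\Rightarrow(3)$ head-on, extracting a separating witness $\tuple z^*$ from the failure of $j\less i$ and subtracting the two threshold inequalities. Given the strong connectivity of $\less$ these two statements are contrapositives of one another, so this is a valid and arguably more transparent route. Your separate treatment of nonessential coordinates is correct in substance, but the claim that every presentation has $|b_i|<|b_j|$ when $i$ is nonessential and $j$ essential is not completely trivial and deserves a proof: a nonessential coordinate may carry a nonzero coefficient, and one has to compare evaluations at both $x_i=0$ and $x_i=1$ against the same witness for $j$ to pin the inequality down.

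The construction of the canonical presentation is where the real gap lies. The paper's mechanism is a pair of local observations that transfer weight between the two coefficients of an $\equi$-pair while preserving both the function and the class sum $\sum_{i\in I}|a_i|$; iterating by repeatedly sending one coefficient to the class average terminates after at most $|I|$ steps. Your mechanism, averaging the original presentation with a symmetry-conjugated one, equalizes a single pair in one step, but as written it does not work for $\equi$-classes of size at least three: after averaging $\{1,2\}$ and then $\{2,3\}$, coordinate $1$ is no longer equal to the others, and repeating the pairwise operation converges only in the limit, not in finitely many steps. To salvage the symmetrization route you should average over the whole symmetry group of $f$ acting on the class at once (equivalently, substitute $y_k=x_k$ for monotone $k$ and $y_k=1-x_k$ for antimonotone $k$, observe the resulting threshold function is fully symmetric on the class, and symmetrize its coefficients over that class in one go), or switch to the paper's transfer observations. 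One small point: the closing remark about a ``mild perturbation'' is unnecessary. If $\funfor{\tuple b}{s}$ and $\funfor{\tuple b'}{s'}$ are both total presentations of the same $f$, then for every $\tuple x$ the quantities $\braket{\tuple b|\tuple x}-s$ and $\braket{\tuple b'|\tuple x}-s'$ are nonzero and carry the same sign, hence so does their average; the averaged presentation is automatically total and no degenerate case arises.
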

  \noindent We call a presentation of $f$, denoted $\tuple a,t$ in the previous proposition, a {\em canonical} presentation.
  
  Note that $i \equi j$ if and only if $|a_i| = |a_j|$ in a canonical presentation: 
  Indeed if $i\equi j$ then neither $|a_i|<|a_j|$ nor $|a_i|>|a_j|$ and thus $|a_i| = |a_j|$.
  Conversely, if $|a_i| = |a_j|$ then, using the fact that we always have $i\less j$ or $j\less i$, we immediately conclude that $i\equi j$.
  Consequently,  
  $i\less j$ if and only if $|a_i|\leq |a_j|$ in the canonical presentation.
  This further implies that the relation $\less$ is a total preorder~%
  (reflexive, transitive, and every two elements are comparable)
  and that $\equi$ is the largest equivalence relation contained in it.
  Note that we cannot change  $\sless$ to $\less$ and $<$ to $\leq$ in~\Cref{prop:canonical} as, in~\cref{it:canThree},  we can have $|b_i|\neq |b_j|$ even if $i\equi j$.

The rest of this section is devoted to proof of the hardness case of~\cref{thm:main}.
To that end, let $\Pol(\relstr A, \relstr B) \subseteq \TheMinion$ be a Boolean \kl{minion} in which $\PCondition$ fails with $\varepsilon > 0$; for convenience we assume that $\varepsilon = \frac{1}{N}$ for large enough $N$.
Let $\minion M\subseteq\TheAMinion$ consist of presentations of functions in $\Pol(\relstr A, \relstr B)$. 
Clearly $\ACondition$ fails in $\minion M$ with $\varepsilon$.
The following claim follows immediately by considering the canonical presentation of members of $\Pol(\relstr A,\relstr B)$:
\begin{claim}\label{claim:size}
  If $f\in\Pol(\relstr A, \relstr B)$ then the greatest equivalence class of $\equi$ which consists of all maximal elements under $\less$ cannot have more than $N$ elements.
\end{claim}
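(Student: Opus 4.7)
The plan is to invoke Proposition~\ref{prop:canonical} and close with a one-line counting argument. Given $f \in \Pol(\relstr A, \relstr B)$, I would first fix a canonical presentation $\tuple a, t$ with $f = \funfor{\tuple a}{t}$ as provided by Proposition~\ref{prop:canonical}. The remarks immediately following that proposition identify $i \equi j$ with $|a_i| = |a_j|$ and $i \less j$ with $|a_i| \le |a_j|$, so the $\equi$-class of the $\less$-maximal coordinates of $f$ coincides exactly with $\{i : |a_i| = \max_j |a_j|\}$.

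Let $w := \max_i |a_i|$ and let $k$ be the size of this maximal $\equi$-class. Since every coordinate contributes at most $w$ to $\sum_i |a_i|$, and at least $k$ of them contribute exactly $w$, I would obtain the lower bound $\sum_i |a_i| \ge k w$, i.e.\ $w \le \tfrac{1}{k} \sum_i |a_i|$. On the other hand, the failure of $\PCondition$ in $\Pol(\relstr A,\relstr B)$ with $\varepsilon = 1/N$ transfers, via the equivalence of $\funforleq$- and $\funfor$-presentations noted at the start of this section together with the fact that $\ACondition$ holds in both associated \Cminions\ or in none, to the statement that \emph{every} canonical presentation satisfies $\max_i |a_i| > \tfrac{1}{N} \sum_i |a_i|$. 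Applying this to our presentation gives $w > \tfrac{1}{N} \sum_i |a_i|$.

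Combining the two inequalities yields $\tfrac{1}{N} \sum_i |a_i| < w \le \tfrac{1}{k} \sum_i |a_i|$, which forces $k < N$, in particular $k \le N$, as claimed. The only conceptual point to verify is the translation of the failure of $\PCondition$ across the two presentation schemes, but that is explicitly handled in the opening paragraphs of Section~\ref{sec:proof}; the remainder of the argument is essentially arithmetic, and the real work has already been done in Proposition~\ref{prop:canonical}, which guarantees that the canonical presentation exposes $\less$ and $\equi$ purely through the magnitudes $|a_i|$.
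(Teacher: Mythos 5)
Your argument is correct and matches the approach the paper intends when it says the claim ``follows immediately by considering the canonical presentation.'' Fixing a canonical presentation, using the characterization of $\equi$ and $\less$ from \cref{prop:canonical} and the remarks following it, and playing off $\sum_i |a_i| \geq k w$ against $w > \frac{1}{N}\sum_i |a_i|$ (which the failure of $\PCondition$ supplies, and which transfers to $\funfor$-presentations because both presentation schemes share the same tuple $\tuple a$) is exactly the one-line counting argument the authors had in mind; you just spelled out the steps they left implicit.
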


We will use the \kl{injective layered condition} to establish the hardness of ~\Cref{thm:main}. 
Our \kl{choice function} will be very simple: for $f\in\Pol(\relstr A,\relstr B)$ we let $I(f)$ to be (any) set of $3N$ coordinates that are largest under $\less$.
Note that we are, in most cases, selecting both \kl{monotone} and \kl{antimonotone} coordinates.
Moreover, our selection will include some equivalence classes of $\equi$ fully and at most one class partially.

Next claim shows that if $f\xrightarrow[I(f)]{\pi} g$ in $\Pol(\relstr A,\relstr B)$ then
in all presentations of $g$ which are derived from presentations of $f$,
a ``significant'' fraction of weight lies on $\pi(I(f))$.
Note that the example of \kl{minion} $\ST$ shows that we cannot drop the ``\kl{injective on} $I(f)$'' assumption.

\begin{claim}\label{claim:propagateweight}
  Let $f \in \Pol(\relstr A, \relstr B)$ and let $\funfor{\tuple a}{t}$ be any presentation of $f$.
  If $\twt{\tuple a}{t}\xrightarrow[I(f)]{\pi}\twt{\tuple b}{t}$ then
    \begin{equation*}
      \sum_{j \in \pi(I(f))} |b_j| > \varepsilon\sum_i |b_i|.
    \end{equation*}
\end{claim}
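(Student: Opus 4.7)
I would argue by contradiction, applying \cref{thm:amainHARD} to a carefully chosen refinement of $\pi$. Suppose $\sum_{j \in \pi(I(f))} |b_j| \leq \varepsilon \sum_i |b_i|$, and let $C_1^f \smore C_2^f \smore \cdots$ be the $\equi$-equivalence classes of the coordinates of $f$, ordered by $\less$, so that $I(f) = (C_1^f \cup \cdots \cup C_{k-1}^f) \cup C'_k$ for some index $k$ and some $C'_k \subseteq C_k^f$. By \cref{prop:canonical}, the set $I^* := C_1^f \cup \cdots \cup C_{k-1}^f$ dominates $\twt{\tuple a}{t}$ in the given presentation, since its elements strictly dominate any coordinate outside $I^*$ in absolute value (in every presentation).

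I would then define $\rho : [n] \to [m']$ as the refinement of $\pi$ which preserves each group $\pi^{-1}(j)$ for $j \in \pi(I^*)$ and splits every other group into singletons. Then $\rho$ is injective on $I^*$, covered by $I^*$, and $\rho(I^*) = \pi(I^*) \subseteq \pi(I(f))$. Setting $\tuple c = \rho(\tuple a) \in \minion M$, the triangle inequality on the broken-up merges yields $\sum_i |c_i| \geq \sum_j |b_j|$.

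In the main case $|I^*| \geq 2/\varepsilon = 2N$, I would apply \cref{thm:amainHARD} to $\twt{\tuple a}{t} \xrightarrow{\rho} \twt{\tuple c}{t}$ with the dominating set $I^*$. The theorem produces some $i^* \in I^*$ such that $|b_{\pi(i^*)}| = |c_{\rho(i^*)}| = \max_j |c_j|$ (recall $\rho(i^*) = \pi(i^*)$ and the group over $\pi(i^*)$ is untouched by the refinement, so $c_{\rho(i^*)} = b_{\pi(i^*)}$). Combining this with the failure of $\ACondition$ on $\tuple c$, one obtains $|b_{\pi(i^*)}| > \varepsilon \sum_i |c_i| \geq \varepsilon \sum_j |b_j|$; since $\pi(i^*) \in \pi(I(f))$, this contradicts the standing assumption.

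The principal obstacle is the remaining case $|I^*| < 2N$, equivalently $|C'_k| > N$, where \cref{thm:amainHARD} cannot be invoked with $I^*$ alone because the dominating set is too small. In this regime one must enlarge the dominating set by incorporating enough coordinates of $C'_k$ into it; this relies on exploiting the symmetry of $f$ on equal-monotonicity coordinates inside $C_k^f$ and on applying \cref{claim:size} to appropriate further-refined minors obtained by identifying higher classes of $f$, in order to control how much the weights within $C'_k$ can deviate from being dominating in the given presentation.
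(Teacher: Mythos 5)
Your decomposition and your treatment of the ``easy'' case match the paper's proof almost line for line: your $\rho$ is exactly the paper's $\pi'$ (the paper then writes $\pi = \pi'' \circ \pi'$ and sets $\tuple c$ to be the $\pi'$-minor), your $I^*$ is the paper's $I'$, and in the case $|I^*| \geq 2N$ your application of \cref{thm:amainHARD} together with the failure of $\ACondition$ and the inequality $\sum_i|c_i| \geq \sum_j|b_j|$ is exactly the argument the paper gives. (Minor point: the paper takes $I'$ to be the union of \emph{all} $\equi$-classes fully contained in $I(f)$, so when $I(f)$ happens to be a union of full classes it takes $I' = I(f)$ and $I'' = \emptyset$, whereas you always drop the last class; this makes your $I^*$ occasionally smaller than it needs to be, pushing you into the hard case unnecessarily, but this is cosmetic.)

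The genuine gap is the case $|I^*| < 2N$, which you explicitly leave open with only a vague sketch, and that sketch is not how the paper closes it. The paper does not try to ``enlarge the dominating set'' inside $\tuple a$, nor does it exploit any symmetry on equal-monotonicity coordinates. Instead it reasons about the minor $\funfor{\tuple c}{t}$ directly: let $J$ be the largest $\equi$-class of the coordinates of $\funfor{\tuple c}{t}$. By \cref{prop:canonical}(3) applied to $\funfor{\tuple c}{t}$, every \kl{dominating} coordinate of $\twt{\tuple c}{t}$ lies in $J$, so if $J \subseteq \rho(I^*)$ one is done exactly as in the easy case. Otherwise there is some $j \notin I^*$ with $\rho(j) \in J$; since $\rho$ leaves $j$ and every $i \in I'' := I(f)\setminus I^*$ as singletons and $i \more j$ in $f$ (the coordinates of $I(f)$ are the top ones), \cref{obs:untouched} gives $\rho(i) \more \rho(j)$, hence $\rho(i) \in J$ for all $i \in I''$. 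This forces $|J| \geq |I''| > N$, contradicting \cref{claim:size} applied to $\funfor{\tuple c}{t}$. Note that \cref{claim:size} is applied to a single well-chosen minor of $f$ --- namely $\funfor{\tuple c}{t}$, which lies in $\Pol(\relstr A,\relstr B)$ because minions are minor-closed --- not to ``further-refined minors obtained by identifying higher classes of $f$'' as your sketch suggests. As written, your proposal is incomplete and the sketched route for the hard case would not obviously lead to a proof.
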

\begin{proof}
    Let $f,\tuple a, \pi,\tuple b$ and $t$ be as in the statement.
    We split $I(f)$ into two parts: let $I'$ be a union of $\equi$ classes fully included in $I(f)$ and $I''$ be the rest of $I(f)$~%
    (i.e. a proper subset of one of the $\equi$ classes).
    Next we factor $\pi$ into two \kl{minor maps}: let $\pi'(j) = \pi(j)$ if $\pi(j)\in\pi(I')$ and $\pi'(j) = j$ otherwise i.e.
    $\pi'$ performs all identifications that are allowed so that $\pi'$ is still \kl{covered by} $I'$~%
    (we did not define a codomain of $\pi'$,i.e., the indexing set of the $\pi'$-\kl{minor} of $f$, 
    but it is inessential).
    Further, let $\pi''$ perform all remaining identifications so that 
    \begin{equation*}
      \twt{\tuple a}{t}\xrightarrow[I(f)]{\pi'}\twt{\tuple c}{t}\xrightarrow[\pi'(I(f))]{\pi''}\twt{\tuple b}{t}
    \end{equation*}
    for some $\tuple c$.
    Note that $\pi''$ is \kl{injective on} $\pi'(I')$ but moreover 
    $\pi''(j)\in \pi''(\pi'(I')) = \pi(I')$ implies that $j\in\pi'(I')$ by the construction of $\pi'$.
    By these 2 observations, for any $j\in\pi'(I')$ we have $c_j = b_{\pi''(j)}$.

    In the next step, we will show that an element of $\pi'(I')$ \kl{dominates} $\twt{\tuple c}{t}$.
    Note that if $|I'|\geq 2N$ we are in position to apply~\cref{thm:amainHARD}:
    The \kl{minor map} $\pi'$ is \kl{injective on} $I'$ and \kl{covered by} $I'$ by construction.
    Moreover $I'$ \kl{dominates} $\twt{\tuple a}{t}$ by~\cref{it:canThree} of~\cref{prop:canonical}.
    If $|I'|\geq 2N = \frac{2}{\varepsilon}$ then all assumptions of~\cref{thm:amainHARD} hold,
    and the conclusion states that a member of $\pi'(I')$ \kl{dominates} $\twt{\tuple c}{t}$.
    Therefore we can assume that $|I''| > N$.

    Let $J$ be the greatest equivalence class of $\equi$ on $\funfor{\tuple c}{t}$.
    All the \kl{dominating} coordinates of $\twt{\tuple c}{t}$ belong to $J$ by~\cref{it:canThree} of~\cref{prop:canonical}, so if $J\subseteq \pi'(I')$ we are done.
    Suppose otherwise, that is, for some $j\notin I'$ we have $\pi'(j)\in J$.
    Note that, for every $i\in I''$ we have $i\more j$ in $f$;
    moreover as $\pi'$ leaves both coordinates $i$ and $j$ untouched~%
    (i.e. $|\pi'^{-1}(\pi'(i))| = |\pi'^{-1}(\pi'(j))| = 1$)
    we have that $\pi'(i)\more\pi'(j)$ in $\funfor{\tuple c}{t}$ by~\cref{obs:untouched}.
    But then $\pi'(I'')\subseteq J$ and $|J|\geq |\pi'(I'')| = |I''|>N$ which contradicts~\cref{claim:size}.

    We established that a coordinate in $\pi'(I')$, say a coordinate $k$, \kl{dominates} $\twt{\tuple c}{t}$.
    Since $\twt{\tuple c}{t}\in\minion M$ we immediately conclude that $|c_k| > \varepsilon\sum_i |c_i|$.
    By definitions of $I', \pi'$ and $\pi''$ we get $b_{\pi''(k)} = c_k$, and since $\sum_i|c_i|\geq \sum_i|b_i|$
    we finally obtain $b_{\pi''(k)} > \varepsilon\sum_i |b_i|$ and the claim is proved.
\end{proof}

With the claim at hand, we proceed to finish our proof. The objective is to show that the \kl{injective layered condition} holds.
Consider a suitable chain of \kl{minor maps} $f_1 \xrightarrow{\pi_{1,2}} \dots \xrightarrow{\pi_{N-1,N}} f_N$ of length $N$.
Fix an arbitrary presentation of $f_1 = \funfor{\tuple a}{t}$, and consider the derived presentations for $f_i$ i.e. $\twt{\tuple a}{t}\xrightarrow{\pi_{1,i}}\twt{\tuple a^i}{t}$.

Suppose, for a contradiction, that for every $i<j$ we have $\pi_{i,j}(I(f_i))\cap I(f_j)=\emptyset$;
since $\pi_{j,N}$ is \kl{injective on} $\pi_{i,j}(I(f_i))\cup I(f_j)$ we conclude that $\pi_{i,N}(I(f_i))\cap \pi_{j,N}(I(f_j)) = \emptyset$.
Thus the family $\{\pi_{i,N}(I(f_i))\}$ contains $N$ pairwise disjoint sets.
Applying~\cref{claim:propagateweight} to $f_i\xrightarrow[I(f_i)]{\pi_{i,N}}f_N$ with $f_i$ presented as $\funfor{\tuple a^i}{t}$
we conclude that 
\begin{equation*}
  \sum_{j\in\pi_{i,N}(I(f_i))}|a^N_j| > \varepsilon\sum_j |a_j^N|.
\end{equation*}
Now we have $N$ disjoint sets, each carrying more than $\varepsilon = \frac{1}{N}$ fraction of $\sum_j |a_j^N|$ which is a contradiction.

\section{Conclusions}
We introduced a new hardness condition,
which proved essential in providing a dichotomy for Boolean \kl(PCSP){templates} with \kl{polymorphisms} in the set of linear threshold functions.
Can it be, that the new condition captures the hardness of all the Boolean intractable PCSPs?
More modestly: can the conditional result for monotone Boolean \kl{minions}~\cite{brakensiek2021conditional} be improved by using this, or another, requirement providing unconditional hardness?
The example of the linear threshold functions shows that unconditional results are possible.
On the one hand, these functions are very well-behaved from the perspective of analysis of the Boolean function. 
On the other hand, they can be a stepping stone that allows one to tackle a larger family of functions.

Finally; our hardness result relies on a reduction from the PCP theorem. 
Is there alternative reasoning proving hardness by combinatorial means, along the lines of~\cite{barto2022babypcp}?


\begin{acks}
  We thank Libor Barto for bringing the smooth version of PCP to our attention.
  We would like to thank the anonymous reviewers for the many comments that allowed us to improve the manuscript.
\end{acks}

\bibliographystyle{ACM-Reference-Format}
\bibliography{main}

\appendix

\section{The minion $\ST$}
\label{app:ST}
In this section, we focus on the equivalent characterizations of the \kl{minion} $\ST$, introduced in \cref{sec:ex-st}. Among other things, we prove that the \kl{minion} is \emph{finitely related}, i.e. is a \kl{polymorphism} \kl{minion} of a finite \kl{PCSP template}.

\AP Recall that, for a Boolean function $f$, we call $(i, j)$ a \reintro{fixing pair} of coordinates if $f(\tuple a) = 0$ whenever $a_i = 0$ and $a_j = 1$, and $f(\tuple a) = 1$ whenever $a_i = 1$ and $a_i = 0$.

\thmST*

Let us give names to the tuples that are used to generate $\ST$:
\begin{equation*}
\agenst{m} = (1,\underbrace{2,-2,4,-4,\dotsc,2^m,-2^m}_{2m})
\end{equation*}
and recall that a Boolean function $f$ is \reintro{folded} if $f(1-\tuple b) = 1-f(\tuple b)$~%
(where $1-\tuple b$ is obtained by ``flipping'' every coordinate of $\tuple b$).
\begin{observation}
    Every member of $\ST$ is \kl{folded}.
\end{observation}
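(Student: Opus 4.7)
The plan is to prove the observation by induction on the structure of $\ST$: first verify that every generator is \kl{folded}, and then check that \kl{foldedness} is preserved under taking \kl{minors}. Since $\ST$ is generated (as a \kl{minion}) by the functions $\funforleq{\agenst{m}}{0}$, this suffices.

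For the base case, the key arithmetic observation about $\agenst{m} = (1, 2, -2, 4, -4, \dotsc, 2^m, -2^m)$ is that the sum of its entries telescopes to $1 + \sum_{k=1}^{m} (2^k - 2^k) = 1$. Consequently, for any Boolean tuple $\tuple b$,
\begin{equation*}
    \braket{\agenst{m} \mid 1-\tuple b} = \sum_i \agenst{m}_i - \braket{\agenst{m} \mid \tuple b} = 1 - \braket{\agenst{m} \mid \tuple b}.
\end{equation*}
Because every coefficient in $\agenst{m}$ is an integer, the value $\braket{\agenst{m} \mid \tuple b}$ is itself an integer, so ``$\le 0$'' and ``$> 0$'' are strictly complementary conditions, as are ``$\le 0$'' and ``$\ge 1$''. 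Therefore $\funforleq{\agenst{m}}{0}(1-\tuple b) = 0$ iff $1 - \braket{\agenst{m} \mid \tuple b} \le 0$ iff $\braket{\agenst{m} \mid \tuple b} \ge 1$ iff $\funforleq{\agenst{m}}{0}(\tuple b) = 1$, which is exactly the \kl{foldedness} condition.

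For the inductive step, suppose $f$ is \kl{folded} of arity $n$ and $g$ is a $\pi$-\kl{minor} of $f$ with $\pi\colon [n]\to[m]$. Then
\begin{equation*}
    g(1-\tuple y) = f(1-y_{\pi(1)},\dotsc,1-y_{\pi(n)}) = 1 - f(y_{\pi(1)},\dotsc,y_{\pi(n)}) = 1 - g(\tuple y),
\end{equation*}
so $g$ is \kl{folded}. Hence the class of \kl{folded} Boolean functions is closed under \kl{minors}, and the base case shows that it contains every generator of $\ST$; consequently it contains all of $\ST$. There is no real obstacle here, and the same reasoning will be reused below whenever we need a property of $\ST$ that is inherited from its generators via the minor relation.
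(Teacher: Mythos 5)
Your proof is correct and follows essentially the same approach as the paper: both reduce to the generators via the (standard) fact that foldedness is preserved under minors, and both rely on the key identity $\braket{\agenst{m}\mid\tuple b}+\braket{\agenst{m}\mid 1-\tuple b}=1$ together with integrality to conclude. You merely spell out the closure-under-minors step that the paper leaves implicit.
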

\begin{proof}
    It suffices to prove the claim for the generators.
    Take any $m$ and note that $\braket{\agenst{m}|\tuple b} + \braket{\agenst{m}| 1-\tuple b} = 1$
    and thus exactly one of these integers is $\leq 0$.
    This finishes the proof.
\end{proof}
\noindent We begin with an easy implication:
\begin{proposition}[$(1) \Rightarrow (3)$]
    Every non-unary member of $\ST$ has a \kl{fixing pair}.
\end{proposition}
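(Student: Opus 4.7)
The plan is to induct on $m$, showing that every non-unary $\pi$-\kl{minor} of the generator $\funforleq{\agenst{m}}{0}$ admits a \kl{fixing pair}. Since every $g\in\ST$ is such a minor for some $m$, this suffices. The proof rides on a small transfer lemma combined with the already-noted fact that the last two coordinates of every generator form a \kl{fixing pair}.

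The transfer lemma states: if $(i,j)$ is a \kl{fixing pair} of $f$ and $f\xrightarrow{\pi}g$ with $\pi(i)\neq\pi(j)$, then $(\pi(i),\pi(j))$ is a \kl{fixing pair} of $g$. Indeed, any $\tuple y$ with $y_{\pi(i)}=1$ and $y_{\pi(j)}=0$ pulls back via $\pi$ to an $\tuple x$ with $x_i=1$ and $x_j=0$, forcing $g(\tuple y)=f(\tuple x)=1$; the other side is symmetric. A direct calculation re-verifies the last-coordinate \kl{fixing pair} of the generator: with $x_{2m}=1$ and $x_{2m+1}=0$, the worst-case contribution of the remaining coordinates is $-\sum_{i=1}^{m-1}2^i=-(2^m-2)$, so $\braket{\agenst{m}}{\tuple x}\geq 2>0$, and the flipped setting symmetrically gives $\braket{\agenst{m}}{\tuple x}\leq -1$.

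For the inductive step, let $g$ be a non-unary $\pi$-\kl{minor} of $\funforleq{\agenst{m}}{0}$. If $\pi(2m)\neq\pi(2m+1)$, the transfer lemma immediately delivers $(\pi(2m),\pi(2m+1))$ as a \kl{fixing pair} of $g$. Otherwise $\pi(2m)=\pi(2m+1)$, so the weights $+2^m$ and $-2^m$ cancel identically inside $g$, and $g$ coincides with the $\pi|_{[2m-1]}$-\kl{minor} of $\funforleq{\agenst{m-1}}{0}$; the inductive hypothesis then produces the required \kl{fixing pair}. The base $m=0$ is trivial: every minor of $\funforleq{\agenst{0}}{0}$ is a projection onto a single coordinate, which pairs with any other coordinate of $g$ (guaranteed by non-unarity) to form a \kl{fixing pair}.

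The only subtlety sits in the second case of the step: one must check that the cancellation really yields a minor of $\funforleq{\agenst{m-1}}{0}$ of the same arity as $g$ (possibly carrying an extra non-essential coordinate---namely the common image of $2m$ and $2m+1$), and, in the degenerate subcase where that lower-arity minor is essentially unary, fall back on the same ``project and pair'' trick used in the base case. Everything else---the transfer lemma, the generator verification, and the first case---amounts to a one-line computation.
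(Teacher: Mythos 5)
Your proof is correct and is essentially the paper's argument repackaged as an induction on $m$: both arguments locate the topmost pair $(2i,2i+1)$ that $\pi$ keeps distinct, show its image is a \kl{fixing pair} by a direct weight estimate, and treat the fully-cancelled situation by observing the minor collapses to the projection onto $\pi(1)$. The transfer lemma you state explicitly is what the paper performs inline (and also notes in a footnote later), and your recursive case split is in fact slightly tidier than the paper's written dichotomy between ``$\pi(2)=\dots=\pi(2m+1)$'' and ``some $\pi(2i)\neq\pi(2i+1)$'', which is not literally exhaustive, though the missing middle case (all pairs identified but not all images equal) still yields a projection and is subsumed by your base/reduction step.
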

\begin{proof}
    Any member of $\ST$ is a \kl{minor} of one of the generators: let $\pi$ be a \kl{minor map}
    such that $\funforleq{\agenst{m}}{0}\xrightarrow{\pi} g$.

    If $\pi(2)=\dotsb =\pi(2m+1)$ then $g(\tuple x) = x_{\pi(1)}$ and, since $g$ is not unary, the \kl{fixing pair} for $g$ is $(\pi(1),?)$~%
    (where $?$ is any coordinate different from $\pi(1)$).
    Now, let $i$ be the largest number so that $\pi(2i)\neq \pi(2i+1)$ we claim that $(\pi(2i),\pi(2i+1))$ is a \kl{fixing pair} for $g$.
    Indeed, let $\tuple b$ is such that $b_{\pi(2i)} = 1$
    and $b_{\pi(2i+1)} = 0$ and let $\tuple c$ be such that $c_j = b_{\pi(j)}$ for all $j$.
    Clearly $g(\tuple b) = \funforleq{\agenst{m}}{0}(\tuple c)$ but then $\sum_{j>2i+1}^{2m+1} c_j\agenst{m}_j =0$,
    $c_{2i}\agenst{m}_{2i} + c_{2i+1}\agenst{m}_{2i+1} = 2^i$
    and $\sum_{j<2i}c_j\agenst{m}_j > -2^i$ i.e. 
    $\braket{\agenst{m}|\tuple c} > 0$ and thus $g(\tuple b) =1$. The other condition follows immediately since $g$ is \kl{folded}.
\end{proof}
\noindent Another implication is similarly simple:
\begin{proposition}[$(3) \Rightarrow (1)$]
    If $f$ is \kl{idempotent} and every non-unary \kl{minor} of $f$ has a \kl{fixing pair} then $f\in\ST$.
\end{proposition}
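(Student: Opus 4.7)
The plan is to proceed by induction on the arity $n$ of $f$. The base case $n = 1$ is immediate: the only idempotent unary Boolean function is the identity $x \mapsto x$, which is the $c$-minor of $\funforleq{\agenst{1}}{0}$ obtained by collapsing all three coordinates to one (since $\agenst{1}$ sums to $1 > 0$). For the inductive step $n \geq 2$, observe that $f$ itself is a non-unary minor of $f$, so $f$ has a fixing pair $(i, j)$. Let $g$ be the $(n-1)$-ary minor of $f$ obtained by identifying coordinates $i$ and $j$. Then $g$ is idempotent, and every non-unary minor of $g$ is also a minor of $f$, so $g$ satisfies the inductive hypothesis. Thus $g \in \ST$ and admits a presentation $g = \funforleq{\tuple b}{0}$ with $\tuple b$ a $c$-minor of $\agenst{m}$ for some $m$; write $b_{\star}$ for the coefficient on the merged coordinate and $b_k$ ($k \in [n] \setminus \{i, j\}$) for the remaining coefficients.

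The main construction lifts this to a $c$-minor presentation of $f$ inside $\agenst{M}$ for a sufficiently large $M > m$. Let $\rho \colon [2M+1] \to [n]$ be the map that (i) agrees with the original $c$-minor map on $[2m+1]$, except that any coordinate of $\agenst{m}$ originally assigned to the merged coordinate of $g$ is redirected to $i$, and (ii) on the new coordinates of $\agenst{M}$, sends each $+2^k$ coordinate ($m < k \leq M$) to $i$ and each $-2^k$ coordinate to $j$. The resulting tuple $\tuple c$ satisfies $c_i = b_{\star} + (2^{M+1} - 2^{m+1})$, $c_j = -(2^{M+1} - 2^{m+1})$, and $c_k = b_k$ for $k \notin \{i, j\}$. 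On any input with $a_i = a_j$, the combined contribution on positions $i, j$ is $(c_i + c_j) a_i = b_{\star} a_i$, so $\funforleq{\tuple c}{0}$ agrees with $\funforleq{\tuple b}{0} = g$, and hence with $f$, on such inputs.

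The main obstacle is verifying that $\funforleq{\tuple c}{0}$ correctly reproduces the prescribed fixing pair outputs on inputs with $a_i \neq a_j$. This reduces to arranging that the dominant coefficients on $i$ and $j$ overwhelm the residual weight: when $a_i = 1, a_j = 0$, the sum is at least $c_i - \sum_{k \colon b_k < 0} |b_k|$, which is strictly positive once $2^{M+1} - 2^{m+1}$ exceeds the fixed total residual weight; when $a_i = 0, a_j = 1$, the sum is at most $c_j + \sum_{k \colon b_k > 0} b_k \leq 0$, by the symmetric bound. Any $M$ with $2^M > \sum_\ell |b_\ell|$ suffices. Since $\tuple c$ is a $c$-minor of $\agenst{M}$ via $\rho$, this shows $f \in \ST$ and completes the induction.
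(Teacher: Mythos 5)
Your proof is correct and follows essentially the same inductive strategy as the paper's: merge the fixing pair $(i,j)$, apply the inductive hypothesis to the resulting minor $g$, and lift the generator presentation of $g$ back to one of $f$ by routing heavy new coordinates of a larger generator to $i$ and $j$. The only real difference is in the lift: the paper steps from $\agenst{m-1}$ to $\agenst{m}$, adding exactly one new pair $(2^m,-2^m)$ directed to the fixing pair, and uses the geometric structure $2^m>2+4+\cdots+2^{m-1}$ to see that this single pair already overwhelms the residual weight; you instead step to an arbitrary $\agenst{M}$ with $M$ large and send the whole batch of new $\pm$-weights to $(i,j)$, which is less economical but avoids leaning on the tight geometric-series bound. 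One small imprecision at the end: you need $M>m$ as well as $2^M>\sum_\ell|b_\ell|$ (the latter does not by itself force $M>m$ when the $c$-minor produces heavy cancellation), but your setup already stipulates $M>m$, so this is only a wording slip in the final sentence.
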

\begin{proof}
    The proof is an induction on arity of $f$. If the arity is $1$ then $f(x) = x = \funforleq{\agenst{0}}{0}$ so the claim holds.

    Let $f$ be of arity $n$ and let, without loss of generality, $(1,n)$
    be a \kl{fixing pair} for $f$.
    Let $g(x_1,\dotsc,x_{n-1}) = f(x_1,\dotsc,x_{n-1},x_1)$;
    by inductive assumption there exists $m$ and a \kl{minor map} $\pi$ such that $\funforleq{\agenst{m-1}}{0}\xrightarrow{\pi}g$.
    We define $\rho : [2m+1] \to [n]$
    \begin{equation*}
        \rho(i) = \begin{cases}
            1&\text{ if } i = 2m\\
            n&\text{ if } i = 2m+1\\
            \pi(i)&\text{ otherwise}
        \end{cases}
    \end{equation*}
    and claim that $\funforleq{\agenst{m}}{0}\xrightarrow{\rho} f$.
    Let $\tuple b$ be arbitrary and $c_{i} = b_{\rho(i)}$.
    If $b_1 = 1$ and $b_n=0$ then $f(\tuple b) = 1$;
    but then $c_{2m}=1$ and $c_{2m+1} = 0$ and $\funforleq{\agenst{m}}{0}(\tuple c) = 1$.
    Similarly if $b_1=0$ and $b_n=1$ then $c_{2m}=0, c_{2m+1}=1$ and 
    $f(\tuple b) = 0 = \funforleq{\agenst{m}}{0}(\tuple c)$. 
    If $b_1=b_n$ then $c_{2m} = c_{2m+1}$ and 
    \begin{equation*}
    \braket{\agenst{m}|\tuple c} = \sum_{i=1}^{2m-1} \agenst{m-1}_ib_{\pi(i)}.
    \end{equation*}
    Note that, by the choice of $\pi$, the last sum is $\leq 0$ if and only if $g(b_1,\dotsc,b_{n-1}) = 0$.
    By the definition, and the case we are working in, $g(b_1,\dotsc,b_{n-1}) = f(b_1,\dotsc,b_{n-1}, b_1) = f(\tuple b)$ and the proof is finished.
\end{proof}

\noindent\AP Recall that the \kl(PCSP){template} $(\reintro*\STl, \reintro*\STr)$ was defined as
\begin{align*}
    \intro*\STl &= (R\,\hphantom{AE_6}, \kinl{2}{4}\+\,) \\
    \intro*\STr &= (\NAE 6, \NAE 4)
\end{align*}
where
\begin{equation*}
    R =
    \relation{| Y Y | B B B |}{
        \hline
        1 & 0 & 1 & 0 & 0\\
        0 & 1 & 1 & 0 & 0\\
        1 & 0 & 0 & 1 & 0\\
        0 & 1 & 0 & 1 & 0\\
        1 & 0 & 0 & 0 & 1\\
        0 & 1 & 0 & 0 & 1\\
        \hline
    } \hspace{4em}
    \kinl{2}{4} = 
    \relation{| B B B B B B |}{
        \hline
        1 & 0 & 1 & 0 & 1 & 0\\
        1 & 0 & 0 & 1 & 0 & 1\\
        0 & 1 & 1 & 0 & 0 & 1\\
        0 & 1 & 0 & 1 & 1 & 0\\
        \hline
    }
\end{equation*}

\noindent In the remainder of this section we investigate $\Pol(\STl, \STr)$, which is not \kl{idempotent}: e.g. unary operation $x\mapsto 1-x$ belongs to $\Pol(\STl, \STr)$. 
This simple change will simplify the considerations. 

We observe that any \kl{polymorphism} of $(\STl, \STr)$ is also \kl{compatible with} a few simpler relations:

\begin{observation}
    If $f \in \Pol(\STl, \STr)$, then
    \begin{itemize}
        \item $f \in \Pol(\NAE 2, \NAE 2)$ i.e. $f$ is folded
        \item $f \in \Pol(\kinl{1}{3}, \NAE 3)$
    \end{itemize}
\end{observation}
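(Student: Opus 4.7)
The plan is to derive both claims by invoking compatibility of $f$ with $R \to \NAE{6}$, using two disjoint sets of columns of $R$ as the gadgets. The Yellow pair of columns of $R$ and the Blue triple of columns were clearly designed with exactly these two reductions in mind.

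For the folded property, fix $\tuple u = (u_1,\dots,u_n) \in \{0,1\}^n$. For each $i$, pick the $i$-th argument from $R$ to be column $1$, namely $(1,0,1,0,1,0)$, if $u_i = 1$, and column $2$, namely $(0,1,0,1,0,1)$, if $u_i = 0$. These two choices are mutual negations, so the odd rows of the resulting $6 \times n$ matrix spell $\tuple u$ and the even rows spell $1-\tuple u$. Applying the polymorphism row-by-row gives
\begin{equation*}
\bigl(f(\tuple u),\ f(1-\tuple u),\ f(\tuple u),\ f(1-\tuple u),\ f(\tuple u),\ f(1-\tuple u)\bigr) \in \NAE{6},
\end{equation*}
and non-constancy forces $f(\tuple u) \neq f(1-\tuple u)$, i.e. $f \in \Pol(\NAE{2}, \NAE{2})$.

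For the second item, given $n$ tuples $\tuple v_1,\dots,\tuple v_n \in \kinl{1}{3}$, let $\tuple u_j \in \{0,1\}^n$ denote the sequence of their $j$-th entries. For each $i$ pick column $2+j$ of $R$ as the $i$-th argument, where $j$ is the position of the unique $1$ in $\tuple v_i$. Each such column is constant on the pairs of rows $\{1,2\}$, $\{3,4\}$, $\{5,6\}$, and, read along the odd rows, columns $3,4,5$ of $R$ are precisely the three standard basis vectors of $\{0,1\}^3$. Consequently rows $2j-1$ and $2j$ of the matrix both equal $\tuple u_j$ for $j=1,2,3$, and applying $f$ coordinatewise yields
\begin{equation*}
\bigl(f(\tuple u_1),\ f(\tuple u_1),\ f(\tuple u_2),\ f(\tuple u_2),\ f(\tuple u_3),\ f(\tuple u_3)\bigr) \in \NAE{6},
\end{equation*}
which forces $(f(\tuple u_1), f(\tuple u_2), f(\tuple u_3)) \in \NAE{3}$, as required.

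The only non-trivial aspect is reading $R$'s column structure correctly: columns $1$ and $2$ implement a complementation gadget, while columns $3,4,5$ implement a $1$-in-$3$ gadget that is doubled along rows so that non-constancy in $\NAE{6}$ descends cleanly to non-constancy in $\NAE{3}$. Once the two gadgets have been identified inside $R$, both conclusions follow by a single application of the polymorphism property; no further case analysis or auxiliary relation (such as $\kinl{2}{4}$) is needed.
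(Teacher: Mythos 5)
Your proof is correct. For the second item (compatibility with $\kinl{1}{3}\to\NAE 3$), your argument is the same as the paper's: both exploit that columns $3,4,5$ of $R$ are the six-tuples $(x,x,y,y,z,z)$ with exactly one of $x,y,z$ equal to $1$, i.e.\ that $\kinl{1}{3}$ and $\NAE 3$ are pp-definable from $R$ and $\NAE 6$ via the substitution $(xyz)\mapsto(xxyyzz)$.

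For foldedness you take a genuinely different route. The paper derives it from the \emph{second} constraint of the template, noting that $x\neq y \iff (xxyy)\in\kinl{2}{4} \iff (xxyy)\in\NAE 4$, while you derive it from $R\to\NAE 6$ alone, using columns $1$ and $2$ of $R$: your construction is the pp-definition $x\neq y \iff (x,y,x,y,x,y)\in R \iff (x,y,x,y,x,y)\in\NAE 6$. Both gadgets are valid and of comparable length; the mild advantage of yours is that it extracts both conclusions from the single relation pair $(R,\NAE 6)$, making the $(\kinl{2}{4},\NAE 4)$ pair unnecessary for this particular observation. This is a cosmetic improvement rather than a substantive one, since $(\kinl{2}{4},\NAE 4)$ is still needed elsewhere in the paper's analysis of $\Pol(\STl,\STr)$, but your reading of the column structure of $R$ is exactly right and the substitution arguments are airtight.
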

\begin{proof}
    We argue that these relations are pp-definable in $(\STl, \STr)$. For $\NAE 2$, it suffices to notice that
    \begin{equation*}
        x \neq y \iff (xxyy) \in \kinl{2}{4} \iff (xxyy) \in \NAE 4
    \end{equation*}
    For the latter, observe that
    \begin{align*}
        (xyz) \in \kinl{1}{3} &\iff (xxyyzz) \in R \\
        (xyz) \in \NAE 3 &\iff (xxyyzz) \in \NAE 6
    \end{align*}
\end{proof}

The remainder of this appendix is devoted to the proof of the following statement:
\begin{theorem}
\label{thm:STfixing}
    The Boolean function $f$ belongs to $\Pol(\STl, \STr)$ if and only if its unary \kl{minor} is not constant, and every other \kl{minor} has a \kl{fixing pair}.
\end{theorem}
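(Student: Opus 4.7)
The plan is to reduce \Cref{thm:STfixing} to the already-established equivalence $(1) \Leftrightarrow (3)$ of \Cref{thm:ST}, together with two remaining bridges: $(1) \Rightarrow (2)$ (each generator of $\ST$ is a polymorphism of $(\STl, \STr)$) and $(2) \Rightarrow (3)$ (every idempotent polymorphism satisfies the fixing-pair condition). A preliminary observation used throughout is that $\Pol(\STl, \STr)$ is closed under $f \mapsto 1-f$: both $\NAE 4$ and $\NAE 6$ are symmetric under global complementation, and the columnwise complement of an output tuple remains non-constant. Dually, $(i,j)$ is a fixing pair of $f$ iff $(j,i)$ is a fixing pair of $1-f$, directly from the definition.

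For the forward direction, suppose $f \in \Pol(\STl, \STr)$. The unary minor $h(x) := f(x, \dots, x)$ is non-constant, for otherwise applying $f$ coordinate-wise to $n$ copies of the tuple $t^1 = (1,0,1,0,1,0) \in R$ yields the constant 6-tuple $(h(1), h(0), h(1), h(0), h(1), h(0))$, contradicting $\NAE 6$-compatibility. For a non-unary minor $g$ of $f$, minion closure gives $g \in \Pol(\STl, \STr)$, and the preceding argument applied to $g$ forces $g$'s unary minor to be non-constant, so either $g$ or $1-g$ is idempotent. Replacing $g$ by $1-g$ if needed (and swapping the coordinates of the resulting fixing pair at the end), we may assume $g \in \Pol(\STl, \STr) \cap \Pol(\{0\}) \cap \Pol(\{1\})$, and the implication $(2) \Rightarrow (3)$ of \Cref{thm:ST} yields a fixing pair for $g$.

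For the reverse direction, a unary $f$ with non-constant unary minor is either the identity or the negation, both of which preserve $(R, \NAE 6)$ and $(\kinl{2}{4}, \NAE 4)$ since no tuple in $R$ or $\kinl{2}{4}$ is constant. If $f$ has arity $\geq 2$, then $f$ is its own minor and so has a fixing pair, and its unary minor is non-constant by assumption; again replacing $f$ by $1-f$ if necessary, we may assume $f$ is idempotent. The implication $(3) \Rightarrow (1)$ of \Cref{thm:ST} (already proved in this appendix) places $f \in \ST$, and the inclusion $\ST \subseteq \Pol(\STl, \STr)$—verified by checking that each generator $\funforleq{\agenst{m}}{0}$ preserves both $(R, \NAE 6)$ and $(\kinl{2}{4}, \NAE 4)$ through a finite weighted-sum case analysis—completes the proof.

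The main obstacle is the implication $(2) \Rightarrow (3)$ of \Cref{thm:ST} invoked in the forward direction: every idempotent $f \in \Pol(\STl, \STr)$ of arity $n \geq 2$ has a fixing pair. My approach is to work with the set function $g : 2^{[n]} \to \{0,1\}$ defined by $g(S) := f(\mathbf{1}_S)$, which satisfies $g(\emptyset) = 0$ and $g([n]) = 1$. Compatibility with $(R, \NAE 6)$ decomposes, via the yellow/blue split of the five tuples of $R$, into the combinatorial constraint that for every ordered partition $[n] = Y_0 \sqcup Y_1 \sqcup B_1 \sqcup B_2 \sqcup B_3$ the 6-tuple $\bigl(g(Y_1 \cup B_i), g(Y_0 \cup B_i)\bigr)_{i=1,2,3}$ is non-constant; the analogous reading of $(\kinl{2}{4}, \NAE 4)$ gives non-constancy over 4-part covers of $[n]$ in which every coordinate is covered exactly twice. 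From these constraints I would extract a distinguished pair $(p, q)$—the candidates for the ``heaviest positive'' and ``heaviest negative'' influence, modelled on the role of the last two coordinates in $\funforleq{\agenst{m}}{0}$—and verify that forcing $a_p = 0, a_q = 1$ propagates through the partition analysis to compel $g = 0$, and symmetrically for $a_p = 1, a_q = 0$; isolating the right pair and pushing through this case analysis is the technically demanding part of the argument.
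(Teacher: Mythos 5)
Your reverse direction and your reduction of the forward direction are sound: the observation that $\Pol(\STl,\STr)$ is closed under $f\mapsto 1-f$ (and that fixing pairs transform correspondingly) lets you pass to the idempotent case, the non-constancy of unary minors follows correctly from $\NAE 6$, and routing the reverse direction through $(3)\Rightarrow(1)$ plus $\ST\subseteq\Pol(\STl,\STr)$ works, though it is more roundabout than the paper's direct argument (which just observes that the two distinct tuples sitting on a fixing pair of the collapsed minor must disagree in both directions somewhere, preventing a constant output). Note, however, that ``$\ST\subseteq\Pol(\STl,\STr)$ by a finite weighted-sum case analysis'' is not literally finite: you must verify it for all arities and all choices of tuples, which requires an explicit weight argument rather than a case check.

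The real problem is that you have not proved the statement you correctly identified as the heart of the matter: every idempotent $f\in\Pol(\STl,\STr)$ of arity $\geq 2$ has a fixing pair. Citing ``$(2)\Rightarrow(3)$ of \Cref{thm:ST}'' is circular, since the paper derives $(2)\Leftrightarrow(3)$ \emph{from} \Cref{thm:STfixing}; you acknowledge this and promise an independent argument, but what you give is a reformulation (set function $g$ on $2^{[n]}$, partitions encoding $R$ and $\kinl{2}{4}$ compatibility), not a proof. The crucial step --- ``extract a distinguished pair $(p,q)$ modelled on the heaviest coordinates of $\funforleq{\agenst m}{0}$'' --- presupposes the very structure you are trying to establish. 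There is no reason yet why an abstract $g$ satisfying your partition constraints should admit such a pair, and no procedure offered to locate it. The paper's actual proof occupies most of the appendix: it first shows every coordinate is monotone or antimonotone, defines $\domainup f$ and $\domaindown f$, proves three lemmas about how fixing pairs propagate through minors ($f_{ij}$) under various monotonicity patterns, and then runs an induction on arity with a case split on the sizes $|\domainup f|$, $|\domaindown f|$ (both empty, one a singleton, one of size $\geq 4$, the $2$-$2$, $2$-$3$, and $3$-$3$ cases, etc.), each case either producing a fixing pair or deriving a contradiction from $\NAE 4$ or $\NAE 6$. Your proposal contains none of this and gives no competing mechanism; as written the forward direction is a gap, not a proof.
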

\noindent
Note that this theorem implies the $(2) \Leftrightarrow (3)$ equivalence in \cref{thm:ST}: it suffices to add both unary singleton relations to $\relstr A$ and $\relstr B$ to ensure \kl{idempotency}. 
Back to a proof of~\Cref{thm:STfixing};
one implication is much easier:
\begin{proof}[Proof of $\impliedby$]
Take a Boolean function $f$, of arity $n$, such that unary \kl{minor} is not constant and every other \kl{minor} has a \kl{fixing pair}. Take any relation $S$ in $\relstr A$ and any choice of tuples $\tuple r^1,\dotsc,\tuple r^n$ in $S$. We want to show that $f(\tuple r^1, \dots, \tuple r^n)$ belongs to the counterpart of $S$ in $\relstr B$, i.e. $\NAE{?}$.
Consider a \kl{minor} of $f$, say $g$, defined by the following $\pi$: $\pi(i)=\pi(j)$ if and only if $\tuple r^i = \tuple r^j$.

If the \kl{minor} is unary we get the conclusion immediately
(as no constant tuple belongs to a relation in $\relstr A$).

If it is not, we have a \kl{fixing pair} of coordinates for $g$;
the tuples that appear on these two coordinates are different, say $S \ni\tuple r\neq \tuple s\in S$.
The structure of relations in $\relstr A$ implies that there exist $i,j$ such that $r_i = 0 = s_j$ and $s_i = 1 = r_j$.
Since $\tuple r$ and $\tuple s$ appear on a \kl{fixing pair} of coordinates,
the resulting tuple cannot be constant.
\end{proof}

The remainder of this section is devoted to proving the other implication. We begin with a lemma that will allow us to structure the proof,
\begin{lemma}
    If $f\in\Pol(\relstr A,\relstr B)$ then every coordinate of $f$ is \kl{monotone} or \kl{antimonotone}.
\end{lemma}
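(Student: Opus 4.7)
The plan is to argue by contradiction. Suppose some coordinate of $f$, say the first, is neither \kl{monotone} nor \kl{antimonotone}. By definition this yields witness tuples $\tuple u, \tuple v \in \{0,1\}^{n-1}$ with
\begin{equation*}
  f(0, \tuple u) = 0,\quad f(1, \tuple u) = 1,\quad f(0, \tuple v) = 1,\quad f(1, \tuple v) = 0.
\end{equation*}

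First I would invoke the earlier observation that $f \in \Pol(\NAE{2}, \NAE{2})$, i.e.\ $f$ is \kl{folded}, so $f(1-\tuple x) = 1 - f(\tuple x)$ for every $\tuple x$. Applying this, the tuple $1 - \tuple u$ is also a ``monotone'' witness at coordinate $1$ and $1 - \tuple v$ is also an ``antimonotone'' one, giving me four tuples in $\{0,1\}^n$ whose $f$-values I control.

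Next I would construct a violation of compatibility of $f$ with the pair $(\kinl{2}{4}, \NAE{4})$ by considering the four input tuples
\begin{equation*}
    \tuple a^1 = (0, \tuple u),\quad \tuple a^2 = (0, 1-\tuple u),\quad \tuple a^3 = (1, \tuple v),\quad \tuple a^4 = (1, 1-\tuple v).
\end{equation*}
At coordinate $1$ the column is $(0,0,1,1) \in \kinl{2}{4}$; at any other coordinate $j$ the column reads $(u_j, 1-u_j, v_j, 1-v_j)$, which always has exactly two $1$'s and therefore lies in $\kinl{2}{4}$ \emph{regardless} of how $\tuple u$ and $\tuple v$ happen to agree at $j$. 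Thus the quadruple is a valid input for polymorphism compatibility, yet all four outputs are $0$ by the witness properties — contradicting $\NAE{4}$.

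The main (essentially the only) insight is that pairing each witness with its bitwise complement automatically produces $\kinl{2}{4}$-columns at every coordinate, so one never needs a combinatorial relationship between the supports of $\tuple u$ and $\tuple v$. A more naïve attempt using the pair $(\kinl{1}{3}, \NAE{3})$ would require $\tuple u$ and $\tuple v$ to have disjoint supports, which cannot in general be arranged — so exploiting the ``self-complementing'' structure of $\kinl{2}{4}$ is the decisive step.
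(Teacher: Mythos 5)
Your proof is correct and follows essentially the same route as the paper's: the paper likewise constructs four rows $(0,\tuple a),(0,1-\tuple a),(1,1-\tuple b),(1,\tuple b)$ from the ``neither monotone nor antimonotone'' witnesses, observes that foldedness forces all four outputs to be $0$, and notes that every column has exactly two $1$'s so it lies in $\kinl{2}{4}$, contradicting compatibility with $(\kinl{2}{4},\NAE{4})$. The only difference is a cosmetic reordering of the four rows.
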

\begin{proof}
    \sloppy
    Let $n$ be the arity of $f$.
    We proceed by contradiction. 
    Let a coordinate,
    without loss of generality coordinate $1$,
    be neither.
    Then there exist tuples $\tuple a ,\tuple b$ so that 
    $0=f(0,\tuple a)<f(1,\tuple a) = 1$
    and $1 = f(0,\tuple b) > f(1,\tuple b) = 0$. 
    Note that $f$ applied to $(0,1,0,1),(a_1,a_1,b_1,b_1),\dotsc,(a_n,a_n,b_n,b_n)$ produces
    \begin{equation*}
        f~
        \relation{| E E E E |}{
            0 & $a_1$ & $\dots$ & $a_n$\\
            1 & $a_1$ & $\dots$ & $a_n$\\
            0 & $b_1$ & $\dots$ & $b_n$\\
            1 & $b_1$ & $\dots$ & $b_n$\\
        }
        =
        \relation{| E |}{
            0\\
            1\\
            1\\
            0\\
        }
    \end{equation*}
    Let $\tuple a'$ and $\tuple b'$ be the tuples obtained by flipping each bit in $\tuple a$ and $\tuple b$ respectively. Since $f$ is \kl{folded}, we have that
    \begin{equation*}
        f~
        \relation{| E E E E |}{
            0 & $a_1$ & $\dots$ & $a_n$\\
            0 & $a'_1$ & $\dots$ & $a'_n$\\
            1 & $b'_1$ & $\dots$ & $b'_n$\\
            1 & $b_1$ & $\dots$ & $b_n$\\
        }
        =
        \relation{| E |}{
            0\\
            0\\
            0\\
            0\\
        } \notin \NAE 4
    \end{equation*}
    which contradicts the \kl{compatibility with} $(\kinl{2}{4}, \NAE 4)$.
\end{proof}
\noindent\AP Thus each coordinate of $f\in\Pol(\relstr A,\relstr B)$ is \kl{monotone} or \kl{antimonotone} (or both) i.e. the domain of $f$ is $\intro*\domainup{f}\cup\intro*\domaindown{f}$. 

Next we introduce a new notation: if $f$ is $n$-ary i.e. $f(x_1,\dotsc,x_n)$ then $f_{ij}$, for $i\neq j$, is the $n-1$ ary \kl{minor}
obtained from $f$ by identifying the coordinates $i$ and $j$:
\begin{equation*}
    f(x_1,\dotsc,x_{i-1},x_{ij},x_{i+1},x_{i+1},\dotsc,
    x_{j-1},x_{ij},x_{j+1},\dotsc,x_n).
\end{equation*}
The variables of the new function are $x_k$ for $k\neq i,j$
and the new variable $x_{ij}$~%
(thus we quietly part ways with indexing the variables of a function by natural numbers only).

The proof is by induction on the arity of $f\in\Pol(\relstr A,\relstr B)$. For each such $f$ we will define a \kl{fixing pair} $(i,j)$
satisfying $f(\tuple a) = 1$ if $a_i=1$ and $a_j=0$ and $f(\tuple a) = 0$ if $a_i=0$ and $a_j=1$. 
To simplify the proof we slightly abuse the notion of a \kl{fixing pair}; we
allow pairs $(i,\boxtimes)$~(and $(\boxtimes,j)$) which mean that $f(\tuple a)=1$ if $a_i=1$, and $f(\tuple a) = 0$ if $a_i = 0$~($f(\tuple a)=1$ if $a_i=0$, and $f(\tuple a) = 0$ if $a_i = 1$ respectively) i.e. $f(\tuple a) = a_i$~($f(\tuple a) = 1-a_j$ respectively).
Throughout a proof, whenever a \kl{fixing pair} is mentioned, the pair can be of this special shape.
We are ready to begin the induction. Let $f\in\Pol(\relstr A,\relstr B)$.

Only two unary functions $x\mapsto x$ and $x\mapsto 1-x$ are compatible with $(\NAE 2,\NAE 2)$; $(1,\boxtimes)$ is a \kl{fixing pair} for the first and $(\boxtimes,1)$ for the second.

Similarly the only four binary $(x_1,x_2)\mapsto x_1$,
$(x_1,x_2)\mapsto x_2$,
$(x_1,x_2)\mapsto 1-x_1$,
$(x_1,x_2)\mapsto 1-x_2$ are \kl{compatible with} $(\NAE 2,\NAE 2)$ and the \kl{fixing pairs} are $(1,\boxtimes), (2,\boxtimes), (\boxtimes,1), (\boxtimes,2)$ respectively.

Let $f$ be ternary. 
Note that $f(\tuple x)$ has a \kl{fixing pair} if and only if $1-f(\tuple x)$ has a \kl{fixing pair} and thus we can assume wlog that $f$ is \kl{idempotent}.
Consider $f$ applied to $(100,010,001)$
\begin{equation*}
    f~
    \relation{| E E E |}{
        \emc 1 & 0 & 0\\
        0 & \emc 1 & 0\\
        0 & 0 & \emc 1\\
    }
    =
    \relation{| E |}{
        $f(1, 0, 0)$\\
        $f(0, 1, 0)$\\
        $f(0, 0, 1)$\\
    }
    \in \NAE 3;
\end{equation*}
the resulting tuple cannot be constant due to \kl{compatibility with} $(\kinl{1}{3},\NAE 3)$; if it is $100$ then $(1,\boxtimes)$ is \kl{fixing} for $f$ and similarly for other tuples with a single $1$.
In the remaining case we assume, wlog, that the result is $110$ and note that, since $f$ is \kl{folded}, we have $f = \funforleq{\agenst{1}}{0}$ and $(2,3)$ is a \kl{fixing pair} for $f$.

We take a short break from the induction to prove two lemmas:
\begin{lemma}\label{lem:fixorder}
    Let $(i,j)$ be a \kl{fixing pair} for $f\in\Pol(\relstr A,\relstr B$); if 
    $i\in\domaindown{f}$~(or $j\in\domainup{f}$) then $(\boxtimes,j)$~($(i,\boxtimes)$ respectively) is also a \kl{fixing pair} for $f$.
\end{lemma}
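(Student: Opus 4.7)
The plan is to unpack the two statements into assertions about the value of $f$ at every tuple, and then propagate the \kl{fixing pair} conditions using the monotonicity/antimonotonicity in the relevant coordinate. The two cases are dual, so I would write one out in detail and remark that the other follows by the symmetric argument (or by replacing $f$ with $1-f$).

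First I would recall what a \kl{fixing pair} $(i,j)$ states: for every tuple $\tuple a$ with $a_i=1, a_j=0$ we have $f(\tuple a)=1$, and for every tuple $\tuple a$ with $a_i=0, a_j=1$ we have $f(\tuple a)=0$, where ``every tuple'' means the remaining coordinates are unconstrained. The conclusions $(\boxtimes,j)$ and $(i,\boxtimes)$ are the stronger assertions $f(\tuple a)=1-a_j$ and $f(\tuple a)=a_i$ respectively, i.e. $f$ depends on a single coordinate.

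Now suppose $i\in\domaindown{f}$. Fix any assignment $\tuple a'$ to the coordinates other than $i,j$. Setting $a_j=0$ and $a_i=1$, the \kl{fixing pair} gives $f=1$; \kl{antimonotonicity} in $i$ then forces $f=1$ also when $a_j=0$ and $a_i=0$, since increasing $a_i$ can only weakly decrease $f$. Symmetrically, setting $a_j=1$ and $a_i=0$ gives $f=0$, and \kl{antimonotonicity} in $i$ forces $f=0$ when $a_j=1$ and $a_i=1$. Thus for every $\tuple a$, $f(\tuple a)=1-a_j$, which is exactly the statement that $(\boxtimes,j)$ is a \kl{fixing pair}. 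The case $j\in\domainup{f}$ is the mirror image: \kl{monotonicity} in $j$ propagates $f=1$ from $(a_i=1,a_j=0)$ up to $(a_i=1,a_j=1)$ and $f=0$ from $(a_i=0,a_j=1)$ down to $(a_i=0,a_j=0)$, yielding $f(\tuple a)=a_i$ everywhere.

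I do not anticipate a real obstacle here; the only subtle point is to use that \kl{monotonicity}/\kl{antimonotonicity} in a coordinate is a pointwise property (it holds for every fixing of the other coordinates), so the propagation carries the \kl{fixing pair} value to all inputs rather than just the tuples already covered by the hypothesis. This is precisely the content of the definition given earlier in \Cref{sec:notation}, so the proof is genuinely a short verification.
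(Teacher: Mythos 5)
Your proof is correct and follows the same short propagation argument as the paper: fix the other coordinates, use the fixing pair at $(a_i,a_j)\in\{(1,0),(0,1)\}$, then push via (anti)monotonicity to the remaining two settings. The only cosmetic difference is in the second case, where you argue directly with monotonicity in $j$, while the paper derives it from the first case by passing to $1-f$; both are equally valid and equally short.
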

\begin{proof}
    Let $f$ be as in the statement, and we assume without loss of generality, that 
    $(1,2)$ is a \kl{fixing pair} for $f$ and that $1\in\domaindown{f}$.
    For every $\tuple a$ we have:
    $f(1,0,\tuple a) = 1$ but also $f(0,0,\tuple a) = 1$ as coordinate $1$ is \kl{antimonotone}. 
    Similarly $f(0,1,\tuple a) = 0$ implies that  
    $f(1,1,\tuple a) = 0$. 
    This implies that $(\boxtimes,2)$ is a \kl{fixing pair} for $f$ and the first half of the lemma is proved.

    For the second half, we have $2\in\domainup{f}$, and we apply first half to the function $\tuple a\mapsto 1-f(\tuple a)$. 
    This new function has a \kl{fixing pair} $(2,1)$ and $2$ is an \kl{antimonotone} coordinate. 
    We get $(\boxtimes,1)$ to be a \kl{fixing pair} for the new function and consequently $(1,\boxtimes)$ a \kl{fixing pair} for~$f$.
\end{proof}
\begin{lemma}\label{lem:monotoneminor}
    Let $(i,j)$ be a \kl{fixing pair} for $f_{kl}$~(with $f\in\Pol(\relstr A,\relstr B))$.
    If $\{k,l\}\subseteq \domainup{f}$ and
    $i\neq\{kl\}$ then $f$ has a \kl{fixing pair}.
\end{lemma}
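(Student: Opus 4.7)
The plan is to show that essentially the same fixing pair that witnesses $f_{kl}$ also witnesses $f$. Specifically, if $j \neq \{kl\}$, I will argue $(i,j)$ is already a fixing pair of $f$; if $j = \{kl\}$, I will show $(i,k)$ works instead (equivalently, $(i,l)$). The key tool is that both $k$ and $l$ are monotone in $f$: from any $\tuple a$ with $a_k \neq a_l$ I can pass to some $\tuple a'$ with $a'_k = a'_l$ by modifying just one of these two coordinates — that is, to an input that the minor $f_{kl}$ actually sees — while keeping control on $f(\tuple a')$ versus $f(\tuple a)$ through the monotonicity.

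In the case $j \neq \{kl\}$, I would argue $(i,j)$ is fixing for $f$ as follows. Given $\tuple a$ with $a_i = 1$ and $a_j = 0$, the sub-case $a_k = a_l$ gives $f(\tuple a) = 1$ directly from the fixing pair of $f_{kl}$. Otherwise (say $a_k = 1, a_l = 0$, the other case being symmetric), form $\tuple a'$ from $\tuple a$ by setting $a'_k := 0$. Then $a'_k = a'_l = 0$, so $f(\tuple a') = 1$, while monotonicity of $k$ in $f$ forces $f(\tuple a') \leq f(\tuple a)$, hence $f(\tuple a) = 1$. The opposite direction ($a_i = 0, a_j = 1 \Rightarrow f(\tuple a) = 0$) is symmetric: push both $a_k$ and $a_l$ upward to $1$ and invoke monotonicity in the other inequality.

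For the case $j = \{kl\}$, the fixing hypothesis on $f_{kl}$ reads: $f(\tuple a) = 1$ whenever $a_i = 1$ and $a_k = a_l = 0$, and $f(\tuple a) = 0$ whenever $a_i = 0$ and $a_k = a_l = 1$. For any $\tuple a$ with $a_i = 1$ and $a_k = 0$, setting $a''_l := 0$ yields a $\tuple a''$ with $f(\tuple a'') = 1$ by the hypothesis, and monotonicity of $l$ in $f$ gives $f(\tuple a'') \leq f(\tuple a)$, so $f(\tuple a) = 1$. The direction $a_i = 0, a_k = 1$ is handled symmetrically by pushing $a_l$ up to $1$. Hence $(i,k)$ is a fixing pair for $f$. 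The degenerate sub-cases where $i$ or $j$ equals $\boxtimes$ (i.e., $f_{kl}$ is a literal projection on a coordinate of itself) are handled by the same two-sided monotonicity argument and produce an analogous $\boxtimes$-fixing-pair for $f$.

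I do not anticipate any real obstacle: the proof is a careful case split driven entirely by the monotonicity of $k$ and $l$ in $f$. What the lemma is ``for'' is clear from context — it is the lifting tool that, together with its antimonotone counterpart and the base cases, allows the proof of \cref{thm:STfixing} to construct a fixing pair for an $n$-ary $f$ from a fixing pair for a suitable $(n-1)$-ary identification minor $f_{kl}$.
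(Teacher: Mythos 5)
Your proof is correct and uses the same essential idea as the paper's: exploit monotonicity of $f$ in $k$ and $l$ to push an arbitrary input to one that the identification minor $f_{kl}$ actually sees. The paper's write-up is slightly more economical in two places where you argue from scratch: it first invokes Lemma~\ref{lem:fixorder} to replace $(i,\{kl\})$ by $(i,\boxtimes)$, thereby reducing at once to the case $j \neq \{kl\}$ (whereas you handle $j = \{kl\}$ separately and derive $(i,k)$ directly), and it derives the ``$a_i=0,\,a_j=1 \Rightarrow f(\tuple a)=0$'' half from foldedness rather than by a second monotonicity push. Both shortcuts are purely cosmetic; your more self-contained case analysis is an equally valid route to the same conclusion.
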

\begin{proof}
    Note that either of $i,j$ can be $\boxtimes$. 
    The first observation is that we can assume $j\neq \{kl\}$; indeed the coordinate $\{kl\}\in\domainup{f_{kl}}$
    and if $j=\{kl\}$ we can use~\Cref{lem:fixorder} to substitute it with $\boxtimes$.

    Thus we can assume, without loss of generality, that $i=1,j=2,k=3$ and $l=4$~(if $i$ or $j$ are $\boxtimes$ the reasoning is identical, but the coordinate vanishes).
    Take any $\tuple a$;
    $f(1,0,0,0,\tuple a) = f(1,0,1,1,\tuple a) = 1$ by the definition.
    But then $f(1,0,0,1,\tuple a) = f(1,0,1,0,\tuple a) = 1$ as coordinates $3$ and $4$ are \kl{monotone} and $f(1,0,0,0,\tuple a) = 1$. 
    Since $f$ is \kl{folded}, this shows that $(1,2)$ is a \kl{fixing pair} for $f$ as required.
\end{proof}
Going back to the inductive proof. 
We work with $f\in\Pol(\relstr A,\relstr B)$ of arity $\geq 4$. 
We additionally assume that $\domainup{f}\cap\domaindown{f} = \emptyset$;
otherwise $f$ does not depend on some $i$~%
(in fact any $i\in\domainup{f}\cap\domaindown{f}$)
and a \kl{fixing pair} for $f$ can be recovered from a \kl{fixing pair} for the \kl{minor} $f_{ij}$ for any $j\neq i$.
The proof splits into cases:

In the first case $\domaindown{f} =\emptyset$ or $\domainup{f}=\emptyset$. We assume, wlog, the first case~%
(in the second one we can substitute $f(\tuple x)$ with $1-f(\tuple x)$ and reduce to the first). 
By inductive assumption we have a \kl{fixing pair} for $f_{12}$; unless the first coordinate of the pair is $\{12\}$ we are done by~\Cref{lem:monotoneminor}; moreover the second one can be taken to be $\boxtimes$, 
by~\Cref{lem:fixorder}, as all coordinates of $f_{12}$ are \kl{monotone}.
By the same token $(\{34\},\boxtimes)$ is a \kl{fixing pair} for $f_{34}$. 
But then $f(0,0,1,1,?,\dotsc,?)$ is both $0$~(by properties of $f_{12}$) and $1$~(by $f_{34}$) which is a contradiction.
Before we proceed to another case we need a final lemma:

\begin{lemma}\label{lem:updownminor}
    Let $f\in\Pol(\relstr A,\relstr B)$ and let $k\in\domainup{f},l\in\domaindown{f}$.
    If $(i,\{kl\})$ is a \kl{fixing pair} for $f_{kl}$ then $(i,l)$ is a \kl{fixing pair} for $f$;
    if $(\{kl\},j)$ is \kl{fixing} for $f_{kl}$ then $(k,j)$ is for $f$.
\end{lemma}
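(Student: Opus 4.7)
The plan is to unpack the definition of the identification minor $f_{kl}$ and then use the \kl{monotonicity} of coordinate $k$ and the \kl{antimonotonicity} of coordinate $l$ to propagate the fixing behaviour from $f_{kl}$ up to the whole of $f$. Evaluating $f_{kl}$ at a tuple $\tuple a$ with $a_{\{kl\}} = c$ is, by construction, the same as evaluating $f$ at the tuple obtained from $\tuple a$ by placing $c$ in both positions $k$ and $l$. So the hypothesis of the first half unfolds as follows: $f(\tuple b) = 1$ whenever $b_i = 1$ and $b_k = b_l = 0$, and $f(\tuple b) = 0$ whenever $b_i = 0$ and $b_k = b_l = 1$. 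The goal is to drop the constraint on $b_k$.

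To that end, I would fix an arbitrary $\tuple b$ with $b_i = 1$ and $b_l = 0$, and let $\tuple b'$ be obtained from $\tuple b$ by resetting $b'_k := 0$. Then $b'_k = b_l = 0$, so the unpacked hypothesis yields $f(\tuple b') = 1$; since $k \in \domainup{f}$, i.e.\ $k$ is \kl{monotone}, we have $f(\tuple b) \ge f(\tuple b') = 1$, hence $f(\tuple b) = 1$. Symmetrically, for $\tuple b$ with $b_i = 0$ and $b_l = 1$, I would set $b'_k := 1 = b_l$; the hypothesis then gives $f(\tuple b') = 0$, and \kl{monotonicity} of $k$ yields $f(\tuple b) \le f(\tuple b') = 0$. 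Together these two cases witness that $(i, l)$ is a \kl{fixing pair} for $f$.

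The second half follows the same template but modifies the $l$-coordinate instead of the $k$-coordinate, using that $l \in \domaindown{f}$ is \kl{antimonotone}. Concretely, given $\tuple b$ with $b_k = 1$ and $b_j = 0$, I would set $b'_l := 1 = b_k$ to reach a tuple on which the hypothesis forces $f(\tuple b') = 1$; \kl{antimonotonicity} of $l$ then reverses the inequality (moving $b'_l$ down to $b_l$ cannot decrease $f$), giving $f(\tuple b) \ge f(\tuple b') = 1$. The dual case uses $b'_l := 0$. I expect no substantive obstacle here; the main thing to track is the direction of each monotonicity step. Finally, the edge cases where $i$ or $j$ equals $\boxtimes$ (the extended \kl{fixing pair} convention introduced earlier) go through by exactly the same argument, with the now-vacuous condition on the $\boxtimes$ slot simply dropped from the hypothesis and from the conclusion.
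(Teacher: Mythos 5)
Your proof is correct. It follows the same core idea as the paper's --- fix the $k$ (respectively $l$) slot to match the other identified coordinate and propagate via (anti)monotonicity --- but you carry it out more uniformly: you prove both directions of the \kl{fixing pair} property for $(i,l)$ by two monotonicity arguments and then handle the second half by a symmetric antimonotonicity argument. The paper instead proves only the ``$f(\tuple b)=1$'' direction of the first half via monotonicity, appeals to \kl{foldedness} of $f$ to get the other direction for free, and dispatches the second half entirely by passing to $1-f$. Your version is a bit more elementary in that it never invokes foldedness, which is a property specific to members of $\Pol(\STl,\STr)$ and not of \kl{monotone}/\kl{antimonotone} Boolean functions in general; the paper's version is a bit shorter in prose since it exploits that symmetry. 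Both are valid, and your handling of the $\boxtimes$ edge cases is also consistent with the paper's convention.
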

\begin{proof}
    Let $f,k,l$ be as in the statement;
    we consider the case of $(i,\{kl\})$ being a \kl{fixing pair} for $f_{kl}$~%
    (the other case reduces to this one, 
    by taking $1-f(\tuple x)$ instead of $f(\tuple x)$).

    Thus $(i,\{kl\})$ is \kl{fixing} for $f_{kl}$~%
    (note that $i$ can be $\boxtimes$)
    and we assume wlog that $i=1, k=2, l=3$~%
    (if $i$ is $\boxtimes$, the first coordinate vanishes in the reasoning).
    For every $\tuple a$ we know that
    $f(1,0,0,\tuple a) = 1$, but since $k=2$ is \kl{monotone} $f(1,1,0,\tuple a) = 1$ as well.
    Since $f$ is \kl{folded}, $(1,3)$ is a \kl{fixing pair} for $f$. 
\end{proof}
Going back to the proof we consider the case of $|\domaindown{f}|=1$ or $|\domainup{f}|= 1$.
As before it suffices to deal with the first case; 
we assume wlog that $\domaindown{f} = \{1\}$.
If the coordinate $\{12\}$ appears in any \kl{fixing pair} of the \kl{minor} $f_{12}$, then we are done by \Cref{lem:updownminor}. Otherwise, choose any \kl{fixing pair} of $f_{12}$; since all remaining coordinates of the \kl{minor} are \kl{monotone}, we can assume wlog that the pair is $(3, \boxtimes)$.
We apply the same reasoning to \kl{minor} $f_{13}$
and obtain a \kl{fixing pair} $(i,\boxtimes)$.
We claim that $i=2$; otherwise~(say $i=4$)
$f(0,0,0,1,?,\dotsc,?)$ is both $0$~(by $f_{12}$) and $1$~(by $f_{13}$).
Now consider $f_{14}$~%
(note that arity of $f$ is at least $4$).
If the \kl{fixing pair} for $f_{14}$ is $(2,\boxtimes)$ then $f(0,0,1,0,?,\dotsc,?)$
is both $1$ and $0$~%
(by $f_{12}$ and $f_{14}$).
If the \kl{fixing pair} for $f_{14}$ is $(3,\boxtimes)$ the reasoning is the same, and if it contains a new coordinate, say $5$,
we have $f(0,0,1,0,0,?,\dotsc,?)$ both $1$ and $0$~(by $f_{12}$ and $f_{15}$) -- this contradiction finishes the case.

In the next case we deal with $|\domainup{f}|\geq 4$ or $|\domaindown{f}|\geq 4$;
say that $|\domainup{f}|\geq 4$.
Note that if $|\domaindown{f}|<2$ we are done by the previous cases.
Take pairwise different elements $i,j,k,l\in\domainup{f}$;
by~\Cref{lem:monotoneminor} we are done unless a \kl{fixing pair} for $f_{ij}$ is $(\{ij\},p)$ similarly for $f_{kl}$ the pair is $(\{kl\},q)$. 
If $q\neq p$ we choose any $\tuple a$ so that $a_i=a_j=a_q=0$ and $a_p=a_k=a_l=1$,
and $f(\tuple a)$ must be both $0$ and $1$ which is a contradiction.
Thus \kl{fixing pairs} for $f_{ij}$ and $f_{kl}$ share the second coordinate, say $p$.
Similarly \kl{fixing pairs} for $f_{ik}$ and $f_{jl}$ share the second coordinate, say $q$.
We claim that $p = q$; otherwise consider $f$ applied as follows
\begin{equation*}
    f~\raisebox{5.5pt}{$$
    \begin{tabular}{| E E E E E E E E E |}
        \rowcolor{gray} $i$ & $j$ & $k$ & $l$ & $p$ & $q$ & ? & \dots & ?\\ \hline
        1 & 1 & 0 & 0 & 0 & 1 & ? & \dots & ?\\
        0 & 0 & 1 & 1 & 0 & 1 & ? & \dots & ?\\
        1 & 0 & 1 & 0 & 1 & 0 & ? & \dots & ?\\
        0 & 1 & 0 & 1 & 1 & 0 & ? & \dots & ?\\
    \end{tabular}$$}
    =
    \relation{| E |}{
        1\\
        1\\
        1\\
        1\\
    }
\end{equation*}
which contradicts \kl{compatibility with} $(\kinl{2}{4}, \NAE 4)$.
Hence the second coordinates of the \kl{fixing pairs} for $f_{ij}, f_{ik},f_{jl}$ and $f_{kl}$ are the same.
Thus for every $i,j\in\domainup{f}$ the pair
$(\{ij\},1)$ is \kl{fixing} for $f_{ij}$~(where, wlog, $1$ is the common second coordinate of the \kl{fixing pairs}). 
Let $2$ be a member of $\domainup{f}$;
a \kl{fixing pair} for $f_{12}$ needs to avoid $\{12\}$ or we are done by~\Cref{lem:updownminor}; say the pair is $(3,4)$ with $3\in\domainup{f}$ and $4\in\domaindown{f}$~%
(as otherwise we could substitute $3$ or $4$ with $\boxtimes$).
Wlog $5,6\in\domainup{f}\setminus\{2,3\}$~%
(we use $|\domainup{f}|\geq 4$ here)
but then $f(0,0,0,1,1,1,?,\dotsc,?)$ is both $1$~(by the fact that $(\{56\},1)$ is \kl{fixing} for $f_{56}$) and $0$~%
(since $(3,4)$ is \kl{fixing} for $f_{12}$).

There are just a few remaining cases.
If $|\domainup{f}| = |\domaindown{f}| = 2$ we
assume wlog that $\domainup{f} = \{1,2\}$ and $\domaindown{f} = \{3,4\}$. 
The following application
\begin{equation*}
    f~
    \relation{| E E E E |}{
        0 & 1 & 0 & 1\\
        1 & 0 & 0 & 1\\
        0 & 1 & 1 & 0\\
        1 & 0 & 1 & 0\\
    }
    \in \NAE 4
\end{equation*}
implies that e.g. $f(0,1,0,1) = 1$ which,
using \kl{monotonicity} of the coordinates $1$ and $4$ and \kl{foldedness}, implies that $(2,3)$ is a \kl{fixing pair} for $f$.

If $|\domainup{f}| = 3$ and $|\domainup{f}| =2$~%
(the other $2,3$ is solved by flipping $f$)
we assume that 
$\domainup{f} = \{1,2,3\}$ and $\domaindown{f} = \{4,5\}$ and use:
\begin{equation*}
    f~
    \relation{| E E E E E |}{
        0 & 0 & 1 & 0 & 1\\
        0 & 0 & 1 & 1 & 0\\
        0 & 1 & 0 & 0 & 1\\
        0 & 1 & 0 & 1 & 0\\
        1 & 0 & 0 & 0 & 1\\
        1 & 0 & 0 & 1 & 0\\
    }
    \in \NAE 6
\end{equation*}
to conclude that e.g.
$f(0,0,1,0,1) = 1$ as before; using \kl{monotonicity} of coordinates $1,2,5$ we deduct that $(3,4)$ is a \kl{fixing pair} for $f$.

In the final case $|\domainup{f}| = |\domaindown{f}| = 3$.
Wlog we assume that $\domainup{f} =\{1,2,3\}$ and $\domaindown{f} = \{4,5,6\}$.
We have $(\{12\},i)$ is \kl{fixing} for $f_{12}$
and $(\{13\},j)$ \kl{fixing} for $f_{13}$
and $(\{23\},k)$ \kl{fixing} for $f_{23}$, or we are done by~\Cref{lem:monotoneminor}.
We can assume,
using~\Cref{lem:fixorder},
that $i,j,k$ all belong to $\domaindown{f}$~%
(if one is in $\domainup{f}$ we change it to $\boxtimes$ using~\Cref{lem:fixorder}; and if one is $\boxtimes$ we change it to $4$).
If $i,j,k$ are pairwise different we get an immediate contradiction
\begin{equation*}
    f~
    \relation{| E E E E E E |}{
        \poc 0 & \poc 0 & 1 & \nec 1 & 0 & 0\\
        \poc 0 & 1 & \poc 0 & 0 & \nec 1 & 0\\
        1 & \poc 0 & \poc 0 & 0 & 0 & \nec 1\\
    }
    =
    \relation{| E |}{
        0\\
        0\\
        0\\
    }
    \notin \NAE 3
\end{equation*}
If $i=j=k$, say equal to $4$, we consider $f_{56}$ and get a \kl{fixing pair} $(1,\{56\})$~%
(the first element of the pair is \kl{monotone}, wlog equal to $1$, and the second is $\{56\}$ or~\Cref{lem:monotoneminor} finishes the reasoning).
But then $f(0,1,1,0,1,1)$ is both $0$~(by $f_{56}$) and $1$~(by $f_{23}$).

In the final case $i=j=4$ while $k=5$ (wlog). Like in the previous case, we consider $f_{56}$ and get a \kl{fixing pair} $(?, \{56\})$ where $?$ is $1, 2$ or $3$. We claim that it has to be $1$; assume otherwise that $(2, \{56\})$ is \kl{fixing} (the reasoning for $(3, \{56\})$ is identical). But then $f(1, 0, 1, 0, 1, 1)$ is both $0$ (by $f_{56}$) and $1$ (by $f_{13}$). Applying a similar argument, we get that $(2, \{46\})$ or $(3, \{46\})$ is \kl{fixing} for $f_{46}$, say the former one. But then the following application
\begin{equation*}
    f~
    \relation{| E E E E E E |}{
        \poc 1 & 0 & \poc 1 & \nec 0 & 1 & 1\\
        0 & \poc 1 & \poc 1 & 1 & \nec 0 & 1\\
        \poc 1 & 0 & 0 & 1 & \nec 0 & \nec 0\\
        0 & \poc 1 & 0 & \nec 0 & 1 & \nec 0\\
    }
    =
    \relation{| E |}{
        1\\
        1\\
        1\\
        1\\
    }
    \notin \NAE 4
\end{equation*}
contradicts \kl{compatibility with} $(\kinl{2}{4}, \NAE 4)$.

The proof is finished.

\section{Hardness by \kl{injective layered choice}}
\label{app:smoothLC}
This section is devoted to proving that the \kl{injective layered condition} implies hardness:
\begin{theorem}
    \label{th:injlayhardness}
    Let $\relstr A, \relstr B$ be a Boolean \kl(PCSP){template}.
    If $\Pol(\relstr A,\relstr B)$ has a \kl{choice function} $I$ that satisfies the \kl{injective layered condition},
    then $\PCSP(\relstr A, \relstr B)$ is NP-hard.
\end{theorem}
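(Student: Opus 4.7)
The plan is to reduce from the Smooth Layered Label Cover problem (SLLC)---a layered strengthening of Khot's smooth Label Cover~\cite{khot2002smooth}---to the decision version of $\PCSP(\relstr A, \relstr B)$. An SLLC instance has $L$ layers of variables together with projections $\pi_e \colon [n_u] \to [n_v]$ between every pair of layers; smoothness guarantees that, for every vertex $v$ and every small set $S$ of labels at $v$, a random outgoing projection restricts to an injection on $S$ with probability arbitrarily close to one. The hardness assertion states that, for appropriate parameters, it is NP-hard to distinguish a fully satisfiable SLLC instance from one in which no bounded multi-labeling covers more than a negligible fraction of $L$-chains.

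First I would set up the gadget reduction. For each LC vertex $v$ with label domain $[n_v]$, introduce a block of variables indexed by $A^{n_v}$, and impose in this block all relations of the power $\relstr A^{n_v}$. Then for each LC edge $e = (u,v)$ with projection $\pi_e$, identify variables across the two blocks according to $\pi_e$, so that an assignment respecting all block and edge constraints corresponds exactly to a family $\{f_v\}_v \subseteq \Pol(\relstr A, \relstr B)$ (when the target is $\relstr B$) satisfying $f_u \xrightarrow{\pi_e} f_v$ along every edge. Completeness is routine: a satisfying LC labeling $\lambda$ yields the solution $f_v(\tuple a) := a_{\lambda(v)}$, i.e.\ the $\lambda(v)$-th \kl{dictator} on $\relstr A$, and these dictators are automatically compatible with the minor relations.

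For soundness, suppose the PCSP instance has a solution in $\relstr B$, producing a family $\{f_v\}$ satisfying the minor relations above. Apply the \kl{choice function} $I$ provided by the \kl{injective layered choice condition}, obtaining sets $I(f_v) \subseteq [n_v]$ of size at most some constant $K$. Treat the multi-labeling $v \mapsto I(f_v)$ as a candidate cover. Pick a random chain of $L = L(K)$ projections spanning all layers: by smoothness, with probability close to one each $\pi_{i,i+1}$ in the chain is injective on the cumulatively accumulated set $I(f_{v_i}) \cup \bigcup_{j < i} \pi_{j,i}(I(f_{v_j}))$. On any such injective chain, the \kl{injective layered choice condition} forces $\pi_{i,j}(I(f_{v_i})) \cap I(f_{v_j}) \neq \emptyset$ for some $i < j$, meaning the chain is weakly covered. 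Choosing the smoothness parameter strictly below the SLLC soundness threshold then contradicts the NO-case, completing the reduction.

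The main obstacle is the quantitative interplay between the smoothness parameter and the chain length: smoothness must be strong enough that, with probability exceeding the SLLC soundness bound, \emph{every} projection along a random $L$-chain is injective on the accumulated image of $I$; and $L$ must exceed the bound supplied by the \kl{injective layered choice condition}. This is precisely the role of the \emph{smooth} (rather than plain) layered PCP; the remainder of the argument parallels the reduction from layered LC in~\cite{brandts2021llc}, with smoothness inserted to validate the injectivity hypothesis in the second item of the condition.
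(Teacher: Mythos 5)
Your proposal matches the paper's argument in all essential respects: reduce from smooth layered Label Cover, use the long-code/minor-condition gadget so that a $\relstr B$-solution yields a minor-compatible family $\{f_v\}\subseteq\Pol(\relstr A,\relstr B)$, define the multi-labeling via the choice function $I$, and invoke smoothness to guarantee that a random chain is (with high probability) injective on the accumulated image of $I$, so the injective layered condition forces weak satisfaction. The paper merely packages the gadget step through Barto et al.'s "$L$-layered $\minion M$-gap minor condition" problem and spells out the quantitative bookkeeping (uniform distribution of chain prefixes, the $\delta M^5$ union bound, and the explicit choice $\frac{1}{M^2}\ge\delta M^3+\varepsilon$), all of which you gesture at without writing out.
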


The basis for the hardness will be an extension of approximate Label Cover problem, called \emph{smooth} Label Cover (introduced by Khot in \cite{khot2002smooth}), in its layered version.
The proof follows the standard reduction strategy formalized in \cite{Bible}:
first an NP-hard approximate Label Cover problem is reduced to a promise problem of satisfying a \emph{minor condition} (i.e., a finite set of minor identities) in $\Pol(\relstr A, \relstr B)$, which is in turn log-space reducible to $\PCSP(\relstr A, \relstr B)$.
The second step is proved using \emph{long code} framework (see e.g. \cite{Bible} for details).
The first step is facilitated by the \kl{choice function} $I$ which satisfies the combinatorial properties outlined in the \kl{injective layered condition}.
In the remaining part of this section we first derive the hardness of smooth layered Label Cover problem, and then prove the aforementioned first step of the reduction.

\AP
The starting point is the approximate smooth Label Cover problem in a \emph{bipartite} version.
A bipartite Label Cover \intro(LC){instance} consists of
\begin{itemize}
    \item a biregular%
    \footnote{Note that in the general statement of the LC problem, instances are not required to be regular. But it is well-known that the problem remains NP-hard even after adding this assumption (e.g. \cite{khot2002smooth}).}
    graph $(Y \cup Z, E)$;
    \item domains $[l]$ and $[r]$ where $l, r \in \mathbb{N}$;
    \item a family of constraints $\pi_{y \to z} : [l] \to [r]$, one per each edge $(y, z) \in E$.
\end{itemize}
\AP An \intro(LC){assignment} $\sigma$ maps each variable $y \in Y$ to some value in $[l]$, and each $z \in Z$ to some value in $[r]$. We say that a constraint $\pi_{y \to z}$ is \reintro(LC){satisfied} by $\sigma$ if $\pi_{y \to z}(\sigma(y)) = \sigma(z)$.

\AP
Finally an LC instance is $\delta$-\intro(LC){smooth} if for any fixed $y \in Y$ and $S \subseteq [l]$ holds%
\footnote{Khot originally defines smoothness only for sets $S$ of size 2, but it's trivial to derive the general case using Union Bound.}
\begin{equation*}
    \Pr_{z:(y, z) \in E}\left[ |\pi_{y \to z}(S)| < |S| \right] \le \delta |S|^2
\end{equation*}

We now state the hardness of approximability of smooth Label Cover problem:
\begin{theorem}[Theorem 3.5 in \cite{khot2002smooth} with $L = 1$]
\label{thm:gapsmoothLC}
    For any $\varepsilon, \delta > 0$ and sufficiently large $N$ it is NP-hard to distinguish $\delta$-\kl(LC){smooth} LC \kl(LC){instances} with domain sizes $\le N$ which are fully \kl(LC){satisfiable} from those where not even an $\varepsilon$-fraction of all constraints can be \kl(LC){satisfied}.
\end{theorem}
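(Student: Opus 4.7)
The plan is to invoke Theorem~3.5 of Khot's paper \cite{khot2002smooth} as a black box, specialized to the parameter $L = 1$ which corresponds precisely to our bipartite (non-layered) smooth LC formulation. Khot's proof, at a very high level, starts from a constant-gap hard instance of 3-SAT furnished by the PCP theorem, constructs a tailored 2-prover 1-round game whose verifier check is built around a Hadamard-type code so that most subsets of labels are mapped injectively by the projection constraints, and amplifies the soundness gap from a constant down to the desired $\varepsilon$ by parallel repetition. Biregularity of the underlying bipartite graph and the domain-size bound $N$ are immediate from the construction. I would not attempt to reprove any of this here; I would simply lift it as an off-the-shelf gap-hardness result.

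The one piece of genuine verification I would carry out is that our stated notion of smoothness agrees with Khot's. Khot phrases smoothness only for size-$2$ subsets $S \subseteq [l]$: for each fixed $y$, the probability over a uniformly random neighbour $z$ that $\pi_{y\to z}$ collapses a fixed pair is at most $\delta$. Our definition asks, for every $S \subseteq [l]$, that
\begin{equation*}
    \Pr_{z:(y,z)\in E}\bigl[|\pi_{y\to z}(S)| < |S|\bigr] \le \delta |S|^2.
\end{equation*}
These are equivalent up to a constant factor: the event on the left forces some pair in $S$ to collide under $\pi_{y\to z}$, and a union bound over the $\binom{|S|}{2}$ pairs yields the general statement with a factor of $|S|^2/2$. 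Absorbing this factor by replacing Khot's $\delta$ with $\delta/2$ gives exactly our hypothesis, so the footnote in the definition is justified.

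The step I expect to be most delicate is in fact purely bookkeeping rather than mathematical: since this is a citation, the only real risk is a silent mismatch between Khot's conventions and ours. In particular I would verify that the soundness fraction is measured over edges (not over vertices), that the parameter $N$ uniformly bounds both $l$ and $r$ as we require, and --- most importantly for the downstream reduction --- that the smoothness parameter $\delta$ and the soundness gap $\varepsilon$ can be chosen \emph{independently}. The independence matters because later use in this appendix will push $\delta \to 0$ (to accommodate the transitivity required by the layered extension of LC) while $\varepsilon$ stays fixed at a value dictated by the target PCSP template via the long-code reduction. Once these parameter alignments are confirmed, the theorem follows immediately from \cite{khot2002smooth}, and the proof is complete.
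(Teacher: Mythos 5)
Your proposal matches the paper's treatment: the theorem is stated as a direct citation of Khot's Theorem~3.5 with $L=1$, and the only bridging needed --- passing from Khot's pairwise smoothness to the general-$S$ version via a union bound over $\binom{|S|}{2}$ pairs --- is exactly what the paper relegates to the footnote in the definition of $\delta$-smoothness. (One small nit: the union bound already gives $\delta|S|^2/2 \le \delta|S|^2$, so no rescaling of $\delta$ to $\delta/2$ is needed; your parameter-independence and biregularity checks are otherwise the right things to confirm.)
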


\AP
The reduction from this NP-hard problem will be used to derive the hardness of smooth layered Label Cover. An $L$-layered Label Cover \intro(LLC){instance} consists of
\begin{itemize}
    \item a sequence of disjoint sets $X_1, \dots, X_L$ (called layers) of variables;
    \item domains $[c_1], \dots, [c_L]$ where $c_i \in \mathbb{N}$;
    \item edges $E_{i,j} \subseteq X_i \times X_j$ between each pair of layers $i < j$; the bipartite graph induced on any $E_{i,j}$ is biregular;
    \item a family of constraints $\phi_{x_i \to x_j} : [c_i] \to [c_j]$, one per each edge $(x_i, x_j) \in E_{i,j}$.
\end{itemize}
Additionally we say that an instance is \intro(LLC){transitive} if there is an edge $(x_i, x_k) \in E_{i,k}$ whenever $(x_i, x_j) \in E_{i,j}$ and $(x_j, x_k) \in E_{j,k}$.

\AP
As before, an \intro(LLC){assignment} $\sigma$ maps each variable $x_i \in X_i$ to some value in $[c_i]$. A sequence of variables $(x_1, \dots, x_L)$ is a \intro{chain} if $x_i \in X_i$ and each pair $(x_i, x_j) \in E_{i,j}$ is an edge. A chain is said to be \reintro{weakly satisfied} if at least one constraint $\pi_{x_i \to x_j}$ in the chain is \kl(LC){satisfied}.

\AP
In order to generalize the \kl(LC){smoothness} property, we look at adjacent layers.
Formally, an $L$-layered LC \kl(LLC){instance} is $\delta$-\intro(LLC){smooth} if for any fixed $i < L$, $x_i \in X_i$ and $S \subseteq [c_i]$ holds
\begin{equation*}
    \Pr_{x_{i+1}:(x_i, x_{i+1}) \in E_{i,i+1}}\left[ |\phi_{x_i \to x_{i+1}}(S)| < |S| \right] \le \delta |S|^2
\end{equation*}

\begin{theorem}\label{thm:gapsmoothLLC}
    For any $\varepsilon, \delta > 0$, integer $L \ge 2$ and sufficiently large $N$ it is NP-hard to distinguish $\delta$-\kl(LLC){smooth} \kl(LLC){transitive} $L$-layered LC \kl(LLC){instances} with domain sizes $\le N$ which are fully \kl(LC){satisfiable} from those where not even an $\varepsilon$-fraction of all chains can be \kl{weakly satisfied}.
\end{theorem}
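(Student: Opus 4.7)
The plan is a straightforward layering reduction from bipartite smooth LC (Theorem~\ref{thm:gapsmoothLC}) to the $L$-layered smooth transitive version, combining Khot's smoothness with the layered construction of~\cite{brandts2021llc}.

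Given a bipartite $\delta_0$-\kl(LC){smooth} LC instance $\mathcal I = (Y \cup Z, E, \{\pi_{y \to z}\})$ with domain sizes $l, r$, I would build an $L$-layered instance $\mathcal J$ as follows. Variables in layer $X_i$ are $(L-1)$-tuples $\vec v = (v_1, \ldots, v_{L-1})$ with $v_k \in Y$ for $k \ge i$ and $v_k \in Z$ for $k < i$; the domain of $\vec v$ is the product of the individual domains. Declare $(\vec v, \vec w) \in E_{i,j}$ (for $i < j$) iff on each coordinate $k \in [i, j-1]$ we have $(v_k, w_k) \in E$, while on all other coordinates $v_k = w_k$; the constraint $\phi_{\vec v \to \vec w}$ is the coordinatewise product of identities on unchanged coordinates and $\pi_{v_k \to w_k}$ on flipping ones. \kl(LLC){Transitivity} is immediate from this coordinatewise characterization, and biregularity of each $E_{i,j}$ follows from biregularity of $E$.

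Completeness is automatic: any satisfying $\sigma$ of $\mathcal I$ lifts to $\tau(\vec v) := (\sigma(v_1), \ldots, \sigma(v_{L-1}))$, which \kl(LC){satisfies} every constraint of $\mathcal J$. For \kl(LLC){smoothness} between adjacent layers $X_i$ and $X_{i+1}$, the constraint $\phi_{\vec v \to \vec w}$ factors through the single flipping coordinate $i$, so non-injectivity on a set $S$ of joint labels forces non-injectivity of $\pi_{v_i \to w_i}$ on the projection $S_i$ of $S$ to that coordinate; since $|S_i| \le |S|$, the $\delta_0$-\kl(LC){smoothness} of $\mathcal I$ transfers with the choice $\delta_0 := \delta$.

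The main obstacle is soundness: from any $\tau$ \kl{weakly satisfying} an $\varepsilon$-fraction of \kl{chains} in $\mathcal J$, I must extract an \kl(LC){assignment} $\sigma$ of $\mathcal I$ \kl(LC){satisfying} at least an $\varepsilon_0$-fraction of edges, for some $\varepsilon_0 > 0$ depending only on $\varepsilon$ and $L$. The planned approach is probabilistic: for each $v \in Y \cup Z$, pick uniformly at random a coordinate $k$, a layer $i$ such that $v$ is a valid entry in coordinate $k$ of some $\vec x \in X_i$, and such an $\vec x$; set $\sigma(v)$ to the $k$-th entry of $\tau(\vec x)$. Every \kl{weakly satisfied} \kl{chain} contributes a \kl(LC){satisfied} constraint on some edge $(x_i, x_j)$, witnessing satisfaction of one edge of $\mathcal I$ on each flipping coordinate $k \in [i, j-1]$; averaging over \kl{chains}, coordinates, and the random choices gives an expected fraction of edges of $\mathcal I$ satisfied by $\sigma$ that is at least $\varepsilon_0 := \varepsilon / \mathrm{poly}(L)$. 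Invoking Theorem~\ref{thm:gapsmoothLC} with $\varepsilon_0$ and $\delta_0$ in the role of the gap and smoothness parameters (and with the original domain size $N_0$ chosen large enough that $\mathcal J$ has domain size at most $N$) completes the reduction.
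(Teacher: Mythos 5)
Your construction and the completeness, transitivity, biregularity, and smoothness arguments essentially match the paper's approach (the paper builds with $L$-tuples where you use $(L{-}1)$-tuples, but that is a cosmetic difference, and your smoothness argument via the single projection is in fact a touch cleaner than the paper's equivalence-class decomposition plus union bound).

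The gap is in the soundness step. You propose to build a random $\sigma$ by independently sampling, \emph{separately for each $v \in Y \cup Z$}, a coordinate $k$, a layer $i$, and a tuple $\vec x \in X_i$ containing $v$ in position $k$, and then reading off $\sigma(v) := \tau(\vec x)_k$. But for an edge $(y,z) \in E$ to be satisfied by this $\sigma$, the pair of tuples $(\vec x, \vec x')$ chosen for $y$ and $z$ must form an edge of the layered instance in which the constraint is satisfied and in which $y,z$ sit at the \emph{same} coordinate $k$; in particular $\vec x$ and $\vec x'$ must agree on all coordinates outside the flip window. Under independent sampling, the probability of this alignment event is roughly inverse in the size of the layers, which is exponential in $L$ relative to the original variable sets, so the expected fraction of satisfied edges is far below $\varepsilon/\mathrm{poly}(L)$, and the claimed averaging does not go through. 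The paper avoids this by two successive averaging arguments: first over pairs of layers $i<j$ (losing $\binom{L}{2}$), and then, for a fixed coordinate $k$, over the ``contexts'' (equivalence classes of tuples in $X_i$ and $X_j$ that agree on all coordinates except $k$). This produces one fixed pair of contexts between which an $\varepsilon'$-fraction of constraints is satisfied, and the restricted assignment $\sigma'$ reads off coordinate $k$ within that single shared context. In other words, the random choices for the $y$'s and the $z$'s must be coupled through a common context rather than drawn independently; your argument needs to be restructured around fixing one context (chosen by averaging or uniformly at random \emph{once}, not per variable) before it can be completed.
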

\begin{proof}
    We are going to extend the reduction described in \cite{brandts2021llc}. Apart from the smoothness argument, this is almost a verbatim proof.

    Put $\varepsilon' = \varepsilon / \binom{L}{2}$.
    \sloppy We reduce from a bipartite \kl(LC){instance} $\Gamma = (Y \cup Z, E, l, r, \pi)$ of $\delta$-\kl(LC){smooth} LC with a gap $\varepsilon'$ (by \cref{thm:gapsmoothLC}). Since $(Y \cup Z, E)$ is biregular, every variable $y \in Y$ has exactly $d_Y$ neighbors in $Z$, and every variable $z \in Z$ has exactly $d_Z$ neighbors in $Y$.

    We construct an \kl(LLC){instance} $\Phi$ of $L$-layered LC. Let the variable sets be $X_i = Z^i \times Y^{L - i}$ for $1 \le i \le L$ (i.e. $L$-tuples of $i$ variables from $Z$ followed by $L - i$ variables from $Y$). Let the domain sizes be $c_i = r^i \cdot l^{L - i}$ for $1 \le i \le L$. Let the edges between $X_i$ and $X_j$ be defined for pairs of tuples $\tuple x$ and $\tuple x'$ of the form
    \begin{align*}
        \tuple x &= (z_1, \dots, z_i, ~y_{i+1}, \dots, y_j, ~y_{j+1}, \dots, y_L) \in X_i \\
        \tuple x' &= (z_1, \dots, z_i, ~z_{i+1}, \dots, z_j, ~y_{j+1}, \dots, y_L) \in X_j
    \end{align*}
    such that $(y_k, z_k) \in E$ for $i < k \le j$. Let the constraint $\phi_{\tuple x \to \tuple x'}$ map $(a_1, \dots, a_L) \in [r]^i \times [l]^{L - i}$ to
    \begin{equation*}
        (a_1, \dots, a_i, ~\pi_{y_{i+1} \to z_{i+1}}(a_{i+1}), \dots, \pi_{y_j \to z_j}(a_j), ~a_{j+1}, \dots, a_L)
    \end{equation*}
    \noindent Note that \kl{chains} in the obtained \kl(LLC){instance} are in bijection with $(L - 1)$-tuples of original edges from $\Gamma$.
    Indeed, a \kl{chain} $(\tuple{x_1}, \dots, \tuple{x_L})$ is determined by $\tuple{x_1} = (z_1, y_2, \dots, y_L)$ and $\tuple{x_L} = (z_1, z_2, \dots, z_L)$ such that $(y_k, z_k) \in E$ for $1 < k \le L$.
    Moreover, for each $i < j$, every edge $(\tuple x, \tuple x') \in E_{i, j}$ appears in the same number of \kl{chains} (namely $d_Z^i \cdot d_Y^{L - j}$).
    It is a trivial observation that the obtained \kl(LLC){instance} is \kl(LLC){transitive}.
    
    To see that it is $\delta$-\kl(LLC){smooth}, fix $i < L$, $\tuple x \in X_i$, $S \subseteq [r]^i \times [l]^{L - i}$, and let $\tuple x' \in X_{i+1}$ be a random neighbor of $\tuple x$. Note that $\tuple s$ and $\phi_{\tuple x \to \tuple x'}(\tuple s)$ are identical on all coordinates except possibly $(i+1)$-th (for any $\tuple s$ in the domain of $X_i$).

    Define an equivalence $\sim$ on $S$ so that $\tuple s \sim \tuple s'$ when $s_k = s'_k$ for all $k \neq i+1$.
    Observe that $\phi_{\tuple x \to \tuple x'}(\tuple s) = \phi_{\tuple x \to \tuple x'}(\tuple s')$ iff $s \sim s'$ and $\pi_{y_{i+1} \to z_{i+1}}(s_{i+1}) = \pi_{y_{i+1} \to z_{i+1}}(s'_{i+1})$.
    Therefore $|\phi_{\tuple x \to \tuple x'}(S)| < |S|$ iff $|\phi_{\tuple x \to \tuple x'}(C)| < |C|$ for some equivalence class $C$ iff $|\pi_{y_{i+1} \to z_{i+1}}(\proj_{i+1} C)| < |\proj_{i+1} C|$. Applying Union Bound we get
    \begin{align*}
        \Pr_{\tuple x'}\left[ |\phi_{\tuple x \to \tuple x'}(S)| < |S| \right] \le\\
        \le \sum_{C \in S / \sim} \Pr_{(y_{i+1}, z_{i+1}) \in E} \left[ |\pi_{y_{i+1} \to z_{i+1}}(\proj_{i+1} C)| < |\proj_{i+1} C| \right] \le\\
        \intertext{(by smoothness of the bipartite instance $\Gamma$)}
        \le \sum_{C \in S / \sim} \delta |\proj_{i+1} C|^2 \le \sum_{C \in S / \sim} \delta |C|^2 \le \delta |S|^2
    \end{align*}
    because $|S| = \sum_{C} |C|$. This finishes the construction.

    If the original \kl(LC){instance} $\Gamma$ is fully \kl(LC){satisfiable} then so is the new one $\Phi$: indeed, if $\sigma$ is a satisfying \kl(LC){assignment} for $\Gamma$, then $\tuple x \mapsto (\sigma(x_1), \dots, \sigma(x_L))$ is a satisfying \kl(LLC){assignment} for $\Phi$.

    For soundness suppose that in $\Phi$, an \kl(LLC){assignment} $\sigma$ \kl{weakly satisfies} at least $\varepsilon$-fraction of all \kl{chains}.
    Then there exists $i < j$ such that at least $\varepsilon / \binom{L}{2} = \varepsilon'$-fraction of all \kl{chains} are \kl{weakly satisfied} at a constraint between layers $X_i$ and $X_j$.
    Every constraint between $X_i$ and $X_j$ is contained in the same number of \kl{chains}, say $C$, hence at least $\varepsilon'$-fraction of the constraints between $X_i$ and $X_j$ are \kl(LC){satisfied}.

    Choose an arbitrary coordinate $k \in [i + 1, j]$.
    Partition $X_i$ into equivalence classes such that $\tuple x, \tuple x'$ are in the same class if they are identical on all coordinates except possibly $k$-th;
    partition $X_j$ in the same way. There exists a pair of classes between which constraints exist and at least $\varepsilon'$-fraction of them are \kl(LC){satisfied}. That is, there are
    \begin{align*}
        &x_1, \dots, x_{k-1}, ~x_{k+1}, \dots, x_L \in Y \cup Z \text{ and} \\
        &x'_1, \dots, x'_{k-1}, ~x'_{k+1}, \dots, x'_L \in Y \cup Z
    \end{align*}
    such that $\sigma$ \kl(LC){satisfies} at least $\varepsilon'$-fraction of constraints between pairs of the form
    \begin{align*}
        &(x_1, \dots, x_{k-1}, ~y, ~x_{k+1}, \dots, x_L) \in X_i \\
        &(x'_1, \dots, x'_{k-1}, ~z, ~x'_{k+1}, \dots, x'_L) \in X_j
    \end{align*}
    where $(y, z) \in E$. Therefore, one can define an \kl(LC){assignment} $\sigma' : Y \cup Z \to [l] \cup [r]$ by
    letting $\sigma'(y)$ and $\sigma'(z)$ be the $k$-th element of the value resulting from applying $\sigma$ to the above tuples, respectively for $y \in Y$ and $z \in Z$.
    This \kl(LC){assignment} then \kl(LC){satisfies} at least $\varepsilon'$-fraction of all constraints $\pi$ in the original \kl(LC){instance} $\Gamma$.
\end{proof}

We can now move on to reducing smooth layered Label Cover to minor condition satisfaction. We need to briefly introduce minor conditions before that.

\AP
An $L$-layered \intro{minor condition} over $L$ disjoint sets $\funset F_i$ of function symbols is a finite set of identities of the form
\begin{equation*}
    f_i(x_1, \dots, x_n) \approx f_j(x_{\pi(1)}, \dots, x_{\pi(m)})
\end{equation*}
where $i < j$, $f_i \in \funset F_i$ and $f_j \in \funset F_j$.\AP
Such an identity is said to be \intro(MC){satisfied} in \kl{minion} $\minion M$ on $(A, B)$ if there are functions $\xi(f_i), \xi(f_j) \in \minion M$ of arities $n$ and $m$ respectively such that
\begin{equation*}
    \xi(f_i)(a_1, \dots, a_n) = \xi(f_j)(a_{\pi(1)}, \dots, a_{\pi(m)})
\end{equation*}
for all $a_1, \dots, a_n \in A$.
Moreover a minor condition $\Sigma$ is said to be \reintro(MC){satisfied} in $\minion M$ if there is an \reintro(MC){assignment} $\xi : \cup \funset F_i \to \minion M$ that satisfies all identities in $\Sigma$.\AP
We say that $\Sigma$ is \intro(MC){trivial} if it's satisfied in the \kl{minion} of \kl{projections}.   

The $L$-layered $\minion M$-gap MC is a promise problem which, given an $L$-layered \kl{minor condition},
\begin{itemize}
    \item accepts if the condition is \kl(MC){trivial}, and
    \item rejects if the condition is not \kl(MC){satisfiable} in $\minion M$.
\end{itemize}
Barto et al. \cite{Bible} show that this problem is log-space reducible to $\PCSP(\relstr A, \relstr B)$ if $\minion M = \Pol(\relstr A, \relstr B)$.

\begin{proof}[Continuation of the proof of \cref{th:injlayhardness}]
We are ready to describe the reduction.
Let $M$ be the bound on $|I(f)|$ given by the \kl{injective layered condition}. For convenience, we will use the number of layers $L = M$.
Choose any $\varepsilon, \delta > 0$ such that $\frac{1}{M^2} \ge \delta M^3 + \varepsilon$.
A $\delta$-\kl(LLC){smooth} \kl(LLC){transitive} $M$-layered LC \kl(LLC){instance} $\Phi$ is transformed into an $M$-layered \kl{minor condition} $\Sigma$ in a straightforward way, i.e. for each constraint $\pi_{x_i \to x_j}$ we introduce the identity
\begin{equation*}
    f_{x_j}(y_1, \dots, y_{r_j}) \approx f_{x_i}(y_{\pi_{x_i \to x_j}}(1), \dots, y_{\pi_{x_i \to x_j}}(r_i))
\end{equation*}
and an \kl(LLC){assignment} $\sigma$ of $\Phi$ is interpreted as an \kl(MC){assignment} $\xi : \cup \funset F_i \owns f_x \mapsto \proj_{\sigma(x)}$. By construction, any constraint in $\Phi$ is \kl(LC){satisfied} by $\sigma$ iff the corresponding identity in $\Sigma$ is \kl(MC){satisfied} by $\xi$. In particular any YES-instance of LC is transformed into a \kl(MC){trivial} \kl{minor condition}.

For soundness, suppose that $\Sigma$ is \kl(MC){satisfied} in $\minion M$ via some $\xi$. We will argue that it's possible to \kl{weakly satisfy} at least $\varepsilon$-fraction of all \kl{chains} in $\Phi$.
For convenience we will denote $\xi(f_x)$ by $f_x^{\minion M}$.

Let us recall the properties of the \kl{choice function} $I$.
Consider any \kl{chain} $(x_1, \dots, x_M)$ and the corresponding chain of \kl{minor maps} $f_{x_1} \to \dots \to f_{x_M}$; we will write $\pi_{i,j}$ instead of $\pi_{x_i \to x_j}$ for clarity.
Put $I_j^+ = I(f_{x_j}^{\minion M}) \cup \bigcup_{i < j} \pi_{i,j}(I(f_{x_i}^{\minion M}))$.
We will call the \kl{chain} $(x_1, \dots, x_M)$ \emph{injective} if each $\pi_{i,i+1}$ is \kl{injective on} $I_i^+$.
The second item of the \kl{injective layered condition} states that whenever the \kl{chain} is injective, there exist $i < j$ such that $\pi_{i,j}(I(f_{x_i}^{\minion M})) \cap I(f_{x_j}^{\minion M}) \neq \emptyset$.

Let us define an \kl(LLC){assignment} $\sigma$ by choosing a coordinate $\sigma(x) \in I(f_x^{\minion M})$ uniformly at random.
First we will argue that $\sigma$ \kl{weakly satisfies} any injective \kl{chain} in $\Phi$ with probability at least $1/M^2$; then we will show that a significant fraction of all \kl{chains} are in fact injective.

Indeed, consider any injective \kl{chain} $(x_1, \dots, x_M)$ and the corresponding chain of \kl{minor maps} $f_1 \xrightarrow{\pi_{1,2}} \dots \xrightarrow{\pi_{M-1,M}} f_M$.
By the previous discussion, there exist $i < j$ such that $\pi_{i,j}(I(f_{x_i}^{\minion M})) \cap I(f_{x_j}^{\minion M}) \neq \emptyset$. So the constraint $\pi_{i,j}$ is \kl{weakly satisfied} by $\sigma$ with probability at least $1/M^2$.

Now, to bound the fraction of injective \kl{chains}, we need to define what a ``random chain'' is. Consider the following process: pick a vertex $x_1 \in X_1$ uniformly at random, then pick its neighbor $x_2 \in X_2$ uniformly at random, then pick its neighbor and so on. We obtain a sequence $(x_1, \dots, x_L)$, which is a \kl{chain} by \kl(LLC){transitivity} of $\Phi$. Clearly any \kl{chain} can be obtained in such a way with non-zero probability.

\begin{claim}
    The distribution of the prefix $(x_1, \dots, x_i)$ of the \kl{chain} is uniform over all possible $i$-prefixes.
\end{claim}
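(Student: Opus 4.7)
The plan is to prove the claim by induction on $i$, combining \kl(LLC){transitivity} of $\Phi$ with the biregularity built into each $E_{i,i+1}$. For the base case $i=1$, the process selects $x_1$ uniformly at random from $X_1$ and $1$-prefixes are exactly the elements of $X_1$, so there is nothing to show.

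For the inductive step, I would assume that the process produces each of the $N_i$ possible $i$-prefixes with probability $1/N_i$ and derive the analogous statement for $(i+1)$-prefixes. The key structural fact, supplied by \kl(LLC){transitivity}, is that any $i$-prefix $(x_1^*, \dots, x_i^*)$ extended by an arbitrary $X_{i+1}$-neighbor $x_{i+1}^*$ of $x_i^*$ is automatically an $(i+1)$-prefix: for each $k<i$, the required edge $(x_k^*, x_{i+1}^*) \in E_{k,i+1}$ arises by composing $(x_k^*, x_i^*) \in E_{k,i}$ with $(x_i^*, x_{i+1}^*) \in E_{i,i+1}$. Biregularity of $E_{i,i+1}$ then supplies a common out-degree $d_i$ for all vertices of $X_i$, and since every $(i+1)$-prefix restricts uniquely to its initial $i$-prefix, the assignment sending a pair (\emph{$i$-prefix}, \emph{an $X_{i+1}$-neighbor of its last coordinate}) to the concatenated $(i+1)$-prefix is a bijection. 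In particular, $N_{i+1} = N_i \cdot d_i$.

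Putting the pieces together, the probability that the process outputs a specified $(i+1)$-prefix $(x_1^*, \dots, x_{i+1}^*)$ factors as the probability of first producing $(x_1^*, \dots, x_i^*)$, which is $1/N_i$ by the inductive hypothesis, times the conditional probability $1/d_i$ of then selecting $x_{i+1}^*$ among the $d_i$ neighbors of $x_i^*$. The product equals $1/N_{i+1}$, completing the induction. No step looks like a real obstacle; the only point requiring mild care is ensuring that \kl(LLC){transitivity} promotes a single new edge to the entire family of edges demanded by the definition of a \kl{chain}, which is exactly what the composition argument above provides.
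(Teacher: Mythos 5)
Your proof is correct and takes essentially the same approach as the paper: both arguments rely on biregularity of $E_{i,i+1}$ (giving a fixed out-degree $d_i$ independent of the vertex) and on \kl(LLC){transitivity} (ensuring that any consecutive-neighbor extension is automatically a valid prefix of a chain). The paper packages this as a single direct count of chains and extensions per prefix, while you present it as an induction on $i$ with the bijection $(\text{$i$-prefix}, \text{neighbor}) \leftrightarrow \text{$(i+1)$-prefix}$; these are just two presentations of the same counting argument.
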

\begin{proof}
    Let $d_i$ be the degree of every vertex in $V_i$ in the biregular graph induced by $E_{i,i+1}$. We have $|V_1| \cdot \prod_{j=1}^{M-1} d_j$ \kl{chains} in total by construction. Once $x_1$ is picked, we can extend the \kl{chain} in $\prod_{j=1}^{M-1} d_j$ ways, no matter what the choice of $x_1$ was. Similarly once the prefix $(x_1, \dots, x_i)$ is fixed, we can extend the \kl{chain} in $\prod_{j=i}^{M-1} d_j$ ways, no matter what the prefix was.
\end{proof}

Consequently, any \kl{chain} can be obtained after such a process with the same probability. So in order to argue that a significant fraction of \kl{chains} are injective, it suffices to prove that such a random chain is injective with a significant probability.

\begin{claim}
    $\Pr(\pi_{i,i+1} \text{ isn't \kl{injective on} } I_i^+) \le \delta M^4$ for each $i < M$.
\end{claim}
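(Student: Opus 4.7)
The plan is to apply the $\delta$-\kl(LLC){smoothness} of $\Phi$ after conditioning on the prefix $(x_1,\dotsc,x_i)$ of the random chain. First I would observe that, once the prefix is fixed, every $\pi_{j,j'}$ with $j<j'\le i$ is determined, and in particular so are $x_i$ and the set
\begin{equation*}
  I_i^+ = I(f_{x_i}^{\minion M})\cup\bigcup_{j<i}\pi_{j,i}(I(f_{x_j}^{\minion M})).
\end{equation*}
The event ``$\pi_{i,i+1}$ is not \kl{injective on} $I_i^+$'' is precisely $|\pi_{i,i+1}(I_i^+)|<|I_i^+|$, which the \kl(LLC){smoothness} condition is tailor-made to bound.

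Next I would control the size of $I_i^+$. Since the \kl{choice function} $I$ returns sets of size at most $M$, and $I_i^+$ is a union of at most $i\le M$ such sets, we have $|I_i^+|\le i\cdot M\le M^2$.

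Now, by the ``uniform prefix'' claim established just before (or directly from the description of the sampling process), conditional on the prefix $(x_1,\dotsc,x_i)$ the vertex $x_{i+1}$ is a uniformly random neighbor of $x_i$ in $E_{i,i+1}$. Applying $\delta$-\kl(LLC){smoothness} to the fixed vertex $x_i$ and the fixed set $S=I_i^+\subseteq [c_i]$ yields
\begin{equation*}
  \Pr\bigl[\,\pi_{i,i+1}\text{ not \kl{injective on} }I_i^+\,\big|\,(x_1,\dotsc,x_i)\bigr]
  =\Pr_{x_{i+1}}\bigl[|\phi_{x_i\to x_{i+1}}(I_i^+)|<|I_i^+|\bigr]
  \le\delta\,|I_i^+|^2\le\delta M^4.
\end{equation*}
Since this bound holds for every prefix, averaging over the distribution of prefixes gives the unconditional inequality $\Pr(\pi_{i,i+1}\text{ isn't \kl{injective on} }I_i^+)\le\delta M^4$, as claimed.

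There is no real obstacle here beyond bookkeeping: the only things to check are that $I_i^+$ is a function of the prefix (so that \kl(LLC){smoothness}, which is stated for a fixed set $S$, applies), that $|I_i^+|\le M^2$ using the size bound from the \kl{injective layered condition}, and that the sampling process for the chain induces the uniform conditional distribution on $x_{i+1}$ required by the smoothness inequality.
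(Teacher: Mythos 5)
Your proof matches the paper's argument essentially step for step: condition on the prefix $(x_1,\dots,x_i)$, observe that $I_i^+$ is then determined and has size at most $M^2$, apply $\delta$-\kl(LLC){smoothness} to get the conditional bound $\delta|I_i^+|^2\le\delta M^4$, and average over prefixes. Both proofs are correct and take the same route; if anything, you are slightly more careful than the paper in noting that the only distributional fact actually needed is that $x_{i+1}$ is uniform over neighbors of $x_i$ conditionally on the prefix, whereas the paper invokes the uniform-prefix claim at the averaging step where it is not strictly required.
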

\begin{proof}
    When $x_1, \dots, x_i$ are fixed, the inequality follows from \kl(LLC){smoothness} since $I_i^+$ only depends on those variables:
    \begin{equation*}
        \Pr(\pi_{i,i+1} \text{ isn't \kl{injective on} } I_i^+ \mid x_1, \dots, x_i) \le \delta |I_i^+|^2 \le \delta M^4
    \end{equation*}
    where we use the fact that $|I_i^+| \le M^2$.
    It suffices to recall that any $i$-prefix of the \kl{chain} $x_1 \to \dots \to x_i$ is equally probable, so
    \begin{align*}
        \Pr(\pi_{i,i+1} \text{ isn't \kl{injective on} } I_i^+) = \\
        = \sum_{x_1, \dots, x_i} \Pr(\pi_{i,i+1} \text{ isn't \kl{injective on} } I_i^+ \mid x_1, \dots, x_i) \cdot \Pr(x_1, \dots, x_i) \le\\
        \le \delta M^4
    \end{align*}
\end{proof}

Applying Union Bound we get
\begin{align*}
    \Pr(\text{each $\pi_{i,i+1}$ is injective}) &\ge 1 - \sum_{i < M} \Pr(\pi_{i,i+1} \text{ isn't injective}) \ge \\
    &\ge 1 - \delta M^5
\end{align*}
Hence at least $(1 - \delta M^5)$-fraction of all \kl{chains} are injective.

Finally, the expected fraction of all \kl{chains} in $\Phi$ that $\sigma$ \kl{weakly satisfies} is at least
\begin{equation*}
    (1 - \delta M^5) \cdot \frac{1}{M^2} = \frac{1}{M^2} - \delta M^3 \ge \varepsilon
\end{equation*}
It follows that there exists an \kl(LLC){assignment} which \kl{weakly satisfies} at least $\varepsilon$-fraction of all \kl{chains}.

\end{proof}

\section{Hard minions}
\label{app:hard}
Recall that $i \less j$ if
\begin{itemize}
    \item $i$ and $j$ are both \kl{monotone} and for any $\tuple a, \tuple b$ satisfying $a_i = 1 = b_j$, $a_j = 0 = b_i$ and $a_k = b_k$ for $k \neq i,j$ we have $f(\tuple a) \le f(\tuple b)$, or
    \item $i$ and $j$ are both \kl{antimonotone} and for any $\tuple a, \tuple b$ satisfying $a_i = 1 = b_j$, $a_j = 0 = b_i$ and $a_k = b_k$ for $k \neq i,j$ we have $f(\tuple a) \ge f(\tuple b)$, or
    \item $i$ is \kl{monotone}, $j$ is \kl{antimonotone}, and for any $\tuple a, \tuple b$ satisfying $a_i = 1 = b_i$, $a_j = 0 = b_j$ and $a_k = b_k$ for $k \neq i,j$ we have $f(\tuple a) \le f(\tuple b)$, or
    \item $i$ is \kl{antimonotone}, $j$ is \kl{monotone}, and for any $\tuple a, \tuple b$ satisfying $a_i = 1 = b_i$, $a_j = 0 = b_j$ and $a_k = b_k$ for $k \neq i,j$ we have $f(\tuple a) \ge f(\tuple b)$.
\end{itemize}
This entire section is devoted to the proof of the following proposition:

\canonical*

\noindent We begin a proof: Fix any $f \in \TheMinion$.
The first step is to show the equivalence $(1) \iff (3)$. We start with a simple observation:
\begin{claim}\label{cl:less}
    If there exists $\tuple c, t$ such that $\funfor{\tuple c}{t} = f$ and $|c_i| \le |c_j|$, then $i \less j$.
\end{claim}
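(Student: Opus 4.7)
The plan is a case analysis on the signs of $c_i$ and $c_j$, matched to the four cases in the definition of $\less$. The key preliminary observation is that in any arithmetic presentation $\funfor{\tuple c}{t}$ of $f$, the sign of $c_k$ must agree with the monotonicity of $f$ in coordinate $k$: a strictly positive coefficient forces monotonicity (flipping $x_k$ from $0$ to $1$ can only raise $\braket{\tuple c|\tuple x}$), a strictly negative one forces antimonotonicity, and $c_k = 0$ corresponds to a non-essential coordinate. So the ``monotonicity case'' of $\less$ is pinned down by the signs of $c_i$ and $c_j$, up to the freedom available when a coordinate is non-essential.

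For matching signs ($c_i, c_j \ge 0$ or $c_i, c_j \le 0$) I would pick any pair $\tuple a, \tuple b$ that differ only at coordinates $i$ and $j$, with $a_i = 1, a_j = 0, b_i = 0, b_j = 1$. A one-line computation yields $\braket{\tuple c|\tuple b} - \braket{\tuple c|\tuple a} = c_j - c_i$. In the monotone subcase the hypothesis $|c_i| \le |c_j|$ gives $c_j - c_i \ge 0$, so $f(\tuple a) = 1$ forces $f(\tuple b) = 1$; in the antimonotone subcase the same hypothesis gives $c_j - c_i \le 0$ and the reversed implication. These are exactly the required clauses of $i \less j$.

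For mixed signs the plan is to pass to the identification minor $f_{ij}$: the presentation $\tuple c$ induces a presentation of $f_{ij}$ in which the merged coordinate carries coefficient $c_i + c_j$. Under $|c_i| \le |c_j|$ this sum has the same (weak) sign as $c_j$, so by the preliminary observation the merged coordinate in $f_{ij}$ has the same monotonicity as $j$, which is exactly the defining condition of $i \less j$ in the mixed case. Non-essential coordinates need no separate treatment: for them one may select a presentation with the corresponding coefficient equal to zero, which collapses the relevant inequalities to equalities and lets every applicable case of $\less$ hold.

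I do not expect a genuine obstacle in this claim: it is a direct unpacking of $\less$ against the linearity of $\braket{\tuple c|\cdot}$, with the only bookkeeping lying in the preliminary alignment of coefficient signs with monotonicities. The harder content of Proposition~\ref{prop:canonical} will be the other direction, producing a presentation whose weights are strictly ordered by $\sless$; this claim only gets the ``easy'' inequality off the ground.
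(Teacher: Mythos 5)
Your proposal is correct and follows essentially the same route as the paper, which explicitly verifies only the case where $i,j$ are both \kl{monotone} and $c_i, c_j \ge 0$ (via the one-line computation $\braket{\tuple a|\tuple c} \le \braket{\tuple b|\tuple c}$) and leaves the remaining cases to the reader. You complete the other three cases, using the sign–monotonicity correspondence as the organizing device and the identification \kl{minor} $f_{ij}$ (with merged coefficient $c_i+c_j$) for the mixed-sign case; this is the main-text formulation of $\less$ for mixed coordinates, which is equivalent to the inequality-based reformulation the appendix recalls. One tiny imprecision: ``$c_k=0$ corresponds to a non-essential coordinate'' is only a one-way implication (a non-essential coordinate may carry a nonzero coefficient in some presentation), but your argument only uses that direction, so nothing breaks.
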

\begin{proof}
    It follows straight from the definition; we only illustrate one of the 4 cases here: assume that $i$ and $j$ are both \kl{monotone} and $c_i, c_j \ge 0$.
    Note that for any $\tuple a, \tuple b$ satisfying $a_i = 1 = b_j$, $a_j = 0 = b_i$ and $a_k = b_k$ for $k \neq i,j$ we have $f(\tuple a) \le f(\tuple b)$ as $\braket{\tuple a | \tuple c} \le \braket{\tuple b | \tuple c}$.
\end{proof}

\noindent The first consequence of this observation is the strong connectivity of the relation $\less$:
\begin{claim}\label{cl:any2coordscomparable}
    $i \less j$ or $j \less i$ for any coordinates $i, j$ of $f$.
\end{claim}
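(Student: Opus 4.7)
The plan is to derive this immediately from \Cref{cl:less} by choosing any single presentation of $f$ and comparing absolute values of two coefficients.

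First, I would fix an arbitrary tuple $\tuple c$ and threshold $t$ with $\funfor{\tuple c}{t} = f$ (such a presentation exists because $f \in \TheMinion$). Next, for the two coordinates $i$ and $j$ of $f$, observe that the real numbers $|c_i|$ and $|c_j|$ are comparable, so either $|c_i| \le |c_j|$ or $|c_j| \le |c_i|$ holds. Applying \Cref{cl:less} in the first case yields $i \less j$, and applying it in the second case (with the roles of $i$ and $j$ swapped) yields $j \less i$. In either case, the conclusion of the claim is established.

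I do not anticipate any obstacle: the statement is essentially a two-line corollary of \Cref{cl:less}, whose proof already covered the four relevant monotonicity/antimonotonicity sub-cases and verified that the sign of the coefficients is not an issue. The only thing worth noting is that the presentation $\funfor{\tuple c}{t}$ need not be canonical, and indeed we are in the middle of the proof establishing the existence of a canonical presentation; however, \Cref{cl:less} does not rely on canonicity, so using an arbitrary presentation is legitimate.
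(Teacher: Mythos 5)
Your proof is correct and takes essentially the same approach as the paper, which applies \Cref{cl:less} but phrases it as a proof by contradiction (if neither relation held, every presentation would force both $|c_i| > |c_j|$ and $|c_j| > |c_i|$). Your direct version, fixing one presentation and comparing $|c_i|$ with $|c_j|$, is logically equivalent and equally valid.
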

\begin{proof}
    If neither holds, then by the Observation for any $\funfor{\tuple c}{t} = f$ we have $|c_i| > |c_j|$ and $|c_i| < |c_j|$ at the same time.
\end{proof}

The strong connectivity immediately implies the so called ``trichotomy law'':
\begin{claim}\label{cl:coordstrichotomy}
    $\forall{i, j}~\text{either } i \equi j \text{ or } i \sless j \text{ or } j \sless i$.
\end{claim}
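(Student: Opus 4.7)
The plan is to derive this trichotomy almost immediately from Claim~\ref{cl:any2coordscomparable} together with the definitions of $\equi$ and $\sless$. Recall that $i \equi j$ is defined to mean $i \less j$ and $j \less i$, while $i \sless j$ means $i \less j$ and not $i \equi j$.

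First, I would invoke Claim~\ref{cl:any2coordscomparable} to conclude that for every pair of coordinates $i,j$, at least one of $i \less j$ or $j \less i$ holds. This already covers the existence side of the trichotomy through a simple case split: if both $i \less j$ and $j \less i$, then by definition $i \equi j$; if $i \less j$ but $j \not\less i$, then $i \equi j$ fails, so $i \sless j$ holds; and symmetrically $j \sless i$ in the remaining case.

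For the mutual exclusivity, I would note that by unfolding definitions, $i \equi j$ forces both directions of $\less$, which is incompatible with either $i \sless j$ (which forbids $i \equi j$) or $j \sless i$ (which forbids $j \equi i$, but $\equi$ is symmetric). Conversely $i \sless j$ and $j \sless i$ cannot hold simultaneously, since $i \sless j$ requires $i \less j$ while excluding $i \equi j$; but if $j \sless i$ also held, then $j \less i$ would combine with $i \less j$ to give $i \equi j$, a contradiction.

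I do not expect any real obstacle: the entire argument is a one-line case analysis on top of Claim~\ref{cl:any2coordscomparable}, and the only thing to be careful about is keeping track of the symmetric roles of $i$ and $j$ in the definition of $\equi$ when ruling out the simultaneous occurrence of $i \sless j$ and $j \sless i$. The proof should fit in a few lines.
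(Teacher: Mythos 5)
Your proof is correct and follows essentially the same approach as the paper: invoke Claim~\ref{cl:any2coordscomparable} for strong connectivity of $\less$ and then unfold the definitions of $\equi$ and $\sless$. The paper's version is slightly terser (it only establishes that at least one of the three cases holds, leaving the pairwise exclusivity implicit since it is immediate from the definitions), but the argument is the same.
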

\begin{proof}
    If $i \equi j$, then we are done. So assume that $i \not\equi j$. By the previous claim it means that either $i \less j$ or $j \less i$. Consequently either $i \sless j$ or $j \sless i$ by the definition of $\sless$.
\end{proof}
The ``trichotomy law'' together with $(1) \iff (3)$ will imply that $\less$ is in fact a total \emph{preorder}, i.e. a reflexive and transitive binary relation. However before moving on to the proof of that, we need 2 more observations, which allow us to tweak the values $a_i$ and $a_j$ in any presentation $\funfor{\tuple a}{t} = f$ whenever $i \equi j$.

\begin{observation}\label{obs:opposmontweakpres}
  Let $\tuple a, t$ be such that $a_1\geq 0\geq  a_2$ 
  and the \kl{minor} $\funfor{\tuple a}{t}(x,x,x_3,\dots,x_n)$ does not depend on $x$, then:
  \begin{itemize}
    \item if $|a_1|\geq |a_2|$, then for every $0\leq\varepsilon\leq|a_1|-|a_2|$ we have
      \begin{equation*}
        \funfor{a_1-\varepsilon,a_2-\varepsilon, a_3,\dotsc,a_n}{t-\varepsilon} = \funfor{\tuple a}{t} \text{ and }
      \end{equation*}
    \item if $|a_1|\leq |a_2|$, then for every $0\leq\varepsilon\leq|a_2|-|a_1|$ we have
      \begin{equation*}
        \funfor{a_1+\varepsilon,a_2+\varepsilon, a_3,\dotsc,a_n}{t+\varepsilon} = \funfor{\tuple a}{t}.
      \end{equation*}
  \end{itemize}
\end{observation}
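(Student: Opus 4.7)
The plan is to fix any $\tuple x \in \{0,1\}^n$ and show that $\braket{\tuple a | \tuple x} - t$ and the corresponding shifted quantity $\braket{\tuple a' | \tuple x} - t'$ have the same strict sign ($>0$, $<0$, or $=0$); this is exactly what equality of the two (potentially partial) functions on input $\tuple x$ requires. A direct computation with $\tuple a' = (a_1 - \varepsilon, a_2 - \varepsilon, a_3, \dots, a_n)$ and $t' = t - \varepsilon$ shows that the shift equals $-\varepsilon(x_1 + x_2 - 1)$, so it depends only on $(x_1, x_2)$: it vanishes when $\{x_1, x_2\} = \{0, 1\}$, and equals $+\varepsilon$ or $-\varepsilon$ respectively for $(0, 0)$ and $(1, 1)$. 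The first family of inputs therefore requires nothing, and only $(x_1, x_2) \in \{(0,0), (1,1)\}$ needs checking.

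For these two cases I would introduce the shorthand $v_0 := \sum_{i \geq 3} a_i x_i - t$ and $v_1 := (a_1 + a_2) + v_0$, which are the pre-shift values of $\braket{\tuple a | \cdot} - t$ at $(0, 0, x_3, \dots, x_n)$ and $(1, 1, x_3, \dots, x_n)$ respectively. The hypothesis that the minor $f(x, x, x_3, \dots, x_n)$ does not depend on $x$, read inside the minion $\TheMinion$ of total functions, forces $f(0, 0, \dots)$ and $f(1, 1, \dots)$ to both be defined and equal; equivalently, $v_0$ and $v_1$ are both nonzero and share a strict sign. The constraint $0 \leq \varepsilon \leq |a_1| - |a_2|$ translates, under $a_1 \geq 0 \geq a_2$, to $0 \leq \varepsilon \leq a_1 + a_2 = v_1 - v_0$, which also gives $v_1 \geq v_0$.

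Finishing is a short split on signs. For $(x_1, x_2) = (0, 0)$ the post-shift value is $v_0 + \varepsilon$: if $v_0 > 0$ this is automatically positive; if $v_0 < 0$ then $v_1 < 0$ as well, so $\varepsilon \leq v_1 - v_0 < -v_0$ and hence $v_0 + \varepsilon < 0$. The case $(x_1, x_2) = (1, 1)$ gives post-shift value $v_1 - \varepsilon$ and is completely symmetric, the crucial strict inequality being $\varepsilon \leq v_1 - v_0 < v_1$ when $v_0 > 0$. The second half of the observation (with $|a_1| \leq |a_2|$ and $+\varepsilon$ shifts) follows by an almost identical argument after reversing the sign of $a_1 + a_2$. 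I anticipate no real obstacle; the only delicate point is reading ``not depending on $x$'' strongly enough (as totality of the minor plus equality of both branches) to guarantee $v_0 \neq 0$ and $v_1 \neq 0$, which is exactly what makes the strict-sign preservation work.
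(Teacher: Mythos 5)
Your proposal is correct and follows essentially the same strategy as the paper: you dismiss the $x_1\neq x_2$ case because the shift $-\varepsilon(x_1+x_2-1)$ vanishes there, and then handle $(0,0)$ and $(1,1)$ by using the hypothesis that the minor is independent of $x$ to pin down that $v_0$ and $v_1$ share a strict sign, from which the bound $0\le\varepsilon\le v_1-v_0$ forces the sign to survive. The paper organizes the $b_1=b_2$ analysis by branching on the output value of $\funfor{\tuple a}{t}$ and then checking $\tuple c$ and $\tuple d$ separately, while you branch on the pair $(x_1,x_2)$ and track the signed quantities $v_0,v_1$ directly; these are the same argument in different notation, and your reading of ``does not depend on $x$'' (both branches defined and equal, hence $v_0,v_1\neq 0$ with a common sign) matches the paper's implicit use of the hypothesis.
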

\begin{proof}
  We will prove the first item only~%
  (the second case is identical).
  Let $\tuple a, t, \varepsilon$ be as in the statement and let $\tuple a'$ be the new tuple.
  Let $\tuple b\in\{0,1\}^n$; if exactly one of $b_1,b_2$ is equal $1$ then:
  $\braket{\tuple a' | \tuple b} = \braket{ \tuple a | \tuple b} -\varepsilon$
  and it compares to $t-\varepsilon$ exactly as $\braket{\tuple a|\tuple b}$ compares to $t$.

  So we are left with $b_0=b_1$ and let $\tuple c = (0,0,b_3,\dotsc,b_n)$ while $\tuple d = (1,1,b_3,\dotsc,b_n)$.
  Note that by our assumption
  \begin{equation*}
    \funfor{\tuple a}{t}(\tuple c) = \funfor{\tuple a}{t}(\tuple d).
  \end{equation*}
  i.e. 
  $t\notin [\braket{\tuple a | \tuple c},\braket{\tuple a | \tuple d}]$.

  We consider the case of $\funfor{\tuple a}{t}(\tuple b) = 0$ first. 
  We immediately get $t>\braket{\tuple a | \tuple d}$ and then
  $t-\varepsilon > \braket{\tuple a | \tuple d}-\varepsilon \geq \braket{\tuple a | \tuple c} = \braket{\tuple a'| \tuple c}$ i.e. $\funfor{\tuple a'}{t-\varepsilon}(\tuple c) = 0$ as required.
  Similarly $t-\varepsilon>\braket{\tuple a | \tuple d} -2\varepsilon = \braket{\tuple a' | \tuple d}$ and $\funfor{\tuple a'}{t-\varepsilon}(\tuple d) = 0$
  as well.
  
  If $\funfor{\tuple a}{t}(\tuple b) = 1$ then $t<\braket{\tuple a | \tuple c}$
  and, clearly $t-\varepsilon <\braket{\tuple a | \tuple c} = \braket{\tuple a'|\tuple c}$ i.e. $\funfor{\tuple a'}{t-\varepsilon}(\tuple c) = 1$ as required.
  Similarly $t-\varepsilon <\braket{\tuple a| \tuple c} -\varepsilon \leq \braket{\tuple a | \tuple d } - 2\varepsilon = \braket{\tuple a' | \tuple d}$ and the proof is finished.
\end{proof}

\begin{observation}\label{obs:samemontweakpres}
  Let $\tuple a, t$ be such that $|a_1|\geq |a_2|$ 
  and 
  \begin{equation*}
    \funfor{\tuple a}{t}(x,y,x_3,\dots,x_n) = \funfor{\tuple a}{t}(y,x,x_3,\dots,x_n),
  \end{equation*}
  then:
  \begin{itemize}
    \item if both $a_1,a_2\geq 0$ then for every $0\leq\varepsilon\leq|a_1|-|a_2|$ we have
      \begin{equation*}
        \funfor{a_1-\varepsilon,a_2+\varepsilon, a_3,\dotsc,a_n}{t} = \funfor{\tuple a}{t} \text{ and }
      \end{equation*}
    \item if both $a_1,a_2\leq 0$ then for every $0\leq\varepsilon\leq|a_1|-|a_2|$ we have
      \begin{equation*}
        \funfor{a_1+\varepsilon,a_2-\varepsilon, a_3,\dotsc,a_n}{t} = \funfor{\tuple a}{t}.
      \end{equation*}
  \end{itemize}
\end{observation}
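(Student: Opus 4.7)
The plan is to mimic the case analysis from the preceding Observation: verify, for every $\tuple b \in \{0,1\}^n$, that the partial value $\funfor{\tuple a'}{t}(\tuple b)$ at the perturbed tuple $\tuple a' := (a_1 - \varepsilon,\, a_2 + \varepsilon,\, a_3, \dots, a_n)$ coincides with $\funfor{\tuple a}{t}(\tuple b)$. Only the first item needs genuine work; the second reduces to it via the substitution $(\tuple a, t) \mapsto (-\tuple a, -t)$, which complements both partial functions and converts the second item's perturbations into those of the first, while preserving presentation equality.

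For the first item I would split by $b_1, b_2 \in \{0,1\}$ and set $S := \sum_{i \geq 3} a_i b_i$. When $b_1 = b_2$ the $\pm \varepsilon$ shifts cancel, giving $\braket{\tuple a' | \tuple b} = \braket{\tuple a | \tuple b}$ and nothing to check. The substantive case is $b_1 \neq b_2$: here the two original inner products at $(1,0,b_3,\dots,b_n)$ and $(0,1,b_3,\dots,b_n)$ are $a_1 + S$ and $a_2 + S$, while the perturbed ones are $a_1 - \varepsilon + S$ and $a_2 + \varepsilon + S$ respectively.

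The swap-symmetry hypothesis forces $\funfor{\tuple a}{t}(1,0,b_3,\dots,b_n) = \funfor{\tuple a}{t}(0,1,b_3,\dots,b_n)$ as partial values, so $a_1 + S$ and $a_2 + S$ lie on the same side of $t$, i.e., both strictly above, both strictly below, or both exactly equal to $t$. Combining $a_1 \geq a_2 \geq 0$ with $0 \leq \varepsilon \leq a_1 - a_2$ yields $a_2 \leq a_1 - \varepsilon$ and $a_2 + \varepsilon \leq a_1$, so both perturbed values sit in the closed interval $[a_2 + S,\, a_1 + S]$ and inherit whatever side the two endpoints share. In the degenerate subcase $a_1 + S = a_2 + S = t$ one concludes $a_1 = a_2$, whence $\varepsilon = 0$ and the perturbation is trivial.

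I do not anticipate any significant obstacle. The only point that deserves mild care is that the equality $\funfor{\tuple a'}{t} = \funfor{\tuple a}{t}$ must be read as equality of partial functions (same domain of definition and same values there); the three-way same-side-of-$t$ split handles the domain question together with the values, since the sandwich bound never carries a strictly-above or strictly-below value across the threshold.
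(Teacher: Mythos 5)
Your proof is correct and follows essentially the same route as the paper: compare the inner products before and after the perturbation, use the swap-symmetry hypothesis to pin the two unperturbed products $a_1+S$ and $a_2+S$ to the same side of $t$, and use $0 \le \varepsilon \le a_1 - a_2$ to keep the two perturbed products on that same side. Your sandwich phrasing merely packages the paper's two sub-cases ($b_1 < b_2$ and $b_1 > b_2$, each further split by the value of $f$) into a single interval argument, and you additionally make explicit the degenerate boundary case $\braket{\tuple a|\tuple b} = t$, which the paper passes over but which indeed forces $a_1 = a_2$ and hence $\varepsilon = 0$.
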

\begin{proof}
  Again, we prove only the first item.
  Let $\tuple a, t, \varepsilon$ be as in the statement and let $\tuple a'$ be the new tuple.
  Consider and arbitrary $\tuple b\in\{0,1\}^n$;
  if $b_1=b_2$ then $\braket{\tuple a | \tuple b} = \braket{\tuple a'|\tuple b}$ and thus, clearly, $\funfor{\tuple a}{t}(\tuple b) = \funfor{\tuple a'}{t}(\tuple b)$.
  Then we let $\tuple c = (b_2,b_1,b_3,\dotsc,b_n)$
  and note that $\funfor{\tuple a}{t}(\tuple b) = \funfor{\tuple a}{t}(\tuple c)$ by our assumption. 
  
  If $b_1= 0$ and $b_2 = 1$  the reasoning depends on the value $\funfor{\tuple a}{t}(\tuple b)$:
  if it is $1$ we are done immediately as, in our case, $t < \braket{\tuple a | \tuple b}\leq \braket{\tuple a'| \tuple b}$.
  If it is $0$ we have $t>\braket{\tuple a| \tuple c} \geq \braket{\tuple a | \tuple b} +\varepsilon  = \braket{\tuple a' | \tuple b}$~%
  and we are done. 

  If $b_1 = 1$ and $b_2 = 0$ the reasoning is similar:
  if $\funfor{\tuple a}{t}(\tuple b) = 0$ we are done immediately as $\braket{\tuple a | \tuple b} \geq \braket{\tuple a' | \tuple b}$.
  Otherwise $\braket{\tuple a| \tuple c} > t$ and then $\braket{\tuple a'|\tuple b}=\braket{\tuple a | \tuple b} -\varepsilon\geq \braket{\tuple a | \tuple c} >t$
  and we are done with the whole proof.
\end{proof}

We are ready to prove the $(1) \iff (3)$ equivalence of \cref{prop:canonical}, which is a contraposition of the following statement:
\begin{lemma}\label{lem:less1eq3}
    Let $f \in \TheMinion$. Then
    \begin{equation*}
        i \less j \iff \exists{\funfor{\tuple b}{s} = f} \text{ s.t. } |b_i| \le |b_j|
    \end{equation*}
\end{lemma}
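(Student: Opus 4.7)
The $(\Leftarrow)$ direction is literally the content of Claim \ref{cl:less}, so all the work is in the $(\Rightarrow)$ direction. My plan is to fix $i \less j$, apply the trichotomy of Claim \ref{cl:coordstrichotomy}, and handle the two resulting cases.

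If $i \sless j$, then $j \not\less i$, so by the contrapositive of Claim \ref{cl:less} \emph{every} presentation $\funfor{\tuple b}{s} = f$ already satisfies $|b_i| < |b_j|$; any fixed presentation of $f$ witnesses the conclusion. The remaining case is $i \equi j$, where I start with an arbitrary presentation $\funfor{\tuple a}{t} = f$, and I am done immediately unless $|a_i| > |a_j|$. In that subcase I need to modify the tuple to rebalance the two weights.

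For the rebalancing I split on monotonicity. If $i$ and $j$ are both \kl{monotone} (or both \kl{antimonotone}), then $a_i$ and $a_j$ share a sign, and from the discussion after the definition of $\equi$ the function is symmetric in coordinates $i$ and $j$, so the hypothesis of Observation \ref{obs:samemontweakpres} is satisfied. Applying the appropriate bullet with $\varepsilon = |a_i| - |a_j|$ produces a presentation in which $a_i$ and $a_j$ are exchanged (in particular $|b_i| = |a_j| \le |a_i| = |b_j|$). If on the other hand one coordinate is \kl{monotone} and the other \kl{antimonotone}, then $a_i$ and $a_j$ have opposite signs, and the $i \equi j$ clause tells us that the diagonal minor $f(x,x,\tuple z)$ is independent of $x$, which is the hypothesis of Observation \ref{obs:opposmontweakpres}. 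Choosing the bullet dictated by which of $a_i, a_j$ is positive and setting $\varepsilon = |a_i| - |a_j|$ again swaps the two absolute values.

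The only subtle point to keep track of will be the sign bookkeeping, making sure the chosen $\varepsilon$ lies within the range permitted by the observation and that the resulting values have the claimed absolute values; this is routine arithmetic once one fixes the signs of $a_i$ and $a_j$ imposed by the monotonicity. No further case distinction is needed, so the proof closes in these few lines.
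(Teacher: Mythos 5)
Your proposal is correct and follows essentially the same route as the paper: reduce to the case $i \equi j$, then rebalance the presentation using Observations~\ref{obs:opposmontweakpres} and~\ref{obs:samemontweakpres} with $\varepsilon = |a_i| - |a_j|$. The only cosmetic difference is that you invoke the trichotomy of Claim~\ref{cl:coordstrichotomy} to split into $i \sless j$ and $i \equi j$ up front, whereas the paper fixes an arbitrary presentation and derives $i \equi j$ directly from $|b_i|>|b_j|$ via Claim~\ref{cl:less}; the two framings are logically interchangeable and no step is missing.
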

\begin{proof}
    The implication from right to left follows from \cref{cl:less}. For the opposite direction, assume that $i \less j$ and fix any $\tuple b, s$ such that $\funfor{\tuple b}{s} = f$. If $|b_i| \le |b_j|$, then we are done. Otherwise $|b_i| > |b_j|$, and by \cref{cl:less} we have $j \less i$, i.e. $i \equi j$. Now depending on the \kl{monotonicity} of coordinates $i$ and $j$, we apply \cref{obs:opposmontweakpres} or \cref{obs:samemontweakpres} with $\varepsilon = |b_i| - |b_j|$, and obtain a presentation $\funfor{\tuple b'}{s'} = f$ such that $|b'_j| - |b'_i| = |b_i| - |b_j|$ and $b'_k = b_k$ for $k \neq i,j$.
\end{proof}

\begin{proof}[Proof of $(1) \iff (3)$]
    Follows from \cref{lem:less1eq3} and the trichotomy law (\cref{cl:coordstrichotomy}).
\end{proof}

As a side consequence, \cref{lem:less1eq3} implies the transitiveness of $\less$, so it is a preorder relation. Also $\equi$ is an equivalence relation, and $\sless$ defines a linear order on the equivalence classes.

The last step is to prove the $(1) \iff (2)$ equivalence of \cref{prop:canonical}:
\begin{lemma}
    Let $f \in \TheMinion$. There exists $\tuple a, t$ such that $\funfor{\tuple a}{t} = f$ and $i \sless j \iff |a_i| < |a_j|$.
\end{lemma}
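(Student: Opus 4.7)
The plan is to exhibit a presentation in which, inside every equivalence class $C$ of $\equi$, all absolute values $|a_i|$ agree. Both \cref{obs:samemontweakpres} and \cref{obs:opposmontweakpres} preserve $\sum_{i\in C}|a_i|$ (in the same-sign case the two shifts of equal magnitude cancel; in the opposite-sign case one absolute value drops by $\varepsilon$ while the other rises by $\varepsilon$), so the common target is forced to be $c_C:=\tfrac{1}{|C|}\sum_{i\in C}|a_i|$. Once this is achieved, the required equivalence $i\sless j\iff|a_i|<|a_j|$ is immediate: between different classes the strict comparison is handed to us by the already-proved $(1)\Leftrightarrow(3)$, while inside a class every pair satisfies $|a_i|=|a_j|$.

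Starting from any presentation $\funfor{\tuple a}{t}=f$, I first absorb non-essential coordinates by switching to a presentation with $a_i=0$ on each such coordinate (possible since $f$ does not depend on them); these form their own class with $c_C=0$. For every remaining class $C$ of essential coordinates, the sign of each $a_i$ is determined by the monotonicity of the coordinate in every presentation, so I may split $C=P_C\sqcup N_C$ into the monotone (positive) and antimonotone (negative) parts. Whenever $i\equi j$ lie on the same side, the definition of $\equi$ forces $f$ to be symmetric in coordinates $i,j$, so \cref{obs:samemontweakpres} applies; a standard $(|P_C|-1)$-step transport argument equalises the $a_i$, $i\in P_C$, to their common average $p_\ast$, and analogously the $a_j$, $j\in N_C$, to $-n_\ast$. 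For a mixed $\equi$-pair $(i,j)\in P_C\times N_C$, the definition of $\equi$ says that the merged coordinate in the minor $f_{ij}$ is non-essential, which is precisely the hypothesis of \cref{obs:opposmontweakpres}.

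The main obstacle is the mixed equalisation, because the naive alternation ``one mixed shift, then re-average within $P_C$ and $N_C$'' only contracts $p-n$ by a factor and never terminates. To avoid this, set $\delta:=(p_\ast-n_\ast)/(|P_C|+|N_C|)$ (assuming $p_\ast\ge n_\ast$, the other case being symmetric) and apply \cref{obs:opposmontweakpres} exactly once to every pair $(i,j)\in P_C\times N_C$, processed in lexicographic order with the common shift $\delta$. A direct bookkeeping shows that just before the step on $(i,j)$ one has $a_i=p_\ast-(j-1)\delta$ and $a_j=-n_\ast-(i-1)\delta$, so the feasibility conditions $|a_i|\ge|a_j|$ and $\delta\le|a_i|-|a_j|$ of \cref{obs:opposmontweakpres} collapse to $(i+j-1)\delta\le p_\ast-n_\ast$; since $i+j-1\le|P_C|+|N_C|-1$ this holds by the choice of $\delta$. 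After all shifts, each $a_i$ has been moved $|N_C|$ times to $p_\ast-|N_C|\delta=c_C$ and each $a_j$ has been moved $|P_C|$ times to $-n_\ast-|P_C|\delta=-c_C$. Performing this program class by class (the operations of different classes commute apart from updating the single global threshold, which is tracked throughout) produces the desired presentation and concludes the proof.
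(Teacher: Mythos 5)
Your proof is correct, and it follows the same overall route as the paper: re-balance each $\equi$-class so that every $|a_i|$ equals the class average, using \cref{obs:samemontweakpres,obs:opposmontweakpres}, and then read off $(1)\Leftrightarrow(2)$ from the trichotomy law together with $(1)\Leftrightarrow(3)$. The mechanism you use for the re-balancing is, however, considerably more elaborate than what the paper does. You are right that alternating ``one mixed shift, then re-average inside $P_C$ and $N_C$'' does not terminate, and your lexicographic product schedule of $|P_C|\cdot|N_C|$ shifts of common size $\delta$ is a valid fix, but a cheaper option was available: at each step pick any coordinate $k$ in the class with $|a_k|$ strictly below the class average and any $l$ with $|a_l|$ strictly above it, and apply the appropriate observation once with $\varepsilon=\mathrm{avg}-|a_k|$ (which is $\le |a_l|-|a_k|$, so the hypothesis of the observation holds) to move $|a_k|$ exactly onto the average. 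Both observations preserve $\sum_I|a_i|$, so the target average never moves, and each step strictly increases the number of coordinates already at the average; hence at most $n$ steps suffice, with no need to pre-equalize $P_C$ and $N_C$ separately, to handle non-essential coordinates as a special case, or to schedule the shifts.
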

\begin{proof}
    We start with any presentation $\funfor{\tuple a}{t} = f$, and repeatedly tweak it locally applying \cref{obs:opposmontweakpres,obs:samemontweakpres}. The implication from left to right is covered by $(1) \implies (3)$, so the only goal is to have $i \equi j$ imply $|a_i| = |a_j|$.

    Namely, the set of all coordinates $[n]$ splits into a number of equivalence classes wrt $\equi$. For any such equivalence class $I$ we want to achieve
    \begin{equation}\label{eq:avg}
        |a_j| = \frac{1}{|I|} \sum_I|a_i|
    \end{equation}
    for all $j \in I$. It should already be clear how we are going to apply \cref{obs:opposmontweakpres,obs:samemontweakpres}.

    For any equivalence class $I$ that doesn't satisfy \cref{eq:avg}, pick any $k, l \in I$ such that
    \begin{equation*}
       |a_k| < \frac{1}{|I|} \sum_I |a_i| < |a_l|
    \end{equation*}
    Depending on the signs of $a_k$ and $a_l$, we apply either \cref{obs:opposmontweakpres} or \cref{obs:samemontweakpres} with a suitable $\varepsilon > 0$, modifying only these 2 coefficients so that
    \begin{equation*}
       |a'_k| = \frac{1}{|I|} \sum_I |a_i|
    \end{equation*}
    Observe that $\sum_I |a_i|$ doesn't change after such an operation.
    It suffices to note that each step increases the number of coordinates that satisfy \cref{eq:avg} by one, so after at most $n$ steps we get the desired \emph{canonical} presentation.
\end{proof}

\section{Algorithms}
\label{app:Algorithms}
Ones of the most powerful search algorithms for Boolean PCSPs were introduced in \cite{brakensiek2017promise}. For arithmetic \kl{minions}, we only need a few of them, all based on a linear program relaxation:

\begin{theorem}[\cite{brakensiek2017promise,ficak2019symmetric}]\label{thm:andorEZ}
    Let $(\relstr A, \relstr B)$ be a Boolean \kl(PCSP){template}. If $\Pol(\relstr A, \relstr B)$ contains $\MAX$ or $\MIN$, then $\PCSP(\relstr A, \relstr B)$ (in the search version) is in P.
\end{theorem}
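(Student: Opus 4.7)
The plan is to use the basic linear programming (LP) relaxation of the search problem together with a direct rounding, exploiting the fact that all max functions are \kl{polymorphisms}; the $\MIN$ case is handled symmetrically by swapping the roles of $0$ and $1$.

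Given an input instance $\relstr I$ that admits a \kl{homomorphism} to $\relstr A$, I would set up the standard basic LP: for each variable $v$ of $\relstr I$ introduce $x_{v,0}, x_{v,1} \ge 0$ with $x_{v,0} + x_{v,1} = 1$, and for each constraint $R(v_1, \dots, v_k)$ of $\relstr I$ introduce $y_{R, \tuple a} \ge 0$ for $\tuple a \in R^{\relstr A}$ summing to $1$, together with the marginal identities $\sum_{\tuple a : a_i = b} y_{R, \tuple a} = x_{v_i, b}$. The promised \kl{homomorphism} $\relstr I \to \relstr A$ gives an integral feasible solution, so the LP is feasible; since $\relstr A$ is fixed, the number of $y$-variables per constraint is bounded by a constant, the LP has size polynomial in $|\relstr I|$, and a rational solution is computable in polynomial time.

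From any such solution I would extract the assignment $h(v) := 1$ iff $x_{v,1} > 0$. To verify that $h$ is a \kl{homomorphism} from $\relstr I$ to $\relstr B$, fix a constraint $R(v_1, \dots, v_k)$ and enumerate the support as $\tuple a^1, \dots, \tuple a^n$, where $n = |\{\tuple a \in R^{\relstr A} : y_{R, \tuple a} > 0\}|$. The marginal constraints force $h(v_i) = \max(a^1_i, \dots, a^n_i)$ for every $i$, so applying the $n$-ary max function (which belongs to $\Pol(\relstr A, \relstr B)$ by assumption) coordinatewise to $\tuple a^1, \dots, \tuple a^n \in R^{\relstr A}$ yields exactly the tuple $(h(v_1), \dots, h(v_k))$, and by \kl{compatibility} this tuple lies in $R^{\relstr B}$.

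There is no substantive obstacle here: the decisive observation, that any LP solution supported on a set of tuples rounds, under $x_{v,1} > 0 \mapsto 1$, to precisely the coordinatewise maximum of that support, is immediate from the marginal constraints, and the polymorphism assumption closes the argument. For the $\MIN$ case one uses the dual round $h(v) := 0$ iff $x_{v,0} > 0$, so that the rounded tuple becomes the coordinatewise minimum of the support and an application of min, now a polymorphism, finishes the proof.
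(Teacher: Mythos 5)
Your argument is correct, and it is essentially the standard LP-relaxation-plus-rounding proof from the cited sources. A point of context: the paper itself does \emph{not} prove this theorem; it cites it from \cite{brakensiek2017promise,ficak2019symmetric} and only supplies a detailed proof for \Cref{thm:thrEZ} in the irrational-threshold case. So the fair comparison is to the paper's methodology in that nearby proof, which is the same BLP relaxation followed by rounding via a polymorphism.

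Within that methodology, your write-up is a clean simplification that is specific to $\MAX$ (and dually $\MIN$). The paper's version for a threshold $t$ has to pass through a common denominator $N$ of the convex-combination coefficients and apply an $N$-ary polymorphism with each support tuple $y^i$ repeated $\alpha_i N$ times, because a threshold function is sensitive to multiplicities. You observe, correctly, that max is insensitive to multiplicity, so one may apply the $n$-ary max directly to the $n$ tuples in the support $\{\tuple a : y_{R,\tuple a} > 0\}$, and the marginal constraints force the rounded assignment $h(v) := [x_{v,1} > 0]$ to agree coordinatewise with that max. The paper's proof for $\THR{t}$ also inserts a step that forces one coordinate of the LP solution to be integral (re-solving with $x_1 \in \Boolean$) to handle the rounding tie at $w_i = t$; this is simply unnecessary in the $\MAX$ case since the positive-support rounding has no ties, and you rightly omit it. One small thing worth making explicit for a self-contained write-up is that $\MAX$ contains the $n$-ary max for \emph{every} $n$ — it is defined as the minion generated by max functions of all arities, not just binary — so you do have the polymorphism of exactly the right arity for each constraint's support size; but since $\MAX$ is defined that way in the paper, the point is already covered by the hypothesis.
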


\begin{theorem}[\cite{brakensiek2017promise,ficak2019symmetric}]\label{thm:atEZ}
    Let $(\relstr A, \relstr B)$ be a Boolean \kl(PCSP){template}. If $\AT \subseteq \Pol(\relstr A, \relstr B)$, then $\PCSP(\relstr A, \relstr B)$ is in P.
\end{theorem}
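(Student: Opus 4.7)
The plan is to solve $\PCSP(\relstr A,\relstr B)$ via the Affine Integer Programming (AIP) relaxation. Given an input instance $\relstr I$ similar to $\relstr A$, introduce integer variables $x_{v,a} \in \mathbb{Z}$ for each variable $v$ of $\relstr I$ and each $a \in A$, and $y_{C,\tuple{t}} \in \mathbb{Z}$ for each constraint $C = R(v_1,\dots,v_k)$ of $\relstr I$ and each tuple $\tuple{t} \in R^{\relstr A}$, subject to the equations $\sum_{a \in A} x_{v,a} = 1$ for every $v$ and $\sum_{\tuple{t} \in R^{\relstr A} : t_i = a} y_{C,\tuple{t}} = x_{v_i,a}$ for every constraint and coordinate $i$. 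This system is a set of $\mathbb{Z}$-linear equations and can be solved in polynomial time (e.g.\ via Smith Normal Form).

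Completeness is immediate: if $\relstr I \to \relstr A$ via a homomorphism $h$, then setting $x_{v,a} = 1$ iff $h(v) = a$, and $y_{C,\tuple{t}} = 1$ iff $h$ picks $\tuple{t}$ on $C$, yields a $\{0,1\}$-valued solution. Hence the AIP is feasible on every YES-instance.

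For the soundness/rounding step, given an arbitrary integer solution we produce a homomorphism $\relstr I \to \relstr B$. For each variable $v$ write $x_{v,a} = p_{v,a} - n_{v,a}$ with $p_{v,a}, n_{v,a} \ge 0$ disjointly supported; the equation $\sum_a x_{v,a} = 1$ then forces $\sum_a p_{v,a} - \sum_a n_{v,a} = 1$. After multiplying the whole solution by a large common factor and padding with canceling $+1/-1$ pairs, we may assume $\sum_a p_{v,a} = m+1$ and $\sum_a n_{v,a} = m$ for a single uniform $m$ across all variables, and analogously for the $y$-variables. This produces, for each $v$, a list of $m+1$ \emph{positive} and $m$ \emph{negative} elements of $A$. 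Applying the $(2m+1)$-ary alternating threshold polymorphism from $\AT \subseteq \Pol(\relstr A,\relstr B)$ coordinatewise to these lists defines the value $h(v) \in B$. The constraint equations guarantee that at every constraint $C$ the $2m+1$ tuples assembled from the coordinates all lie in $R^{\relstr A}$ (the positive ones counted with multiplicity $p_{C,\tuple{t}}$, the negative with multiplicity $n_{C,\tuple{t}}$), so compatibility with the template forces the resulting tuple to lie in $R^{\relstr B}$, yielding the desired homomorphism.

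The main obstacle will be this rounding bookkeeping: a raw integer AIP solution may have wildly different positive/negative magnitudes across distinct variables and constraints, so one has to normalize carefully (clearing denominators and absorbing matching $+1/-1$ pairs) to arrange that a \emph{single} alternating threshold of a fixed odd arity $2m+1$ applies uniformly everywhere and consistently between $x$ and $y$ variables. Once this accounting is done, the polymorphism property of the alternating threshold delivers that $h$ is a homomorphism to $\relstr B$ essentially by definition.
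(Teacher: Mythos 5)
The paper does not actually prove this statement---it is cited wholesale from~\cite{brakensiek2017promise,ficak2019symmetric} in the algorithms appendix, which only re-proves \cref{thm:thrEZ} in the irrational-$t$ case. So you are supplying a proof that the paper omits, and the route you chose, rounding an Affine Integer Programming solution through the alternating threshold polymorphisms, is exactly the algorithm used in those references; the overall structure of your argument is correct. Two small issues. The phrase ``multiplying the whole solution by a large common factor'' is unnecessary and, taken literally, incorrect: the normalizations $\sum_a x_{v,a}=1$ and $\sum_{\tuple t}y_{C,\tuple t}=1$ force $\sum_a p_{v,a}-\sum_a n_{v,a}=1$, which scaling by any $c\neq 1$ would destroy; padding with cancelling $+1/-1$ pairs alone already achieves the uniform budget $\sum_a p_{v,a}=m+1$, $\sum_a n_{v,a}=m$. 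Second, you should say explicitly why $h(v)$ is well-defined, independent of the chosen split into $p,n$ and of the padding: on a Boolean domain the alternating threshold evaluates to $1$ exactly when $p_{v,1}-n_{v,1}>0$, i.e.\ exactly when the integer $x_{v,1}$ is positive. This same invariance is what makes the per-constraint rounding consistent with the per-variable rounding: the $i$-th coordinate of the image of the assembled constraint tuple equals $1$ iff $\sum_{\tuple t : t_i=1}y_{C,\tuple t}>0$, and the AIP equations identify that sum with $x_{v_i,1}$. With these points made explicit the argument goes through.
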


\begin{theorem}[\cite{brakensiek2017promise,ficak2019symmetric}]\label{thm:thrEZ}
    Let $(\relstr A, \relstr B)$ be a Boolean \kl(PCSP){template}. If $\THR{t} \subseteq \Pol(\relstr A, \relstr B)$ for some $t$, then $\PCSP(\relstr A, \relstr B)$ is in P.
\end{theorem}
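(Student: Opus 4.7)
The plan is to apply the classical basic LP relaxation (BLP) rounding, using the symmetry and threshold nature of the polymorphisms in $\THR{t}$ to convert a fractional solution into an assignment to $\relstr{B}$. For an input instance $\relstr{I}$, I would set up BLP with a variable $\lambda_{x,1}\in[0,1]$ for each variable $x$ of $\relstr{I}$ and, for each constraint $C$ on variables $x_{i_1},\dots,x_{i_k}$ with relation $R^{\relstr{A}}$, a nonnegative variable $\mu_{C,\tuple{s}}$ for every tuple $\tuple{s}\in R^{\relstr{A}}$, subject to $\sum_{\tuple{s}}\mu_{C,\tuple{s}}=1$ and the marginal equations $\sum_{\tuple{s}\,:\,s_j=1}\mu_{C,\tuple{s}}=\lambda_{x_{i_j},1}$. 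Whenever $\relstr{I}\to\relstr{A}$, the integral solution induced by the homomorphism is BLP-feasible, so a rational optimal solution can be computed in polynomial time.

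Next I would clear denominators. Writing $t=p/q$ in lowest terms, pick a positive integer $N$ that is both large enough for every $N\lambda_{x,1}$ and every $N\mu_{C,\tuple{s}}$ to be an integer, and coprime with $q$. Then $Nt$ is not an integer, so the symmetric threshold function $f=\funforleq{1,\dots,1}{Nt}$ of arity $N$ belongs to $\THR{t}\subseteq\Pol(\relstr{A},\relstr{B})$. For each variable $x$, fix any tuple $\tuple{y}(x)\in\{0,1\}^N$ with exactly $N\lambda_{x,1}$ ones, and output the assignment $x\mapsto f(\tuple{y}(x))$.

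To verify correctness, consider any constraint $C$ on $x_{i_1},\dots,x_{i_k}$. I would build an $N\times k$ Boolean matrix $M$ by listing each $\tuple{s}\in R^{\relstr{A}}$ as a row exactly $N\mu_{C,\tuple{s}}$ times; the marginal equations force its $j$-th column to have exactly $N\lambda_{x_{i_j},1}$ ones. Every row of $M$ lies in $R^{\relstr{A}}$, so applying $f$ coordinatewise to the $k$ columns produces a tuple in $R^{\relstr{B}}$ because $f\in\Pol(\relstr{A},\relstr{B})$. Since $f$ is symmetric, its value on column $j$ depends only on the Hamming weight of that column, which coincides with that of $\tuple{y}(x_{i_j})$; hence the assignment $x\mapsto f(\tuple{y}(x))$ satisfies $C$ in $\relstr{B}$.

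The main obstacle is the interplay between the two demands on $N$: it must clear all BLP denominators while staying coprime to $q$, so that $\funforleq{1,\dots,1}{Nt}$ actually belongs to $\THR{t}$. This is handled by first taking any $N_0$ that clears the denominators and then multiplying by a small prime not dividing $q$, which keeps $N$ polynomial in the instance size. The crucial structural feature is the symmetry of $f$: it decouples the per-constraint matrix $M$ from the globally chosen tuples $\tuple{y}(x)$, since only Hamming weights need to match between the columns of $M$ and the respective $\tuple{y}(x_{i_j})$.
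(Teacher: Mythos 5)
Your proposal proves the theorem only for rational $t$: the very first step writes $t=p/q$ in lowest terms, which has no analogue when $t$ is irrational. The paper, however, explicitly delegates the rational case to the cited works and the content of the appendix is precisely the adjustment needed when $t\notin\mathbb{Q}$. The obstacle there is that the rounding rule ``output $1$ iff $w_i>t$'' cannot be evaluated in finite time against an arbitrary irrational $t$; the paper resolves this by observing that the rounding coincides with thresholding at some rational $q\in\{0,w_1,\dots,w_n\}$, computing all $n+1$ candidate maps $h_q$, and checking each one. Your proposal contains no counterpart of this idea, so it does not establish the part of the theorem the appendix actually proves.

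Even restricted to rational $t$, your choice of $N$ has a genuine gap. You need $N$ simultaneously to (a) clear all denominators of the BLP solution and (b) be coprime with the denominator $q$ of $t$, so that $\funforleq{1,\dots,1}{Nt}\in\THR{t}$. Your patch --- take $N_0$ satisfying (a) and multiply by a prime not dividing $q$ --- does not achieve (b) when $N_0$ already shares a prime factor with $q$; multiplication cannot remove common factors. Concretely, if $t=1/2$ and the LP returns $\lambda_{x,1}=1/2$ for some $x$, then (a) forces $N$ even while (b) forces $N$ odd, so no suitable $N$ exists at all. This degeneracy is exactly what never arises in the paper's setting: for irrational $t$ one has $w_i\neq t$ automatically, and no coprimality condition is needed because $Nt$ is never an integer. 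Finally, the claim that $N$ ``stays polynomial in the instance size'' is false: the least common multiple of the denominators of an LP vertex solution can be exponentially large (it has polynomial bit-length, not polynomial magnitude), so literally materializing $\tuple y(x)\in\{0,1\}^N$ is not a polynomial-time step; one must instead threshold $\lambda_{x,1}$ directly, which reintroduces the comparison-against-$t$ issue your proposal sidesteps by assuming $t$ rational.
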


\noindent For the threshold \kl{minions} \cite{brakensiek2017promise,ficak2019symmetric} provide an algorithm only under assumption $t \in \mathbb{Q}$. The remaining part of the section is devoted to a small adjustment to the aforementioned algorithm in case $t \notin \mathbb{Q}$.

We follow the algorithm described in \cite{brakensiek2017promise} closely. It is based on a linear programming relaxation approach.

\begin{definition}
Given an instance $\relstr I$ of a Boolean $\PCSP(\relstr A, \relstr B)$,
the \emph{basic linear programming (BLP) relaxation} of $\relstr I$ is the following linear program:
the variables are $x_v$ for every $v \in \relstr I$, while the linear constraints are as follows
\begin{itemize}
    \item For each variable $v \in \relstr I$, stipulate that $0 \le x_v \le 1$
    \item For each constraint $(\tuple v, R) \in \relstr I$, stipulate that $(x_{v_1}, \dots, x_{v_k})$ is in the convex hull of the elements of $R^{\relstr A}$
\end{itemize}
Note that a rational solution to such BLP instance (or reporting its absence) can be found in polynomial time.
\end{definition}

\begin{proof}[Proof of \cref{thm:thrEZ} when $t \notin \mathbb{Q}$]
    Let $\relstr I$ be an input \kl{structure} of $\PCSP(\relstr A, \relstr B)$ over the set of variables $[n]$.
    In the search version, the goal is to find a \kl{homomorphism} $\relstr I \to \relstr B$, given that a \kl{homomorphism} $\relstr I \to \relstr A$ exists (though, it's not provided).

    The algorithm goes as follows:
    \begin{itemize}
        \item Construct the BLP relaxation;
        \item Let $1$ be any variable in $\relstr I$;
        \item Fix $x_1 = 0$, and re-solve the BLP,
        \item If no solutions, fix $x_1 = 1$, and re-solve the BLP.
    \end{itemize}

    Assuming that $\relstr I \to \relstr A$, this procedure provides a rational solution $w \in [0, 1]^n$ such that $w_1 \in \Boolean$.
    We claim that the map
    \begin{equation*}
        h_t(i) = \funfor{1}{t}(w_i) = \begin{cases}
            1, &\quad\text{if } w_i > t \\
            0, &\quad\text{if } w_i < t 
        \end{cases}
    \end{equation*}
    is a \kl{homomorphism} $\relstr I \to \relstr B$.
    
    Consider any constraint $((v_1, \dots, v_k), R)$ and let $r = |R^{\relstr A}|$. Enumerate the elements $y^1, \dots, y^r \in R^{\relstr A}$.
    Since $w$ is a rational solution to the LP, there exist $\alpha_1, \dots, \alpha_r \in \mathbb{Q}_{[0, 1]}$
    such that $\sum \alpha_i = 1$ and $(w_{v_1}, \dots, w_{v_k}) = \sum \alpha_i \cdot y^i$.
    Let $N \in \mathbb{N}$ be the common denominator of $\alpha_1, \dots, \alpha_r$.
    Since a threshold $t$ function $\genthr{t}{N}$ of arity $N$ is a polymorphism of $(\relstr A, \relstr B)$, we have that
    \begin{equation*}
        \tuple z := \genthr{t}{N}(
            \underbrace{y^1, \dots, y^1}_{\alpha_1 N},
            \underbrace{y^2, \dots, y^2}_{\alpha_2 N},
            \dots,
            \underbrace{y^r, \dots, y^r}_{\alpha_r N}) \in R^{\relstr B}
    \end{equation*}
    On the other hand, note that
    \begin{align*}
        w_{v_j} > t &\implies \sum_i \alpha_i \cdot y^i_j = \frac{\sum_i \alpha_i N \cdot y^i_j}{N} > t \implies z_{v_j} = 1 \\
        w_{v_j} < t &\implies \sum_i \alpha_i \cdot y^i_j = \frac{\sum_i \alpha_i N \cdot y^i_j}{N} < t \implies z_{v_j} = 0
    \end{align*}
    for every $1 \le j \le k$. In other words, $\tuple z = (h_t(v_1), \dots, h_t(v_k)) \in R^{\relstr B}$.

    Hence $h_t$ is indeed a \kl{homomorphism} $\relstr I \to \relstr B$.
    However, it isn't obvious how to compute the value $h_t(i)$ in polynomial time, since $t$ may be an arbitrary non-rational number.
    Observe that there exists $q \in \{0, w_1, \dots, w_n\}$ such that
    \begin{equation*}
        \forall{i} \+\+ w_i > t \iff w_i > q
    \end{equation*}
    Consequently, the functions $\funfor{1}{t}$ and $\funforleq{1}{q}$ are equal on the set $\{w_1, \dots, w_n\}$.
    The idea is now to compute the maps $h_q$ for each $q \in \{0, w_1, \dots, w_n\}$ (at most $n+1$ of them).
    At least one of them is a valid \kl{homomorphism} $\relstr I \to \relstr B$ by the discussion above.
    Since we can verify whether $h_q$ is a valid \kl{homomorphism} in a polynomial time, and there is $O(n)$ maps $h_q$ to consider,
    the running time of the algorithm remains polynomial.
\end{proof}


\end{document}